\pgfplotsset{compat=1.18}
\lstdefinestyle{example}{
  basicstyle=\ttfamily\footnotesize,
}
\theoremstyle{definition}
\newtheorem{definition}{Definition}[section]
\newtheorem{theorem}{Theorem}[section]
\newtheorem{corollary}{Corollary}[theorem]
\tikzstyle{arrow} = [thick,->,>=stealth]
\tikzstyle{cut} = [arrow, ultra thick, draw=blue]
\tikzstyle{source} = [rounded corners, draw=black, fill=red!30]
\tikzstyle{sink} = [source]
\newcommand{\libsodium}{Libsodium}  
\newcommand{\openssl}{OpenSSL} 
\newcommand{\tool}{\textsc{Serberus}}
\newcommand{\Tool}{\textsc{Serberus}}
\newcommand{\ourcc}{\textsc{llsct}}
\newcommand{\Ourcc}{LLSCT}
\newcommand{\toolclang}{\ourcc{}}
\newcommand{\Toolclang}{\Ourcc{}}
\newcommand{\toolpsf}{\textsc{\toolclang{}-psf}}
\newcommand{\toolssbd}{\textsc{\toolclang{}-nostl}}
\newcommand{\toolsls}{\textsc{\toolclang{}-sls}}
\newcommand{\lfence}{\textsc{lfence}}
\newcommand{\blade}{\textsc{Blade}}
\newcommand{\clang}{LLVM}
\newcommand{\hacl}{$\text{HACL}^*$}
\newcommand{\slh}{\textsc{slh}}
\newcommand{\retpoline}{\textsc{retpoline}}
\newcommand{\ipredd}{\textsc{ipredd}}
\newcommand{\ssbd}{\textsc{ssbd}}
\newcommand{\psfd}{\textsc{psfd}}
\newcommand{\lfenceall}{\textsc{f}+\textsc{retp}+\textsc{ssbd}}
\newcommand{\slhall}{\textsc{s}+\textsc{retp}+\textsc{ssbd}}
\newcommand{\pht}{PHT}
\newcommand{\btb}{BTB}
\newcommand{\bti}{\btb{}}
\newcommand{\stl}{STL}
\newcommand{\rsb}{RSB}
\newcommand{\psf}{PSF}
\newcommand{\rrsbad}{\textsc{rrsbad}}
\newcommand{\ibt}{IBT}
\newcommand{\shstk}{SHSTK}
\newcommand{\cetendbr}{\texttt{ENDBR}}
\newcommand{\ncalxmit}{\textsc{ncal-xmit}}
\newcommand{\ncalarg}{\textsc{ncal-arg}}
\newcommand{\ncascal}{\textsc{ncas-cal}}
\newcommand{\ncasctrl}{\textsc{ncas-ctrl}}
\newcommand{\ctprog}{CT}
\newcommand{\ourmodel}{ASP}
\newcommand{\Ourmodel}{\ourmodel{}}
\newcommand{\ourprimitive}{taint primitive}
\newcommand{\ourprimitives}{\ourprimitive{}s}
\newcommand{\Ourprimitive}{Taint primitive}
\newcommand{\Ourprimitives}{\Ourprimitive{}s}
\newcommand{\OurPrimitive}{Taint Primitive}
\newcommand{\OurPrimitives}{\OurPrimitive{}s}
\newcommand{\ourct}{CTS}
\newcommand{\Z}{\mathbb{Z}}
\newcommand{\callstack}{\mathit{CS}}
\newcommand{\PC}{\mathtt{PC}}
\newcommand{\SP}{\mathtt{SP}}
\newcommand{\ZR}{\mathtt{ZR}}
\newcommand{\CS}{\callstack{}}
\newcommand{\ConfSet}{\mathcal{C}}
\newcommand{\RegSet}{\mathcal{R}}
\newcommand{\InstSet}{\mathcal{I}}
\newcommand{\ValSet}{\mathcal{V}}
\newcommand{\DataMemSet}{\mathcal{M}_D}
\newcommand{\InstMemSet}{\mathcal{M}_I}
\newcommand{\ObsSet}{\mathcal{O}}
\newcommand{\LabelSet}{\mathcal{L}}
\newcommand{\LValSet}{\mathcal{V}_\LabelSet}
\newcommand{\Sec}{\textsc{sec}}
\newcommand{\Pub}{\textsc{pub}}
\newcommand{\Prog}{\mathcal{P}}
\newcommand{\InitConfSet}{\mathbf{C}_0}
\newcommand{\ObsBnz}[1]{\mathtt{bnz}\,{#1}}
\newcommand{\ObsCall}[1]{\mathtt{call}\,{#1}}
\newcommand{\ObsStore}[1]{\mathtt{st}\,{#1}}
\newcommand{\ObsLoad}[1]{\mathtt{ld}\,{#1}}
\newcommand{\ObsNone}{\varepsilon}
\newcommand{\Yields}[2]{\!\rightarrow_\mathrm{#1}^{#2}\!}
\newcommand{\RuleWF}{\text{WF}}
\newcommand{\RuleTT}{\text{TYP}}
\Crefname{figure}{Fig.}{Figs.}
\Crefname{table}{Tab.}{Tabs.}
\Crefname{appendix}{App.}{Apps.}
\newcommand{\pto}{\rightharpoonup}
\DeclareMathOperator{\Succs}{succs}
\DeclareMathOperator{\TSuccs}{tsuccs}
\newcommand{\ncalglob}{\textsc{ncal-glob}}
\newcommand{\tcfg}[1]{\to_\mathrm{tcfg}^{#1}}
\newcommand{\tcfgs}[1]{\to_\mathrm{tcfg*}^{#1}}
\newcommand{\Stack}{\mathit{DS}}
\newcommand{\fps}[1]{\mathit{PS}_{#1}}
\newcommand{\psp}[1]{\mathtt{PSP}_{#1}}
\newcommand{\ArgFn}{\mathcal{A}}
\newcommand{\rxmit}{r_\mathrm{xmit}}
\newcommand{\rncal}{r_\mathrm{ncal}}
\newcommand{\GprSet}{\RegSet_\mathrm{gpr}}
\newcommand{\ProgIn}{\Prog_\mathrm{in}}
\newcommand{\ProgOut}{\Prog_\mathrm{sct}}
\newcommand{\ProgFence}{\Prog_\mathrm{fence}}
\newcommand{\ProgFPS}{\Prog_\mathrm{fps}}
\newcommand{\Endcall}{\mathtt{ENDBR}}
\newcommand{\Load}{\mathtt{LD}}
\newcommand{\Store}{\mathtt{ST}}
\newcommand{\MyCall}{\mathtt{CALL}}
\newcommand{\Ret}{\mathtt{RET}}
\newcommand{\Bnz}{\mathtt{BNZ}}
\newcommand{\Jmp}{\mathtt{JMP}}
\newcommand{\Op}{\mathtt{OP}}
\newcommand{\Lfence}{\mathtt{LFENCE}}
\newcommand{\semantics}[1]{%
    \begin{minipage}{\linewidth}
        \footnotesize
        \begin{align*}
            #1
        \end{align*}
    \end{minipage}\vspace{1mm}
}
\let\orgdescriptionlabel\descriptionlabel
\renewcommand*{\descriptionlabel}[1]{%
  \let\orglabel\label
  \let\label\@gobble
  \phantomsection
  \edef\@currentlabel{#1}%
  \let\label\orglabel
  \orgdescriptionlabel{#1}%
}
\Crefname{definition}{Def.}{Defs.}
\Crefname{corollary}{Cor.}{Cors.}
\Crefname{theorem}{Thm.}{Thms.}
\Crefname{equation}{Eq.}{Eqs.}
\newcommand{\StackBase}[1]{B_{#1}}
\newcommand{\StackEnd}[1]{E_{#1}}
\newcommand{\FrameSize}[1]{k_{#1}}
\newcommand{\IncPC}{\PC\texttt{++}}
\definecolor{darkgreen}{RGB}{0,100,0}
\definecolor{non-transient}{RGB}{15,101,160}
\definecolor{transient}{RGB}{209,99,68}
\definecolor{speculative}{RGB}{100,100,100}
\definecolor{operand}{RGB}{0,0,0}
\newcommand{\Transient}[1]{\textcolor{transient}{#1}}
\newcommand{\NonTransient}[1]{\textcolor{non-transient}{#1}}
\newcommand{\TokNT}{\text{\textcolor{non-transient}{seq}}}
\newcommand{\TokT}{\text{\textcolor{transient}{t}}}
\newcommand{\TransitionT}{\delta_\TokT}
\newcommand{\TransitionNT}{\delta_\TokNT}
\newcommand{\T}{\Transient{\textsc{t}}}
\newcommand{\NT}{\NonTransient{\textsc{seq}}}
\newcommand{\colornt}{\color{non-transient}}
\newcommand{\colort}{\color{transient}}
\newcommand{\colorspec}{\color{speculative}}
\newcommand{\cmark}{\ding{51}}
\newcommand{\xmark}{\ding{55}}
\newcommand{\cmarkc}{\textcolor{darkgreen}{\cmark}}
\newcommand{\xmarkc}{\textcolor{red}{\xmark}}
\newcommand{\pmarkc}{{\color{blue}{$\mathbf{\sim}$}}}
\newcommand{\PrimNCAL}{NCAL}  
\newcommand{\PrimNCAS}{NCAS}  
\newcommand{\PrimStack}{STKL} 
\newcommand{\PrimArg}{NARG}   
\newcommand{\ModelBase}{\llbracket \, \cdot \, \rrbracket}
\newcommand{\Lkg}[1]{\ModelBase_\mathrm{#1}}
\newcommand{\LkgMem}{\Lkg{mem}}
\newcommand{\LkgCT}{\Lkg{ct}}
\newcommand{\LkgArch}{\Lkg{arch}}
\DeclareRobustCommand{\backspace}{\tikz[baseline]{%
    \def\LineWidth{0.1em}%
    \def\BoxHeight{0.6em}%
    \def\BoxWidth{\BoxHeight}%
    \def\AngleDist{\BoxWidth/2}%
    \def\Inset{\BoxHeight/5}%
    \def\XShift{-\Inset/2}%
    \draw[line width=\LineWidth, anchor=base] (0,0) -- (\BoxWidth, 0) -- (\BoxWidth, \BoxHeight) -- (0, \BoxHeight) -- (-\AngleDist, \BoxHeight/2) -- cycle; 
    \draw[line width=\LineWidth, anchor=base] (\Inset+\XShift,\Inset) -- (\BoxWidth-\Inset+\XShift,\BoxHeight-\Inset);%
    \draw[line width=\LineWidth, anchor=base] (\Inset+\XShift,\BoxHeight-\Inset) -- (\BoxWidth-\Inset+\XShift,\Inset);%
}}
\DeclareRobustCommand{\halfcircle}{\tikz[baseline]{%
    \def\LineWidth{0.1em}%
    \def\Radius{0.35em}%
    \draw[line width=\LineWidth, anchor=base] (0, \Radius) circle [radius=\Radius];%
    \filldraw[line width=\LineWidth, anchor=base] (0, 0) arc [start angle=270, end angle=90, radius=\Radius] -- cycle;%
}}
\DeclareRobustCommand{\fullcircle}{\tikz[baseline]{%
    \def\LineWidth{0.1em}%
    \def\Radius{0.35em}%
    \filldraw[line width=\LineWidth, anchor=base] (0, \Radius) circle [radius=\Radius];%
}}
\DeclareRobustCommand{\EMitigation}{\textcolor{darkgreen}{\backspace}}
\newcommand{\PMitigation}{\raisebox{-0.3ex}{\color{gray}\Large\textbackspace ◐}}
\renewcommand{\PMitigation}{\raisebox{-0.1em}{\textcolor{gray}{\halfcircle}}}
\newcommand{\CMitigation}{\raisebox{-0.3ex}{\color{darkgreen}\Large\textbackspace ●}}
\renewcommand{\CMitigation}{\raisebox{-0.1em}{\textcolor{darkgreen}{\fullcircle}}}
\newcommand{\XMitigation}{-}
\newcommand{\IMitigation}{\raisebox{-0.7ex}{\color{red}\Large\textbackspace ⇧}}
\renewcommand{\IMitigation}{\textcolor{red}{$\uparrow$}}
\newcommand{\EIMitigation}{\raisebox{-0.7ex}{\color{gray}\Large\textbackspace ⌫}}
\renewcommand{\EIMitigation}{\textcolor{gray}{\backspace}}
\newcommand{\Cutset}{C}
\newcommand{\graydashedbox}[1]{%
  \tikz[baseline=(X.base)]\node [draw=gray, dashed, inner sep=2pt, rectangle] (X) {#1};%
}
\renewcommand{\graydashedbox}[1]{#1}
\newcommand{\SecTyp}{{\tau}}
\newcommand{\STGlob}{{\SecTyp_\mathrm{glob}}}
\newcommand{\STStk}[1]{{\SecTyp_\mathrm{stk}^{#1}}}
\newcommand{\STReg}[1]{{\SecTyp_\mathrm{reg}^{#1}}}
\newcommand{\hwnone}{\textsc{hwnone}}
\newcommand{\hwssbd}{\textsc{ssbd}}
\newcommand{\hwasp}{\textsc{asp}}
\newcommand{\hwasppsf}{\textsc{asp+psf}}
\newcommand{\hwaspssbd}{\textsc{asp-nostl}}
\newcommand{\update}{\!\gets\!}
\newcommand{\DynDep}[1]{\to_\mathrm{dep}^{#1}}
\newcommand{\StcDep}[1]{\to_\mathrm{dep}^{#1}}
\newcommand{\DynDeps}[1]{\to_\mathrm{dep*}^{#1}}
\newcommand{\StcDeps}[1]{\to_\mathrm{dep*}^{#1}}
\newcommand{\mycomment}[1]{}
\begin{document}

\title{Serberus: Protecting Cryptographic Code from Spectres at Compile-Time \\
{\it \small Authors' version; to appear in the Proceedings of the IEEE Symposium on Security and Privacy (S\&P) 2024} \vspace{-0.5cm}}
\author{\IEEEauthorblockN{Nicholas Mosier}
\IEEEauthorblockA{
nmosier@stanford.edu \\
Stanford University \\
Stanford, California, USA}
\and
\IEEEauthorblockN{Hamed Nemati}
\IEEEauthorblockA{%
hamed.nemati@cispa.de \\
CISPA Helmholtz Center \\
for Information Security \\
Saarbrücken, Germany}
\and
\IEEEauthorblockN{John C. Mitchell}
\IEEEauthorblockA{
jcm@stanford.edu \\
Stanford University \\ 
Stanford, California, USA}
\and
\IEEEauthorblockN{Caroline Trippel}
\IEEEauthorblockA{
trippel@stanford.edu \\
Stanford University \\ 
Stanford, California, USA}
}

\maketitle

\begin{abstract}

We present \tool{}, \textit{the first comprehensive mitigation} for hardening constant-time (\ctprog{}) code against Spectre attacks (involving the 
\pht{}, \btb{}, \rsb{}, \stl{}, and/or \psf{}
speculation primitives) on \textit{existing hardware}.
\tool{} is based on three insights. First, some hardware control-flow integrity (CFI) protections restrict transient control-flow to the extent that it may be comprehensively considered by software analyses. 
Second,  conformance to the accepted CT code discipline permits two code patterns that are unsafe in the post-Spectre era.
Third, once these code patterns are addressed, all Spectre leakage of secrets in CT programs can be attributed to one of four classes of \textit{taint primitives}---instructions that can transiently assign a secret value to a publicly-typed register. 
We evaluate \tool{} on cryptographic primitives in the \textsc{OpenSSL}, \textsc{Libsodium}, and \textsc{Hacl*} libraries.
\tool{} introduces 21.3\% runtime overhead on average, compared to 24.9\% for the next closest state-of-the-art software mitigation, which is less secure.
\end{abstract}


\section{Introduction}
\label{sec:intro}

The \textit{constant-time (CT) programming} approach~\cite{Bernstein:curve25519, Bernstein:Poly1305-AES, Molnar:pc-security-model, Fisch:Iron, Shaon:SGX-BigMatrix, Zheng:Opaque, Eskandarian:OliDB, Mishra:Oblix, Tople:oblivious-execution, Sasy2018ZeroTraceO, Ahmad2018OBLIVIATEAD, Coppens:timing-channel-x86, FactLanguage, synthct} was designed to support the safe execution of secret-processing programs, like cryptographic code~\cite{openssl,libsodium,hacl}, in the face of hardware side-channel attacks. Concretely, \ctprog{} programming requires that only \textit{safe} instructions, which create operand-independent hardware resource usage, process secrets.
\textit{Unsafe} instructions, i.e., information \textit{transmitters} (or simply \textit{transmitters})~\cite{jiyong:stt}, typically include control-flow, memory, and variable-time (e.g., floating-point~\cite{andrysco:subnormal} or 
integer division~\cite{Coppens:timing-channel-x86}) instructions.

Unfortunately, common hardware optimizations enable \textit{transient execution}\footnote{Transient execution refers to the execution of instructions that are never architecturally committed~\cite{canella2019systematic}.} to \textit{steer} secrets towards the operands of (transient) transmitters, circumventing \ctprog{} protections.
In \textit{Spectre attacks} specifically (our focus, \S\ref{sec:background-spectre}), transient execution results from control- or data-flow \textit{mispredictions} at runtime~\cite{canella2019systematic}.
On modern processors, there are five
well-documented sources of such (mis)predictions, called \textit{speculation primitives}~\cite{microsoft:taxonomy}:
conditional branch prediction (\pht{})~\cite{spectre}, 
indirect branch prediction (\btb{})~\cite{spectre},
return address prediction (\rsb{})~\cite{spectrersb},
store-to-load forwarding prediction (\stl{})~\cite{spectrev4}, 
and predictive store forwarding (\psf{})~\cite{cauligi:sok-spectre-sw-defense, guanciale:inspectre, cat-spectre}.\footnote{Abbreviations are borrowed from recent work which surveys Spectre attacks and software defenses~\cite{canella2019systematic, cauligi:sok-spectre-sw-defense}.}

\begin{table}[t]
    \centering
    \footnotesize
    \setlength\tabcolsep{1.5pt}
    \begin{tabular}{|c|l|c|c|c|c|c|c|}\hline
        {\bf Mitigation}                             & {\bf Leakage}                 & {\bf Proof}         & \pht{}       & \btb{}       & \rsb{}       & \stl{}       & \psf{}        \\\hline
        \textsc{intel-lfence}~\cite{intel:lfence}    & \quad\,\, -                   & -                   & \EMitigation & \XMitigation & \XMitigation & \XMitigation & \XMitigation  \\\hline
        \textsc{llvm-slh}~\cite{slh}                 & \quad \color{gray} $\LkgArch$ & \color{gray} \xmark & \PMitigation & \XMitigation & \XMitigation & \XMitigation & \XMitigation  \\\hline
        \retpoline{}~\cite{retpoline}                & \quad\,\, -                   & -                   & \XMitigation & \EMitigation & \IMitigation & \XMitigation & \XMitigation  \\\hline
        \ipredd{}~\cite{intel:bhi}                   & \quad\,\, -                   & -                   & \XMitigation & \EMitigation & \XMitigation & \XMitigation & \XMitigation  \\\hline
        \ssbd{}~\cite{intel:ssb}                     & \quad\,\, -                   & -                   & \XMitigation & \XMitigation & \XMitigation & \EMitigation & \EIMitigation \\\hline
        \psfd{}~\cite{intel:psfd}                    & \quad\,\, -                   & -                   & \XMitigation & \XMitigation & \XMitigation & \XMitigation & \EMitigation  \\\hline
        \lfenceall{}                                 & \quad\,\, -                   & -                   & \EMitigation & \EMitigation & \XMitigation & \EMitigation & \EIMitigation \\\hline
        \slhall{}                                    & \quad \color{gray} $\LkgArch$ & \color{gray} \xmark & \PMitigation & \EMitigation & \XMitigation & \EMitigation & \EIMitigation \\\hline
        \blade{}~\cite{vassena:blade}                & \quad $\LkgCT$                & \cmark              & \CMitigation & \XMitigation & \XMitigation & \XMitigation & \XMitigation  \\\hline
        \textsc{UltimateSLH}~\cite{uslh} & \quad $\LkgCT$ & \color{gray} \xmark & \CMitigation & - & - & - & - \\\hline
        \textsc{Swivel-CET}~\cite{Narayan:swivel}    & \quad \color{gray} $\LkgMem$  & \color{gray} \xmark & \CMitigation & \CMitigation & \CMitigation & \EMitigation & \EIMitigation \\\hline
        \tool{} (ours)                               & \quad $\LkgCT$                & \cmark              & \CMitigation & \CMitigation & \CMitigation & \CMitigation & \EMitigation  \\\hline
    \end{tabular}
    \caption{
    \tool{} versus mitigations for existing hardware. 
    \textit{Leakage:} \tool{} targets the \ctprog{} leakage model $\LkgCT$; \ctprog{} code is insecure under $\color{gray}\LkgMem$ and $\color{gray}\LkgArch$~\cite{cauligi:sok-spectre-sw-defense}.
    \textit{Speculation primitives:} 
    \CMitigation{} complete mitigation without disabling speculation; 
    \EMitigation{} disables speculation primitive (\EIMitigation~for implicitly disabled);
    \PMitigation{} incomplete mitigation;
    \IMitigation{} creates opportunities for speculation primitive to introduce transient execution;
    \XMitigation{} no mitigation.
    \tool{} is the only defense for the CT leakage model to mitigate leakage due to \textit{all} speculation primitives.
    }
    \label{tab:mitigation-comparison}
\end{table}

Hardening \ctprog{} programs against \textit{all} Spectre attacks involving \textit{any combination} of the aforementioned speculation primitives is hard.
Suitable hardware mitigations have been proposed~\cite{nda, choudhary:spt, yu:oisa, Schwarz:ConTExT}, but they require complex design changes that limit their adoption.
Several mitigations target existing hardware (\Cref{tab:mitigation-comparison}). 
However, none, nor any combination, is  suitable for securing \ctprog{} code.

\paragraph{\bf This Paper} We present \tool{},\footnote{\tool{} is named after Cerberus, a three-headed dog (representing its three mitigation passes) of Greek mythology with a serpentine tail (our hardware model \ourmodel{}) guarding the gates of Hades to prevent the dead (transient executions) from escaping (exfiltrating secrets) to the overworld (via transmitters).
}
\textit{the first comprehensive mitigation} for hardening \ctprog{}  code against Spectre attacks involving any combination of the \pht{}, \btb{}, \rsb{}, \stl{}, and \psf{} speculation primitives
on \textit{existing hardware}.
It mitigates the first four primitives in software and disables
only the last primitive (\psf{}) in hardware due to a clear 
\\ \vspace{-2.93\baselineskip} \\
performance advantage.
An implementation of \tool{} 
is readily deployable in our compiler \toolclang{},\footnote{\toolclang{} is open source and available at \url{https://github.com/nmosier/llsct}.}
which produces more performant (on average) and more secure binaries for cryptographic routines than state-of-the-art mitigations.
%
%
\tool{} is based on three key insights.


\textit{Insight 1: Hardware model.}
Mitigating Spectre attacks in software 
is challenged by (i) the impracticality of reasoning about unconstrained transient control-flow~\cite{binsec-haunted} and (ii) the overhead of managing unconstrained transient data-flow~\cite{ctfoundations:pldi:20}.
Like prior work~\cite{Narayan:swivel}, we observe that some lightweight (negligible overhead) hardware \textit{control-flow integrity (CFI)}  protections, like Intel CET~\cite{intel-cet-security-analysis},
constrain transient control-flow to the extent that it may be comprehensively considered by software analyses.
Thus, \tool{} requires that such CFI protections are enabled in the target hardware.
Moreover, we show \textit{for the first time} that while Spectre-\stl{}\footnote{Spectre-\stl{} denotes Spectre leakage enabled by the \stl{} speculation primitive.} can be efficiently mitigated in software, Spectre-\psf{} cannot.
Thus, \tool{} requires the \psfd{} speculation control, available on Intel~\cite{intel:psfd} and AMD~\cite{amd-predictive-store-forwarding} processors, to disable \psf{}.

\textit{Insight 2: Programming contract.}
We focus on hardening \ctprog{} programs against Spectre attacks~\cite{vassena:blade,ctfoundations:pldi:20,choudhary:spt}, since they already avoid leaking secrets during \textit{sequential} (i.e., non-transient) execution.
However, we observe that \ctprog{} programs, as defined previously~\cite{verifying-ct}, may contain two code patterns that are unsafe in the post-Spectre era and limit mitigated program performance: (i) transmitters with secret operands that never execute sequentially (e.g., ``\texttt{if (0) leak(secret)}''), and (ii) procedure calls or returns with secret arguments, which can often leak via \btb{} or \rsb{} misprediction.
Thus, we introduce a strengthening of \ctprog{} programming, called \textit{static constant-time (\ourct{})}, which requires that (i) all program variables have a static security type, and (ii) all call and return arguments are public.
\tool{} requires a \ourct{} program as input.

\textit{Insight 3: Taint primitives.}
We find that \textit{all} Spectre leakage in \ourct{} programs can be attributed to \textit{\ourprimitives{}}, or instructions that transiently assign a secret value to a publicly-typed register. 
We prove that there are exactly \textit{four} classes \ourprimitives{} (\Cref{fig:primitives}) for our hardware model (\S\ref{sec:hardware-model}), which assumes 
(i) Intel's CET protections, (ii) Intel's
\rrsbad{} and \psfd{} speculation controls~\cite{intel:msrs}, (iii) Intel's DOIT Mode~\cite{intel:doit}, and (iv) the \pht{}, \btb{}, \rsb{}, and \stl{} speculation primitives.
This result motivates \tool{}'s design and its \textit{correctness proof}: it consists of three compiler passes, each of which mitigates Spectre leakage due to one or more classes of \ourprimitives{}.

\begin{figure}[t]

\end{figure}


\noindent \textbf{Contributions.}
Our contributions are as follows:

\begin{itemize}[leftmargin=*]
    \item \textbf{\ourmodel{} operational semantics (\S\ref{sec:model}):} We define an operational semantics, called \ourmodel{}, that
    encodes all (sequential and transient) control- and data-flow that a program may exhibit when it runs on a microarchitecture satisfying our hardware model (\S\ref{sec:hardware-model}). 
    \ourmodel{} avoids explicitly modeling the hardware structures that give rise to speculation primitives, 
    capturing a wider range of implementations than prior work~\cite{speculative-execution-combinations, ctfoundations:pldi:20}.
    
    \item \textbf{Static Constant-Time programming (\S\ref{sec:properties}):}
    We propose \ourct{} programming, a strengthening of CT for the post-Spectre era.
    \ourct{} forbids two legal \ctprog{} code patterns (\S\ref{sec:properties:ct-limitations}) that 
    preclude efficient Spectre mitigation in software.
    %
    
    \item \textbf{\Ourprimitives{} (\S\ref{sec:properties}):}
    We define and characterize \textit{\ourprimitives{}}, instructions that cause a security type violation (\S\ref{sec:security-type-violations}) when transiently executed. Using \ourmodel{}, we prove that \ourprimitives{} are \textit{necessary} for Spectre leakage of secrets in \ourct{} programs and that \ourct{} programs exhibit exactly four classes of \ourprimitives{}.

    \item \textbf{\tool{} mitigation
    (\S\ref{sec:tool}):}
    We propose \tool{}, the \textit{first comprehensive mitigation} for existing hardware that hardens \ctprog{} code (satisfying \ourct{}) against Spectre attacks that exploit
    \pht{}, \btb{}, \rsb{}, \stl{}, and/or \psf{}.
    \tool{} offers the \textit{first software mitigation for \stl{}} that does not simply disable it (\Cref{tab:mitigation-comparison}); only \psf{} is the only speculation primitive it disables.
    Using \ourmodel{}, we prove that \tool{} mitigates all Spectre leakage of secrets in \ourct{} programs.

    \item \textbf{\ourcc{} compiler (\S\ref{sec:case-study}):}
    We build \ourcc{}, a custom fork of \clang{} 14.0.4 that implements \tool{}.

    \item \textbf{Case study (\S\ref{sec:case-study}--\S\ref{sec:results}):} 
    We evaluate \toolclang{} alongside two compositions of state-of-the-art mitigations~\cite{intel:lfence, slh, retpoline, intel:ssb}, \lfenceall{} and \slhall{}, on cryptographic primitives in the \openssl{}, \libsodium{}, and \hacl{} libraries. 
    \toolclang{} produces more performant code than the state-of-the-art on most benchmarks, introducing \textit{less than 8\% average overhead} on large-buffer benchmarks.
\end{itemize}

\section{Background}

\label{sec:background}
\subsection{Hardware Side-Channel Attacks}\label{sec:ct}
Hardware side-channel attacks involve a victim program (the \textit{sender}) running on vulnerable hardware that leaks secrets to an attacker (the \textit{receiver}).
Such hardware contains \textit{unsafe} instructions, or \textit{transmitters}~\cite{jiyong:stt}, whose execution creates operand-dependent hardware resource usage.
A receiver infers the value(s) of a transmitter's \textit{sensitive} operand(s) by
measuring hardware resource usage through various means, such as execution time~\cite{bernstein_aes_attack,exectimeexploit} and more~\cite{perf_counter_sc, guanciale:oakland16, Osvik:prime+probe+l1d,Yarom:flush+reload+llc13,cache_bleed,yan:directories, branchpred_sc,
Evtyushkin:BranchScope,andrysco:subnormal,Mult_leaky,portsmash,memjam,xu:controlledchannelattacks,
leaky_cauldron,tlb_bleed,drama,pandora:isca:21, powerexploit, acousticexploit, radiationexploit,KocherDPA,sok_em_side_channels}.

Constant-time (\ctprog{}) programming is the prevailing approach for countering hardware side-channel leakage in software~\cite{Bernstein:curve25519, Bernstein:Poly1305-AES, Molnar:pc-security-model, Fisch:Iron, Shaon:SGX-BigMatrix, Zheng:Opaque, Eskandarian:OliDB, Mishra:Oblix, Tople:oblivious-execution, Sasy2018ZeroTraceO, Ahmad2018OBLIVIATEAD, Coppens:timing-channel-x86, FactLanguage, synthct, ct-wasm}. In short, \ctprog{} programs do not supply secrets to sensitive transmitter operands in \textit{any} sequential execution.
\ctprog{} has been widely adopted in the context of cryptographic code~\cite{openssl, libsodium, hacl, mbedtls}, which must process (and not leak) secrets. Thus, a variety of tools and techniques have been proposed to support \ctprog{} code generation~\cite{FactLanguage, synthct, ct-wasm,evercrypt} and verification~\cite{verifying-ct, daniel2020binsec, barthe2019formal,stealthmem-verif,jasmin}. Prior work surveys many such approaches~\cite{sok-ct}.




\label{sec:sct}

\subsection{Spectre Attacks}
\label{sec:background-spectre}
\textit{Transient execution attacks} circumvent \ctprog{} protections by steering secrets towards the sensitive operands of (transient) transmitters~\cite{spectre, meltdown}.
These attacks leverage two high-level mechanisms for creating transient execution at runtime on modern processors: (i) \textit{faulting} instructions, and (ii) control- or data-flow \textit{mispredictions}. 
These mechanisms give rise to \textit{Meltdown} and \textit{Spectre} attacks, respectively~\cite{canella2019systematic}.

In general, Meltdown attacks (e.g., Meltdown~\cite{meltdown}, Foreshadow~\cite{foreshadow}, LVI~\cite{lvi}, MDS~\cite{mds-fallout, mds-ridl, mds-zombieload}) can be efficiently mitigated through microcode updates or modest hardware changes~\cite{intel-meltdown-patches}.
In contrast, despite a plethora of research proposals for mitigating Spectre attacks in hardware~\cite{invisispec, safespec, dawg:kiriansky, delayonmiss, Li:cond-spec, Ainsworth:muontrap, cleanupspec, Taram:CSF, Bourgeat:m16, Schwarz:ConTExT, yu:oisa, jiyong:stt, Yu:sdo, specshield, nda, dolma}, extreme complexity limits their adoption.
Thus, we focus exclusively on mitigating \textit{Spectre} attacks on \textit{existing hardware} designs.

\subsubsection{\bf Speculation Primitives}
\label{sec:background:speculation-primitives}

Spectre attacks~\cite{spectre, spectrersb, spectrev4, guanciale:inspectre, ctfoundations:pldi:20} are characterized according to distinct sources of control- and data-flow (mis)prediction on modern processors, called \textit{speculation primitives}~\cite{microsoft:taxonomy, canella2019systematic}. 
Predictions introduce \textit{speculative execution}, which may turn out to be \textit{sequential} (when predictions are correct) or \textit{transient} (when incorrect).

\paragraph{\bf Control-Flow Prediction}
\label{sec:control-flow-prediction}
To exploit instruction-level parallelism (ILP) in sequential applications, modern processors resolve a variety of program dependencies at runtime.
Among these are \textit{control dependencies} which arise due to control-flow instructions (e.g., conditional branches,  indirect branches) whose \textit{condition} and/or \textit{target} decide the program counter (PC) of the next instruction to fetch. 

To avoid frontend stalls, processors dedicate a significant amount of circuitry to control-flow prediction, giving rise to three notable speculation primitives.
\pht{} (responsible for Spectre v1~\cite{spectre} and v1.1~\cite{spectrev1p1}) denotes conditional branch prediction, which diverts speculative execution following a \textit{conditional branch} towards one of two control-flow paths (i.e., the taken or not taken path). 
\btb{} (responsible for Spectre v2~\cite{spectre}) denotes indirect branch prediction, which diverts speculative execution following an \textit{indirect branch} towards one of many control-flow paths.
\rsb{} (responsible for SpectreRSB~\cite{spectrersb}) denotes return address prediction, which operates similarly to \btb{} except that it diverts speculative execution following a \textit{return}.
In general, unconstrained indirect branch and return address prediction may direct speculative control-flow to \textit{any} program instruction or even to the \textit{middle} of an instruction~\cite{bhattacharyya:specrop}.

\paragraph{\bf Data-Flow Prediction}
\label{sec:data-flow-prediction}
\textit{Data-flow dependencies} through memory present another barrier to exploiting ILP. Specifically, loads block the execution of younger dependent instructions until they have retrieved their data.
Two notable speculation primitives result from data-flow prediction mechanisms designed to accelerate the execution of loads.
\stl{} (responsible for Spectre v4~\cite{spectrev4}) denotes store-to-load forwarding prediction, whereby a load may forward data from an older same-address store before all prior stores have resolved their addresses. That is, a \textit{load} may speculatively read from any same-address \textit{uncommitted} store or the most recent same-address \textit{committed} store.
\psf{} (responsible for Spectre PSF~\cite{ctfoundations:pldi:20,cat-spectre,guanciale:inspectre}) denotes predictive store forwarding, whereby a load may forward data from an older store before the load or store has resolved its address. With \psf{}, a \textit{load} may speculatively read from any prior \textit{uncommitted} store.

\subsubsection{Software Mitigations for Spectre Attacks}
\label{sec:spectre-mitigations}
\Cref{tab:mitigation-comparison} summarizes state-of-the-art software mitigations\footnote{``Software mitigations'' are those that are deployable on \textit{existing hardware}.
} for Spectre attacks, which we characterize below.

\paragraph{\bf Leakage Models}
\label{sec:background:leakage-model}
One distinction among software Spectre mitigations is the \textit{leakage model} they target (\S\ref{sec:leakage-model}). 
A mitigation's leakage model captures what an attacker can observe when a victim program runs on hardware of interest.
The most prevalent leakage models in the literature~\cite{cauligi:sok-spectre-sw-defense} are the \ctprog{} leakage model ($\LkgCT$)~\cite{barthe:highassurance, ctfoundations:pldi:20, binsec-haunted, guanciale:inspectre, guarnieri:spectector, exorcisingspectrev1, vassena:blade, guarnieri:contracts} and the sandbox isolation leakage model ($\LkgArch$)~\cite{guarnieri:contracts, cheang2019formal}.
We adopt the \ctprog{} leakage model in this paper due to our focus on mitigating Spectre leakage in cryptographic code~\cite{guarnieri:contracts,cauligi:sok-spectre-sw-defense}. 
The \ctprog{} leakage model exposes the control-flow and sequence of accessed memory addresses in a program's execution as \textit{observations} to an attacker.
The sandbox isolation model additionally exposes all values loaded from memory, which is unsuitable for modeling \ctprog{} programs which need to access secrets.
A less common leakage model, $\LkgMem$, models a receiver that can only observe accessed memory addresses~\cite{guarnieri:contracts}.

\paragraph{\bf Blocking Speculation Primitives}
The Spectre mitigations in \Cref{tab:mitigation-comparison} can be further classified as coarse- or fine-grained. Coarse-grained mitigations \EMitigation{} disable culprit speculation primitive(s). We explain several examples below.

One Spectre-\pht{} mitigation proposed by Intel (\textsc{intel-lfence}) inserts an \lfence{} after every conditional branch~\cite{intel:lfence}. An \lfence{} is a \textit{serializing instruction}, which guarantees that any younger instructions will not be executed (even transiently) until all older ones commit. This mitigation is complete under $\LkgCT$, but incurs a huge ($\approx 440$\%) overhead for typical software~\cite{lfenceoverhead}.
A higher-performance Spectre-\pht{} mitigation, \textit{speculative load hardening (SLH)},
masks the addresses or return values of loads inside in a conditional branch with the branch predicate~\cite{slh}. 
LLVM offers the only well-established SLH implementation (\textsc{llvm-slh}). However, it targets $\LkgArch$ and thus is incomplete for \ctprog{} code~\cite{shivakumar:spectredeclassified, exorcisingspectrev1}.
UltimateSLH~\cite{uslh} extends LLVM's SLH implementation to the $\LkgCT$ leakage model, with a performance cost.
Spectre-\btb{} can be mitigated with \retpoline{}~\cite{retpoline} or Intel's \ipredd{} (i.e., \textsc{ipred\texttt{\_}dis\texttt{\_}u}) speculation control~\cite{intel:bhi}.
\retpoline{} replaces all indirect branches with returns that
direct mispredictions to a ``safe'' target.\footnote{We assume the victim process sets the \rrsbad{} speculation control (\S\ref{sec:hardware-model}),
which prevents attacks from circumventing \retpoline{}~\cite{retbleed}.
}
Setting \ipredd{} disables indirect branch prediction.
%

No complete nor deployable mitigations of Spectre-\rsb{} exist under $\LkgCT$.
Intel has proposed an (incomplete) technique called \textit{RSB stuffing} to ``reduce the likelihood of an [RSB] underflow from occurring''~\cite{retpoline:skylake}.

Spectre-\stl{} and Spectre-\psf{} can be mitigated with \ssbd{} and \psfd{} speculation controls~\cite{intel:msrs}, respectively, offered by Intel~\cite{intel:msrs} and AMD~\cite{amd-predictive-store-forwarding}.

\paragraph{\bf Preventing Secret-Dependent Transmitters}
\label{sec:background:swivel}
A couple fine-grained Spectre mitigations have also appeared in the literature. These approaches \CMitigation{} explicitly prevent secrets from being supplied to transmitters during transient execution, while leaving speculation primitives enabled.

Blade~\cite{vassena:blade} is a complete Spectre-\pht{} mitigation for \ctprog{} WebAssembly under $\LkgCT$ that frames \lfence{} insertion as a min-cut data-flow problem.
Swivel-CET~\cite{Narayan:swivel} uses a control-flow tracking technique called \textit{register interlocking} to mitigate Spectre attacks involving 
\pht{}, \btb{}, and/or
\rsb{} in WebAssembly programs;
however, it targets $\LkgMem$ and thus is insecure for \ctprog{} code. 

\paragraph{\bf Layering Spectre Mitigations}
State-of-the-art mitigations \slh{}, \retpoline{}, and \ssbd{} incur modest overhead when deployed in isolation to mitigate Spectre attacks involving a \textit{single} speculation primitive.
Since there are no comprehensive mitigations for Spectre under $\LkgCT$, \S\ref{sec:case-study}-\ref{sec:results} compare \tool{}'s performance to composite baselines, \lfenceall{} and \slhall{}, which combine \textsc{intel-lfence} (\textsc{f}) and \textsc{llvm-slh} (\textsc{s})
with \textsc{retpoline} and \textsc{ssbd}.


\subsection{Threat Model}
\label{sec:threat-model}

\subsubsection{Software Model}
We \textit{provably} harden trusted \ourct{} (a slight strengthening of \ctprog{}, \S\ref{sec:propreties:ourct})
%
victim code against Spectre attacks that arise on processors satisfying the constraints of our hardware model (\S\ref{sec:hardware-model}). 
We assume a powerful attacker that can directly observe the sensitive operands of transmitters executed (sequentially or transiently) by the victim and fully control speculation within the victim process.

\subsubsection{Hardware Model}
\label{sec:hardware-model}
We assume the victim is running on a processor that soundly refines
our abstract speculative processor model, \ourmodel{} (\S\ref{sec:model}). That is, all (sequential and transient) control- and data-flow that a program can exhibit on the hardware is captured by \ourmodel{}.
Real-world processors that qualify include Intel Alder Lake N (and other, \S\ref{app:cet}) x86 CPUs, which feature: (i) Intel CET, (ii) 
\rrsbad{} and \textsc{psfd} speculation controls~\cite{intel:msrs}, (iii) Data Operand Independent Timing (DOIT)~\cite{intel:doit}, and (iv) the \pht{}, \btb{}, \rsb{}, and \stl{} speculation primitives.
\paragraph{\bf Indirect Branch Tracking}
Intel CET's \textit{indirect branch tracking (IBT)} feature helps \tool{} constrain \textit{forward-edge} control-flow speculation. IBT requires that
an \cetendbr{} instruction be placed at all valid indirect branch targets.
The processor raises an exception, or blocks transient execution, on an indirect jump to a non-\cetendbr{} (\S\ref{app:cet}).
\tool{} ensures \cetendbr{}s are only placed at a program's procedure entrypoints, so the set of possible predicted targets for an indirect branch is the set of procedure entrypoints.

\paragraph{\bf Shadow Stack and \rrsbad{}}
Intel CET's \textit{shadow stack (SHSTK)} and \rrsbad{} speculation control help \tool{} constrain \textit{backward-edge} control-flow speculation.
\rsb{} predictions are typically sourced from a hardware return stack buffer; (sequential or transient) calls push return addresses onto the return stack buffer which are popped off on return predictions.
SHSTK ensures that transient out-of-bounds stores to return addresses on the software stack cannot hijack speculative control flow~\cite{bhattacharyya:specrop}. 
The \rrsbad{} (i.e., \texttt{RRSBA\_DIS\_U}) speculation control~\cite{intel:msrs} prevents a processor from falling back to the indirect branch predictor to service return predictions on return stack buffer underfills~\cite{retbleed}.
Together, SHSTK and \rrsbad{} guarantee that the set of possible predicted targets for a return is exactly the set of instructions that immediately follow program calls.\footnote{Note that the Linux kernel performs \rsb{} filling on context switches to prevent cross-address-space Spectre-\rsb{} attacks~\cite{linux-spectre-side-channels}.}

\paragraph{\bf \textsc{psfd}}
We find that mitigating Spectre-\psf{} in software incurs significant performance overhead for the cryptographic workloads we evaluate (\S\ref{sec:results:variants}), while disabling \psf{} with Intel's \textsc{psfd} speculation control introduces negligible overhead. We advocate for the latter strategy in this paper.

\paragraph{\bf Data Operand Independent Timing}
Intel's DOIT Mode~\cite{intel:doit} guarantees that the instructions which \ctprog{} programming regards as safe (e.g., \texttt{ADD}, \texttt{XOR}, \texttt{MUL}) exhibit operand-independent timing.
Given the guarantees of DOIT, our \tool{} implementation assumes that the following instructions are transmitters: conditional branches, indirect branches, loads, stores, and division instructions.
However, \tool{} is \textit{parameterized} by a set of user-identified transmitters to encompass others that may emerge~\cite{pandora:isca:21,synthct}.



\section{\ourmodel{}: An Operational Semantics for an Abstract Speculative Processor}

\label{sec:model}
\label{sec:our-model}
We define a novel operational semantics, called \ourmodel{}, to support the design and prove the security of our Spectre mitigation \tool{}. 
\ourmodel{} consists of a \textit{leakage model}  and an \textit{execution model}~\cite{cauligi:sok-spectre-sw-defense}.
Its leakage model (\S\ref{sec:leakage-model}) refines the \ctprog{} leakage model.
Its execution model (\S\ref{sec:exec-model}-\S\ref{sec:model:semantics}) defines a \textit{non-deterministic abstract speculative processor} that executes assembly-style programs as a series of state transitions.
\Ourmodel{} captures the \pht{}, \btb{}, \rsb{}, and \stl{} speculation primitives.
We show in \S\ref{sec:extension} how to extend \ourmodel{} to also capture \psf{}.
However, since \tool{} assumes it is disabled (for performance, \S\ref{sec:results:variants}), we omit it from core \ourmodel{}.

\subsection{\ourmodel{} Preliminaries}
\label{sec:model:states}
\label{sec:model:preliminaries}
We first define basic syntax, \ourmodel{}'s configurations (i.e., system states), and its assembly-style programs.


\subsubsection{Storage and Labeled Data}
\label{sec:storage}
\paragraph{\bf Registers}
\ourmodel{} has a finite set of general-purpose registers $\GprSet$, a stack pointer $\SP$, a program  counter $\PC$, and a zero register $\ZR$.
We denote the set of all registers as $\RegSet = \GprSet \cup \{\SP,\PC,\ZR\}$.


\paragraph{\bf Values and security labels}\label{sec:model:labels}
\ourmodel{} computes on security-labeled data.
$\ValSet \subseteq \Z$ is the set of data values that registers and data memory locations can assume ($0 \in \ValSet$);
$\LabelSet = \{\Pub, \Sec\}$ is the set of security labels, where $\Pub$ and $\Sec$ mark a public (low) and secret (high) value, respectively.
$\LValSet = \ValSet \times \LabelSet$ denotes the set of \textit{labeled values}.
We use either subscript $v_l$ or pair notation $(v, l)$ to attach a label $l \in \LabelSet$ to value $v \in \ValSet$.
Given a function $o: \ValSet^n \to \ValSet$ over unlabeled values, we define the labeled equivalent $o_\LabelSet : \LValSet^n \to \LValSet$ as 
$o_\LabelSet((v_1, l_1), \ldots, (v_n, l_n)) = o(v_1, \ldots, v_n)_L$
where the out-label $L = \Sec$ iff \textit{any} in-label $l_i = \Sec$.
All arithmetic operations in \ourmodel{} propagate security labels in this way (\S\ref{sec:asm-other-inst}), as in prior work \cite{ctfoundations:pldi:20}.

\paragraph{\bf Memory Addresses}
We denote the set of mapped instruction and data addresses with $\InstMemSet, \DataMemSet \subseteq \ValSet$, respectively.
Defined by programs (\S\ref{sec:model:program}) when execution starts (\S\ref{sec:executions}), these maps are fixed thereafter.

\subsubsection{Configurations}    
\label{sec:model:configurations}
A configuration of \ourmodel{} is a five-tuple $C = (R, D, S, \CS, T)$, where:
\begin{itemize}[leftmargin=*]
    \item $R : \RegSet \to \LValSet$ is the labeled \textit{register file} contents.
    \item $D : \DataMemSet \to \LValSet$ is the labeled \textit{data memory} contents.
    \item $S \in (\DataMemSet \times \LValSet)^*$ is the \textit{speculative store set}, a list of (data address, labeled value) pairs.
    \footnote{$X^*$ denotes the set of all sequences of elements from set $X$.}
    \item $\CS \in \InstMemSet^*$ is the \textit{call stack}, i.e., a list of return addresses.
    \item $T \in \{\NT,\T\}$ is the \textit{transient execution bit}, which defines whether the configuration is transient ($T = \T$) or sequential ($T = \NT$). 
    Instructions that take transient transitions (\S\ref{sec:t-transition}) set $T \update \T$.
\end{itemize}

\label{sec:model:initial-configurations}
An \textit{initial configuration} has the form $C_0 = (R_0[\PC \update 0_\Pub], D_0, (), (), \NT)$.
$\ConfSet$ is the set of all configurations.

\subsubsection{Instructions, Programs, Security Policies}
\label{sec:model:program}
\ourmodel{} defines nine \textit{instructions}:
    \begin{align*}
    \InstSet = \,\, &\Jmp\,d \mid \Bnz\,r_\text{src},d \mid \MyCall\,r_\text{src} \mid \Ret \mid \Endcall \mid \Lfence \\ 
    &\Load \, [r_a+d],r_\text{dst} \mid \Store \, [r_a+d],r_\text{src} \mid \Op_o \, r_\text{dst}, \Vec{r_s}
    \end{align*} 
    where $d \in \Z$; $r_\text{src}, r_a, r_{s,i} \in \RegSet \setminus \PC$; and $r_\text{dst} \in \GprSet \cup \{\SP\}$.
   
A \textit{program} is a four-tuple $\Prog = (\InstMemSet, \DataMemSet, P, \InitConfSet)$, where $\InstMemSet$ and $\DataMemSet$ (\S\ref{sec:storage}) are the mapped instruction and data addresses, respectively, and: 
\begin{itemize}[leftmargin=*]
    \item $P : \InstMemSet \to \InstSet$ is the read-only \textit{instruction memory} contents.
    We write $I \mapsto \mathbb{I}$
    to mean $P(I) = \mathbb{I}$. 
    \item $\InitConfSet \subseteq \ConfSet$ are the initial configurations of the program. 
\end{itemize}


\noindent Each initial configuration $C_0 \in \InitConfSet$ of program $\Prog$ defines 
the initial memory and register contents, which implicitly specifies the \textit{security policy} (i.e., a labeling of all initial program values).
Our security-typeability requirement (\S\ref{sec:security-typeable}) constrains the security policy of initial configurations.

\subsection{\ourmodel{}'s Leakage Model}
\label{sec:leakage-model}
%
\paragraph{\bf Observations}
\label{sec:model:observations}
\label{sec:model:transmitters}
\ourmodel{} realizes the \ctprog{} leakage model, or $\LkgCT$ (\S\ref{sec:spectre-mitigations}), by defining a set of four transmitters. When they execute (\S\ref{sec:exec-model}), transmitters expose their sensitive operand(s) as an observation taken from \textit{observation set} $\ObsSet$:
$$\ObsSet = \, \ObsNone \mid \ObsBnz{v_l} \mid \ObsCall{v_l} \mid \ObsLoad{A_l} \mid \ObsStore{A_l}$$
Observation $\mathtt{bnz}\,\,v_l$ exposes the labeled condition of a conditional branch ($\Bnz$); $\mathtt{call}\,\,v_l$ exposes the labeled target of a call ($\MyCall$); 
$\mathtt{ld/st}\,\,A_l$ exposes the labeled address operand of a load/store ($\Load$/$\Store$). All other instructions are safe and expose the empty observation $\ObsNone$.

Return instructions ($\Ret$) are not transmitters, since they cannot leak \textit{new} secrets on processors with a SHSTK (\S\ref{sec:hardware-model}).
On such designs, a secret return address implies a secret $\PC$ at some prior call (the one that pushed it onto the return stack buffer or SHSTK).
A secret $\PC$ implies the execution of a prior $\MyCall$ or $\Bnz$ with a secret sensitive operand. Thus, the secret already leaked.



In \S\ref{sec:extension}, we describe how to extend \ourmodel{} with a fifth transmitter, division instructions ($\mathtt{DIV}$).
Without loss of generality, we omit $\mathtt{DIV}$ from core \ourmodel{}.
Our \tool{} implementation (\S\ref{sec:case-study}) \textit{does} assume $\mathtt{DIV}$s are transmitters.

%



\paragraph{\bf Declassification} \label{sec:model:declassification}
While \textit{labeled} transmitter operands are exposed as observations, \ourmodel{} declassifies transmitter operands (by assigning labels to $\Pub$ or stripping labels entirely) before using them to compute the next configuration (\S\ref{sec:model:semantics}).
Doing so eliminates \textit{secondary} (repeated) leaks of the same secret while retaining \textit{primary} (initial) leaks.
Declassification is sound (i.e., it does not result in missed Spectre leakage), since \tool{} ensures the program never passes a secret to a transmitter in the first place.

\subsection{\ourmodel{}'s Execution Model: Overview}
\label{sec:exec-model}
\ourmodel{} captures all possible speculative executions of programs running on a design that satisfies our hardware assumptions (\S\ref{sec:hardware-model}), using a
\textit{sequential semantics} ($\TransitionNT$, \S\ref{sec:nt-transition}) and a \textit{transient semantics} ($\TransitionT$, \S\ref{sec:t-transition}).

\subsubsection{Limitations of Prior Execution Models}
\label{sec:exec-model-limitations}
\label{sec:model:limitations}

\begin{table}[t]
    \centering\footnotesize
    \setlength\tabcolsep{2pt}
    \begin{tabular}{|c|c|c|c|c|c|c|c|c|}\hline
        \bf Execution model                                     & \textsc{pht} & \textsc{btb}   & \textsc{rsb}     & \textsc{stl}     & \textsc{psf}   & \textsc{ibt} & \textsc{shstk} & \textsc{rrsbad} \\\hline
        Guarnieri et al. \cite{guarnieri:spectector}            & \checkmark   &                &                  &                  &                &              &                &                 \\\hline
        Cauligi et al. \cite{ctfoundations:pldi:20}             & \checkmark   & $\checkmark^*$ & $\checkmark^{*-}$ & $\checkmark$     & \checkmark     &              &                & \checkmark      \\\hline
        Vassena et al. \cite{vassena:blade}                     & \checkmark   &                &                  &                  &                &              &                & \\\hline
        Fabian et al. \cite{speculative-execution-combinations} & \checkmark   &                & $\checkmark^{-}$ & $\checkmark^{-}$ &                &              &                & \checkmark      \\\hline
        \ourmodel{} (ours)                                      & \checkmark   & \checkmark     & $\checkmark$     & $\checkmark$     & $\checkmark^*$ & \checkmark   & \checkmark     & \checkmark      \\\hline
    \end{tabular}
    \setlength\tabcolsep{6pt}
    \caption{
        A comparison of formal speculative execution models.
        $\checkmark$ indicates the model captures that speculation primitive (\pht{}, \btb{}, \rsb{}, \stl{}, \psf{}), speculative CFI protection (\ibt{}, \shstk{}), or speculation control (\rrsbad{});
        $\checkmark^{-}$ indicates the model restricts the behavior of the speculation primitive (\S\ref{sec:model:limitations});
        $\checkmark^*$ indicates the speculation primitive is captured in an \textit{extension} of the core execution model (e.g., \S\ref{sec:extension} for \ourmodel{}).
    }
    \label{tab:model-comparison}
\end{table}



Several execution models have been proposed to support software analyses that detect or mitigate Spectre leakage in \ctprog{} programs (\Cref{tab:model-comparison})~\cite{cauligi:sok-spectre-sw-defense}.
Two limitations of these models motivate us to design something new. First, none capture the semantics of Intel CET protections (IBT and SHSTK).
Second, despite capturing various combinations of the five speculation primitives we target, these models make highly specific (and even unrealistic) assumptions about their microarchitectural implementations. 
For example, Fabian et al.~\cite{speculative-execution-combinations} model \stl{} by transiently ``deleting'' stores,
which misses Spectre leakage on realistic designs.\footnote{
Consider ``\texttt{a=1; a=0; if (a /*L1*/) b = secret; if (a /*L2*/) leak(b);}''
in which \texttt{secret} transiently leaks only if L1 skips over \texttt{a=0} (and reads from \texttt{a=1}), but L2 reads from \texttt{a=0}.
}
Cauligi et al.~\cite{ctfoundations:pldi:20} model \rsb{} using an unrealistic infinite-size stack structure and thus miss leakage due to RSB overfills.



\subsubsection{Sequential Transitions} 
\label{sec:nt-transition}
Given a configuration $C$ and instruction memory $P$, the \textit{sequential ($\TokNT$) transition} function $\TransitionNT(C, P) = (C', O)$ returns a successor configuration $C'$ and observation $O \in \ObsSet$.
We say that ``$C$ sequentially yields $C'$ exposing $O$,'' written $C \Yields{\TokNT}{O} C'$.
Each configuration has \textit{exactly one} sequential successor, as there is only one sequential execution of a program.

\subsubsection{Transient Transitions}
\label{sec:t-transition}
The \textit{transient ($\TokT$) transition} function $\TransitionT(C, P) \!\subseteq\! \ConfSet \!\times\! \ObsSet$ returns a \textit{set} of (configuration, observation) pairs,
capturing all transient execution steps that may arise on behalf of some speculation primitive (\S\ref{sec:background:speculation-primitives}). 
We say ``$C$ transiently yields $C'$ exposing $O$,'' written $C \Yields{\TokT}{O} C'$, if $(C', O) \in \TransitionT(C,P)$.

We say an instruction $I \in \InstMemSet$ is a \textit{speculation primitive} if there exists a configuration $C = (R, D, S, \CS, T)$ satisfying $R(\PC) = I$ that has both a transient transition (i.e., $\TransitionT(C, P) \neq \emptyset$) and a non-faulting sequential transition (i.e., $\TransitionNT(C, P) \neq (C, O)$, \S\ref{sec:model:accesses}).
In \ourmodel{}, speculation primitives are instances of $\Bnz$, $\MyCall$, $\Ret$, or $\Load$.
\textit{Speculative execution} encompasses both sequential and transient execution (\S\ref{sec:model:terminology}). Thus, we say $C$ (speculatively) yields $C'$ exposing $O$, written $C \Yields{}{O} C'$, if $C \Yields{\TokNT}{O} C'$ or $C \Yields{\TokT}{O} C'$.


\subsubsection{Traces}
\label{sec:executions}
\label{sec:traces}
A \textit{trace} of a program $\Prog = (\InstMemSet, \allowbreak \DataMemSet, \allowbreak P, \allowbreak \InitConfSet)$ from an initial configuration $C_0 \in \InitConfSet$ is the sequence of transitions $e = C_0 \Yields{}{O_0} \allowbreak C_1 \Yields{}{O_1} \allowbreak \cdots \allowbreak \Yields{}{O_{n-1}} C_n.$
We use subscript-$i$ notation to denote components of a configuration $C_i = (R_i, D_i, S_i, \CS_i, T_i)$ at step $i$ of a trace.
We say $I_i = R_i(\PC)$, i.e., $I_i$ is the address of the instruction executing step $i$.
%
\label{sec:model:terminology}%
A trace $e = C_0 \Yields{}{O_0} \cdots \Yields{}{O_{n-1}} C_n$ is \textit{transient} (resp. \textit{sequential}) \textit{at step $i$} if $T_i = \T$ (resp. $T_i = \NT$). 
If $i$ is not specified, assume $i = n$, i.e., the last step in a trace.
We say an \textit{instruction} $I_i$ executes transiently (resp. sequentially) if the trace is transient (resp. sequential) at step $i+1$.
\subsection{\ourmodel{}'s Instruction Execution Semantics}
\label{sec:model:semantics}
We specialize 
\ourmodel{}'s transition functions per instruction.\footnote{%
We use color to distinguish \textit{\colornt sequential} and \textit{\colort transient} semantics, so readers should view the paper in color for a better experience.}
Assume current configuration $C \!=\! (R, D, S, \CS, T)$, program $\Prog \!=\! (\InstMemSet, \DataMemSet, P, \InitConfSet)$, and current instruction address $I \!=\! R(\PC)$.
If $I$ is unmapped ($I \!\not \in\! \InstMemSet$), the processor halts (i.e., $\TransitionNT(C,P) \!=\! (C, \ObsNone)$ and $\TransitionT(C,P) \!=\! \emptyset$).



\subsubsection{Conditional Branches (PHT)}
\label{sec:model:bnz}
The $\Bnz$ instruction captures the \pht{} speculation primitive.
Its \textit{sequential} transition takes the branch by adding a fixed displacement $d$ to the $\PC$ if the value in register $r$ is nonzero; otherwise, it falls through to the next instruction.
The \textit{transient} transition does the opposite, modeling a branch misprediction.
Both transitions expose the labeled branch condition via observation $\ObsBnz{R(r)}$ and then declassify it before computing the branch target. \\
\semantics{
    \frac{
        \begin{aligned}
            &\textsc{Cond. Br.} \quad I \mapsto \Bnz\,r,d \qquad \colorspec v_l = R(r) \quad \colorspec c = (v \neq 0) \quad \colorspec O = \ObsBnz{v_l} \\
            &\colornt I'_\TokNT = I + 1 + {\colorspec c} \cdot d \quad\,\,
            \colort I'_\TokT = I + 1 + (1 \!-\!  {\colorspec c}) \cdot d \quad\,\,
            \colort R'_\TokT  = R[\PC \update (I'_\TokT)_\Pub] \\
            &\colornt R'_\TokNT = R[\PC \update (I'_\TokNT)_\Pub] \,\,
            \colort C'_\TokT = C[R \update R'_\TokT; T \update \T] \,\,
            \colornt C'_\TokNT = C[R \update R'_\TokNT]
        \end{aligned}
    }{
        \begin{aligned}
            \colornt \TransitionNT(C, P) \colornt= (C'_\TokNT, {\colorspec O}) \quad
            \colort \bm{\TransitionT(C, P)} \colort= \{(C'_\TokT, {\colorspec O})\}
        \end{aligned}
    }
}

\subsubsection{Calls (BTB, IBT)}
\label{sec:model:call}
The $\MyCall$ and $\Endcall$ instructions together model the \btb{} speculation primitive and Intel's IBT.
$\MyCall$'s \textit{sequential} transition checks whether the target instruction at $I'_\TokNT \!=\! R(r)$ is an $\Endcall$.
If so, it jumps to $I'_\TokNT$; 
if not, execution halts due to a CFI violation.
A \textit{transient} transition may jump to \textit{any} $I'_\TokT \mapsto \Endcall$ in the program ($I'_\TokT \!\neq\! I'_\TokNT$).
All transitions push the return address $I_r = I + 1$ onto the callstack $\CS$, expose the (sequential) target via observation $\ObsCall{R(r)}$, and declassify the resulting $\PC$.
\\
\semantics{
    &\textsc{Call} \quad I \mapsto \MyCall \, r \qquad \colornt (I'_\TokNT)_l = R(r) \qquad \colorspec 
    I_r = I+1 \\
    &\frac{
        \begin{aligned}
            &\colort \mathbf{I'_\TokT} = \{ I'_t \mid I'_t \mapsto \Endcall \} \setminus {\colornt I'_\TokNT}  \quad
            \colorspec \CS' = (\CS :: I_r) \quad
            O = \ObsCall{R(r)} \\
            &\colornt R'_\TokNT = R[\PC \update (I'_\TokNT)_\Pub] \quad
            \colort \bm{\mathbf{R'_\TokT}} = \{ R[\PC \update (I'_\TokT)_\Pub] \mid I'_\TokT \in \mathbf{I'_\TokT} \}
        \end{aligned}
    }{
        \begin{aligned}
            &\colornt \TransitionNT(C, P) = (C[R \update R'_\TokNT ;\, \CS \update {\colorspec \CS'}], {\colorspec O}) \\
            &\colort \bm{\TransitionT(C, P)} = \{ (C[R \update R'_\TokT ;\, \CS \update {\colorspec \CS'};\, T \update \T], {\colorspec O}) \mid R'_\TokT \in \mathbf{R'_\TokT} \}
        \end{aligned}
    }
} \\
The sole purpose of the $\Endcall$ instruction is to mark valid targets of calls; it executes as a no-op.
\\
\semantics{
    &\frac{
        \textsc{Endbr} \quad I \mapsto \Endcall \qquad  \colornt R'_\TokNT = R[\IncPC]
    }{
        \colornt \TransitionNT(C, P) = (C[R \update R'_\TokNT], \ObsNone) \qquad
        \colort \bm{\TransitionT(C, P)} = \emptyset
    }
} 

\subsubsection{\bf Returns (\rsb{}, SHSTK)}
\label{sec:model:ret}
The $\Ret$ instruction captures the RSB speculation primitive.
If the callstack is empty, the processor halts execution.
Otherwise, it pops the sequential return address $I'_\TokNT$ off the callstack.
The \textit{sequential} transition jumps to $I'_\TokNT$; the \textit{transient} transition jumps to \textit{any callsite in the program},
i.e., any instruction address $I'_\TokT$ immediately following a $\MyCall$.
\\
\semantics{
    &\frac{
        \begin{aligned}
            &\textsc{Return} \,\, I \mapsto \Ret \quad
            \colorspec (\CS' :: {\colornt I'_\TokNT}) = \CS \quad
            \colornt R'_\TokNT = R[\PC \update (I'_\TokNT)_\Pub)] \\
            &\colort \bm{\mathbf{R'_\TokT}} \!=\! \{ R[\PC \update (I'_\TokT)_\Pub)] \mid I'_\TokT-1 \mapsto \MyCall \} \setminus {\colornt R'_\TokNT} \quad
            \colorspec C' \!=\! C[\CS \gets \CS']
            \\
            &\colornt C'_\TokNT = {\colorspec C'}[R \update R'_\TokNT] \quad
            \colort \bm{\mathbf{C'_\TokT}} = \{ {\colorspec C'}[R \update R'_\TokT; T \update \T] \mid R'_\TokT \in \mathbf{R'_\TokT} \}
        \end{aligned}
    }{
        \begin{aligned}
            \colornt \TransitionNT(C, P) &\colornt= \begin{cases}
                (C, \ObsNone) &\mbox{if } \CS \mbox{ empty} \\
                (C'_\TokNT, \ObsNone) &\mbox{otherwise}
            \end{cases} \,\,
            \colort \bm{\TransitionT(C, P)} = \mathbf{C'_\TokT} \times \{\ObsNone\}
        \end{aligned}
    }
}
\\
\ourmodel{} captures a wider range of transient behaviors due to \rsb{} than prior work (\S\ref{sec:model:limitations}).
Our approach captures all return predictor implementations that can predict previously seen return addresses only (e.g., return stack buffers).

\paragraph{\bf Shadow Stack}
$\MyCall/\Ret$ semantics and the callstack $\CS$ together model Intel's SHSTK.
By maintaining the call stack outside of data memory, \ourmodel{} prevents stores from overwriting return addresses, transiently or otherwise.
When \ourmodel{} executes a $\Ret$, it ensures that \textit{sequential} control returns to the correct callsite; \textit{transient} control may return to 
another, incorrect callsite.

\subsubsection{Loads and Stores (STL)}
\label{sec:model:accesses}
A $\Load/\Store$ instruction computes its labeled effective address $A_l \!=\! R(r_a) \!+\! d_\Pub$ and exposes it via observation $\mathtt{ld}/\mathtt{st}\,{A_l}$.
Before performing the memory access, ${A_l}$ is declassified by stripping its label to produce $A$ (recall that data addresses are unlabeled, \S\ref{sec:model:preliminaries}).

\paragraph{\bf Store}
Consider a store ($\Store$).
If $A$ is unmapped, then it has one \textit{sequential} transition that halts execution (modeling a fault) and one \textit{transient} transition that executes as a no-op (modeling out-of-order execution).
If $A$ is mapped, then the store has one \textit{sequential} transition that appends the (declassified address, value) pair $(A, R(r))$ to the speculative store set and no transient transitions.
Data memory $D$ is not updated until a speculation fence executes (\S\ref{sec:model:semantics:lfence}).
\\
\semantics{
    &\frac{
        \begin{aligned}
            &\textsc{Store} \quad I \mapsto \Store \, [r_a+d],r \qquad 
            \colorspec A_l = R(r_a) + d_\Pub \quad
            \colorspec O = \ObsStore{A_l} \\
            &\colornt S'_\TokNT = (S :: ({\colorspec A}, R(r))) \qquad
            \colorspec R' = R[\IncPC]
            \\
            &\colornt C'_\TokNT = C[R \gets {\colorspec R'}; S \gets S'_\TokNT]
            \qquad
            \colort \mathbf{C'_\TokT} = \{(C[R \gets {\colorspec R'}; T \gets \T], {\colorspec O})\}
        \end{aligned}
    }{
            \colornt \TransitionNT(C, P) = \begin{cases}
                (C'_\TokNT, {\colorspec O}) \hspace{-2mm}&\mbox{if } A \in \DataMemSet \\
                (C, {\colorspec O}) &\mbox{if } A \not\in \DataMemSet
            \end{cases}
            \,\,
            \colort \bm{\TransitionT(C, P)} = \begin{cases}
                \emptyset &\mbox{if } A \in \DataMemSet \\
                \mathbf{C'_\TokT} \hspace{-2mm}&\mbox{if } A \not \in \DataMemSet
            \end{cases}
    }
}
\paragraph{\bf Load}
Consider a load ($\Load$), whose semantics captures the \stl{} speculation primitive.
If $A$ is unmapped, then it has one \textit{sequential} transition that halts execution (modeling a fault)
and one \textit{transient} transition that assigns zero to its output register (modeling recent Intel processors, including our workstation~\cite{lvi}).
If $A$ is mapped, then the load has one \textit{sequential} transition that reads from the most recent same-address store.
It may have more than one transient transition, each of which reads from a distinct same-address store in the speculative store set \textit{or} from data memory. \\
\semantics{
    &\frac{
        \begin{aligned}
            &\textsc{Load} \quad I \mapsto \Load \, [r_a+d], r \quad
            \colorspec  \left( \left(A_1, (v_1)_{l_1} \right), \ldots, \left(A_n, (v_n)_{l_n} \right) \right) = S \\
            &\colorspec A_l = R(r_a) + d_\Pub \quad
            \colornt D_\TokNT = D[{\colorspec A_1 \update (v_1)_{l_n}; \cdots; A_n \update (v_n)_{l_n}}] \\
            &\colornt v_\TokNT = D_\TokNT({\colorspec A}) \qquad
            \colort \mathbf{v_\TokT} = \{D({\colorspec A})\} \cup  
            \{ {\colorspec (v_i)_{l_i}} \mid {\colorspec A_i} = {\colorspec A} \}
            \qquad \colorspec O = \ObsLoad{A_l}
            \\
            &\colornt R'_\TokNT = R[\IncPC ;\, r \update v_\TokNT]  \qquad
            \colort \mathbf{R'_\TokT} = \{ R[\IncPC ;\, r \update v_\TokT] \mid v_\TokT \in \mathbf{v_\TokT} \}%
        \end{aligned}
    }{
        \begin{aligned}
            \colornt \TransitionNT(C, P) &\colornt = \begin{cases}
                (C[R \update R'_\TokNT], {\colorspec O})  &\mbox{if } A \in \DataMemSet \\
                (C, {\colorspec O}) &\mbox{if } A \not \in \DataMemSet
            \end{cases} \\
            \colort \bm{\TransitionT(C, P)} &\colort= \begin{cases}
                \{ (C[R \update R'_\TokT; T \update \T], {\colorspec O}) \mid R'_\TokT \in \mathbf{R'_\TokT} \} &\mbox{if } A \in \DataMemSet \\
                \{ (C[R \update R[\PC\texttt{++}; r \gets 0_\Pub]; T \update \T], {\colorspec O}) \} \hspace{-2mm}&\mbox{if } A \not \in \DataMemSet
            \end{cases}
        \end{aligned}
    }
} \\
As with \rsb{}, \ourmodel{} models \stl{} in an \textit{implementation-agnostic} manner. It captures a 
\textit{superset} of the STL behaviors captured by prior execution models (\S\ref{sec:model:limitations}). 

\paragraph{\bf \hspace{-5mm} NCA/CA accesses}
We partition memory accesses into two classes which
differ in their ability to introduce secrets into transient computation, as we show in \S\ref{sec:properties}--\ref{sec:tool}.
\begin{definition}
\label{def:constant-address}
        A memory access $I \mapsto \Load/\Store\,[r_a+d],r$ is \textit{constant-address (CA)} if $r_a \in \{\ZR, \SP\}$; otherwise, $I$ is \textit{non-constant-address (NCA)}.
        Furthermore, if $r_a = \ZR$, $I$ is a CA \textit{global access}, and displacement $d$ is the fixed address of a global variable. 
        If $r_a = \SP$, $I$ is a CA \textit{stack access}, and $d$ gives the frame offset of a stack variable.
\end{definition}

\subsubsection{Speculation Fence}\label{sec:model:semantics:lfence}
\ourmodel{} features a \textit{speculation fence} instruction, $\Lfence$, which halts transient execution but allows sequential execution to proceed.
Its semantics depends on whether the current configuration is sequential ($T=\NT$) or transient ($T=\T$).
If it is \textit{sequential}, $\Lfence$ drains all stores in the speculative store set $S$ to data memory, thereby 
restricting the set of stores that loads may transiently forward data from (\S\ref{sec:model:accesses}).
If it is \textit{transient}, execution halts.
\\
\semantics{
    &\frac{
        \begin{aligned}
            &\textsc{Speculation Fence} \quad I \mapsto \Lfence \qquad
            \colornt R'_\TokNT = R[\IncPC] \\
            &\colornt \left( \left(A_1, (v_1)_{l_1} \right), \ldots, \left(A_n, (v_n)_{l_n} \right) \right) = S \\
            &\colornt D'_\TokNT = D[A_1 \update (v_1)_{l_1} ; \cdots ; A_n \update (v_n)_{l_n}] \\
            &\colornt C'_\TokNT = C[R \update R'_\TokNT ;\, D \update D'_\TokNT ;\, S \update ()]
        \end{aligned}
    }{
        \colornt \TransitionNT(C, P) = \begin{cases}
            (C, \ObsNone) &\mbox{if } T = \T \\
            (C'_\TokNT, \ObsNone) &\mbox{otherwise}
        \end{cases} \qquad
        \colort \bm{\TransitionT(C, P)} = \emptyset
    }
}
%

\subsubsection{Other Instructions}
\label{sec:asm-other-inst}
\label{sec:model:other}
$\Jmp$ \textit{sequentially} jumps to $\PC \!+\! d \!+\! 1$.
$\Op_o$ represents a class of arithmetic operations (e.g., $\mathtt{MOV}$, $\mathtt{ADD}$), parameterized by function $o \!: \ValSet^n \!\to\! \ValSet$. $\Vec{r_s}$ is a list of input registers; $r$ is the output register.
Its \textit{sequential} transition assigns $r \update o_\LabelSet(R(r_{s,1}), \ldots, R(r_{s,n}))$.
Neither $\Jmp$ nor $\Op$ have transient transitions.
We omit transition rules here for brevity, but provide them in \S\ref{app:semantics}.

\section{Characterizing Spectre Leakage in Static Constant-Time Programs}
\label{sec:properties}

\subsection{Speculative Constant-Time}
\label{sec:properties:sct}

We formalize Spectre leakage of secrets in \ctprog{} programs
as a violation of the \textit{speculative constant-time (SCT)} security property from prior work~\cite{vassena:blade, ctfoundations:pldi:20,binsec-haunted}.
A program satisfies SCT on \ourmodel{} if there does not exist a trace that exposes secret-dependent observations.
An \textit{SCT violation} is a trace that exposes some secretly-labeled
observation $O$.
Cauligi et al.~\cite{ctfoundations:pldi:20} show that all Spectre attacks manifest as SCT violations.
\begin{definition}
\label{def:speculative-constant-time}
A program $\Prog$ is SCT iff for all traces $e = C_0 \Yields{}{O_0} \cdots \Yields{}{O_n} C_{n+1}$, no observation $O_i$ is labeled secret for any $0 \leq i \leq n$.
\end{definition}

\subsection{Limitations of Traditional \ctprog{}}
\label{sec:properties:ct-limitations}
\begin{definition}
\label{def:constant-time}
A program $\Prog$ is \textit{constant-time (CT)} iff for all \textit{sequential} traces $e = C_0 \Yields{\TokNT}{O_0} \cdots \Yields{\TokNT}{O_n} C_{n+1}$, no observation $O_i$ is labeled secret for any $0 \leq i \leq n$.
\end{definition}

We observe that two limitations of \Cref{def:constant-time} prevent efficient mitigation of SCT violations in \ctprog{} programs.

First, a \ctprog{} program may contain \textit{latent \ctprog{} violations}, such as ``\texttt{if (0) x = A[secret]},'' that do not manifest in any sequential trace of the program.
Such patterns exhibit compile-time \textit{security-type violations} by assigning a secret value (e.g., \texttt{secret}) to a public variable or supplying it to a public operand (e.g., the index operand in \texttt{A[secret]}).
Existing \ctprog{} compilers~\cite{FactLanguage, ct-wasm} detect such security-type violations during compile-time typechecking. 
We introduce a \textit{security-typeability} requirement (\S\ref{sec:security-typeable}) for \ourct{} programs, which formalizes the security-type guarantees of a \ctprog{} program that has passed such typechecks. 

%

Second, \ctprog{} programs use a \textit{Spectre-unaware calling convention} that permits
passing secrets by value during calls and returns. This is \textit{inherently unsafe} in the presence of \btb{} or \rsb{} mispredictions, which
can easily leak
secret arguments:
the call (resp. return) need only transiently jump to a procedure (resp. callsite) that expects a \textit{public} argument in a register, which it subsequently leaks by supplying it to a transmitter.
Secret argument leakage is difficult to mitigate efficiently: it forces a mitigation to conservatively assume \textit{all arguments and return values} may be transiently secret.
No combination of currently deployed mitigations for the CT leakage model can fully protect against this kind of leakage.\footnote{Even if \textsc{lfence}/\textsc{slh}, \textsc{retpoline}, and \textsc{ssbd} are simultaneously enabled, return values may \textit{still} leak via RSB speculation.}

\subsection{Static Constant-Time Programming}
\label{sec:propreties:ourct}




\begin{definition}\label{def:ourct}
    A program $\Prog$ is \textit{static constant-time (\ourct{})} iff it satisfies \ctprog{} (\Cref{def:constant-time}) as well as \textbf{WF\label{item:contract:wf}} (\textit{well-formed, \S\ref{sec:properties:ourct:wf}}) and \textbf{TYP\label{item:contract:tt}} (\textit{security-typeable, \S\ref{sec:security-typeable}}).
    
\end{definition}

We propose \textit{static constant-time (\ourct{}) programming}, a strengthening of \ctprog{} programming which overcomes the limitations in \S\ref{sec:properties:ct-limitations}.
%
We find that existing \ctprog{} programs generally satisfy \ourct{} when compiled with \texttt{-O3} optimizations and a carefully selected set of compiler flags.
\S\ref{app:unsafe-compiler} provides a full list of these flags for LLVM, which disable optimizations that are incompatible with CTS, like stack slot sharing and argument promotion. 
Using these flags, \textit{all} of the cryptographic primitives we benchmark in \S\ref{sec:case-study} satisfy \ourct{} without requiring \textit{any} source modifications.

\subsubsection{Well-Formed}
\label{sec:properties:ourct:wf}
\label{sec:well-formed}
\textit{Well-formedness} captures the important structural and behavioral properties and metadata of compiled code 
that can be leveraged by a compiler-based software mitigation like \tool{} (\S\ref{sec:tool}).

\label{sec:successors}
The \textit{intraprocedural successors} $\Succs(I) \subseteq \InstMemSet$ of an instruction address $I \in \InstMemSet$ are defined as:
$$
\Succs(I) = \begin{cases}
    \emptyset &\mbox{if } I \mapsto \Ret \\
    \{I+1,I+1+d\} &\mbox{if } I \mapsto \Bnz\,r,d \\
    \{I+1+d\} &\mbox{if } I \mapsto \Jmp\,d \\
    \{I+1\}   &\mbox{otherwise.}
\end{cases}
$$

\label{sec:properties:procedures}
    A \textit{procedure} $F \!\subseteq\! \InstMemSet$ is a set of instruction addresses that
    (i) contains exactly one $\Endcall$ instruction, denoting the entrypoint; 
    (ii) is closed under intraprocedural succession; 
    (iii) has a prologue on entry allocating a stack frame of fixed size $\FrameSize{F}$; 
    (iv) has an epilogue deallocating the frame on return; and
    (v) has only in-bounds stack accesses into its frame.
\Cref{fig:fps-example} shows an example procedure.
$F_i$ denotes the procedure containing the instruction $I_i$ executing in $i$th step of a trace.

\begin{definition}[Well-formed programs \textbf{\RuleWF{}}]
\label{def:well-formed}
     A program $\Prog$ is \textit{well-formed} if it satisfies the following:
     
    \begin{enumerate}[leftmargin=*]
        \item \textit{Procedures only}: 
        $\Prog$ consists of only procedures.
        \label{item:wf:procedures}
        

        \item \textit{Calling convention:}
        $\Prog$ defines a \textit{calling convention} $\ArgFn \!: \InstMemSet \!\pto\! \GprSet$, a map from $\mathtt{CALL}$/\allowbreak$\mathtt{RET}$ instructions to the set of registers used to pass arguments.
        \label{item:wf:cc}
    
        \item \textit{Data stack}:
        $\Prog$ has a \textit{data stack} $\Stack \subseteq \DataMemSet$, represented as a set of data addresses, such that
        (a) the stack pointer $\SP$ is always public and always points into $\Stack$ in all traces; 
        (b) no global variables are in stack memory;\footnote{I.e., no CA global loads/stores access the stack $\mathit{DS}$.} and 
        (c) $\Stack$ is zero-initialized in all initial configurations.\label{item:wf:stack}
        \item \textit{No callee-saved registers:}
        No general-purpose registers are preserved across procedure calls.
        \label{item:wf:live}
        
        \item \textit{No segfaults}:
        No sequential traces access an unmapped data address outside the stack $\Stack$.
        \label{item:wf:segfault}
    \end{enumerate}
\end{definition}

\subsubsection{Security-Typeable}
\label{sec:security-typeable}
Intuitively, a well-formed program $\Prog$ is security-typeable if each variable can be statically typed with a security label that is never violated at runtime.\footnote{\textit{Security-typeable} is implicitly defined with respect to $\Prog$'s initial configuration set $\InitConfSet$, which defines $\Prog$'s security policy (\S\ref{sec:model:program}).}
Note that \tool{} does not require such a security labeling to be explicitly provided as input; \textit{\tool{} does not require any program annotations whatsoever}.
Instead, \tool{} relies on the following properties afforded by security-typeable \ourct{} programs to infer an upper bound on what program variables may hold secret values (i.e., secretly-labeled values, \S\ref{sec:model:labels}) at runtime.


\begin{definition}[\textbf{\RuleTT{}}]
\label{def:security-typeable}
    A program $\Prog$ is \textit{security-typeable} if there exists 
        (i) a \textit{global variable typing} $\STGlob \!: \mathcal{M}_G \!\to\! \LabelSet$ where $\mathcal{M}_G \!\subseteq\! \DataMemSet$ is the set of data addresses accessed by CA global accesses (\Cref{def:constant-address});
        (ii) a \textit{stack variable typing} $\STStk{F} \!: [0, \FrameSize{F}) \!\to\! \LabelSet$ for each procedure $F$ (recall $\FrameSize{F}$ is $F$'s frame size); and
        (iii) a \textit{register variable typing} $\STReg{F} \!: \RegSet \times F \!\to\! \LabelSet$ for each procedure $F$.
    The typings must also pass the following typechecking rules (where $I$ is an instruction in some procedure $F$):
    \begin{enumerate}[leftmargin=*]
        \item \textit{Conservative:}\label{item:tt:conservative}
        No security type is violated in any sequential trace---i.e., no publicly-typed global, stack, or register variable ever holds a secret value, 
        and no secret values are read from or written to a publicly-typed stack variable.
        
        \item \textit{Transmitters:}\label{item:tt:xmit}
        All sensitive register operands of transmitters (\S\ref{sec:model:transmitters}) are publicly-typed in $\STReg{F}$.

        \item \textit{Load consistency:}\label{item:tt:load-consistency}
        The destination register of a CA load $I \mapsto \Load \, [\ZR/\SP + d], r$ from a secretly-typed global/stack variable is secretly-typed in $\STReg{F}$.

        \item \textit{Store consistency:}\label{item:tt:store-consistency}
        The source register of a CA store $I \mapsto \Store \, [\ZR/\SP + d], r$ to a publicly-typed global/stack variable is publicly-typed in $\STReg{F}$.

        \item \textit{Policy consistency:}\label{item:tt:policy-consistency}
        Publicly-typed global variables in $\STGlob$ contain public values in initial configurations of $\Prog$.

        \item \textit{Always-public registers:}\label{item:tt:always-public}
        The stack pointer $\SP$ and program counter $\PC$ are always publicly-typed in $\STReg{F}$.

        \item \textit{Register deps:}\label{item:tt:dep}
        The output of an $\Op$ instruction is secretly-typed in $\STReg{F}$ iff any input is secretly-typed in $\STReg{F}$.
    
        \item \textit{No-op:}\label{item:tt:nop} 
        Types of registers unmodified by instruction $I$ do not change in $\STReg{F}$ from $I$ to $J \in \Succs(I)$. 

        \item \textit{Public arguments:}\label{item:tt:arg}
        All argument registers $\ArgFn(I)$ at each $I \mapsto \MyCall \mid \Ret$ are publicly-typed in $\STReg{F}$.
    \end{enumerate}
\end{definition}

\RuleTT{}.\ref{item:tt:conservative}--\ref{item:tt:xmit} imply a \ctprog{} program in the traditional sense (per \Cref{def:constant-time}, proven in \S\ref{app:sec:cts->ct}).
\RuleTT{}.\ref{item:tt:conservative}--\ref{item:tt:nop} codify the static security properties of \ctprog{} programs that prior software-based mitigations implicitly assume of their input~\cite{vassena:blade} and CT compilers guarantee of their output~\cite{FactLanguage}, resolving the first limitation in \S\ref{sec:properties:ct-limitations}.
\RuleTT{}.\ref{item:tt:arg}
resolves the second limitation in \S\ref{sec:properties:ct-limitations}
by requiring that \textit{all secrets arguments be passed by reference} rather than by value; that is, one must pass a public pointer to a secret rather than the secret value itself.


\subsection{\OurPrimitives{} in a \ourct{} Program}

We now provide a complete characterization of Spectre leakage in \ourct{} programs, which gives rise to a powerful Spectre mitigation approach (\S\ref{sec:complete-spectre-mitigation}).

\subsubsection{Dynamic DFG}
\label{sec:dynamic-dfg}
The \textit{dynamic data-flow graph (DFG)} for a trace $e$ is a directed acyclic graph where nodes are register-step pairs $(r, i)$ and edges $(r,i) \!\DynDep{e}\! (r',j)$ encode direct \textit{dynamic} register or memory dependencies in the trace as follows:
\begin{itemize}[leftmargin=*]
    \item \textit{No-op:} $(r,i) \DynDep{e} (r,i+1)$ if $r$ is not modified by $I_i$.
    \item \textit{Register dependency:} $(r,i) \DynDep{e} (r',i+1)$ if $I_i \mapsto \Op \, r', \Vec{r_s}$ and $r \in \Vec{r_s}$.
    \item \textit{Memory dependency:}
    $(r,i) \DynDep{e} (r',j+1)$ if a store $I_i \mapsto \Store \, [r_a+d],r$ sources a later load $I_j \mapsto \Load \, [r_a'+d'],r'$.
    Unlike the prior register dependencies, memory dependencies can span many steps in the trace, since the store/load may not execute consecutively.
\end{itemize}
We say $(r',j)$ is \textit{dynamic-dependent} on $(r,i)$ if $(r,i) \!\DynDeps{e}\! (r',j)$, i.e., 
    $(r,i) \!\DynDep{e}\! \!\cdots \!\DynDep{e}\! (r',j)$.
    We say $(r', j)$ is \textit{dynamic-dependent on a load} $I_i \mapsto \Load \, [r_a\texttt{+}d],r$
    if $(r,\underline{i+1}) \DynDeps{e} (r',j)$.

We use a one-step delay $(r,i+1)$
to reference the output of an instruction $I_i$ updating register $r$ (e.g., a load $I_i \mapsto \Load \, [r_a+d],r$).
This is because the updated register does not hold its new value until the \textit{next} configuration, $C_{i+1}$.


\subsubsection{Taint Primitives}
\label{sec:properties:ourprimitives}
\label{sec:ourprimitives}
\lstdefinestyle{intro}{
    language=c,
    basicstyle=\ttfamily\footnotesize,
    escapechar=|,
    frame=single
}

\newcommand{\secexpr}[1]{\textcolor{black}{\hl{#1}}}
\newcommand{\leakexpr}[1]{\textcolor{red}{#1}}
\newcommand{\prim}[1]{\underline{#1}}

\begin{figure}[t]
    \centering
    \footnotesize

    \begin{minipage}{0.45\linewidth}
        \begin{lstlisting}[style=intro, title={\footnotesize \PrimNCAL{}: publicly-typed \\ non-constant-address load}]
|\secexpr{x}| = |\prim{*p}|;
temp = |\leakexpr{A[\secexpr{x}]}|;
        \end{lstlisting}
        \vspace{-4mm}
        \begin{lstlisting}[style=intro, title={\footnotesize \PrimNCAS{}: secretly-typed \\ non-constant-address store}]
x = 0;
*p = |\secexpr{secret}|;
temp = |\leakexpr{A[\textcolor{black}{\prim{\secexpr{x}}}]}|;
        \end{lstlisting}
    \end{minipage}
    \hfill
    \begin{minipage}{0.45\linewidth}
        \centering
        \begin{lstlisting}[style=intro, title={\footnotesize \PrimStack{}: publicly-typed \\ uninitialized stack load}]
x = 0;
temp = |\leakexpr{A[\textcolor{black}{\prim{\secexpr{x}}}]}|;
        \end{lstlisting}
        \vspace{-4mm}
        \begin{lstlisting}[style=intro, title={\footnotesize \PrimArg{}: unexpectedly \\ secret argument}]
|\secexpr{y}| = |\secexpr{secret}|;
|\prim{f()}|;
qux(int |\secexpr{x}|):
  temp = |\leakexpr{A[\secexpr{x}]}|;
        \end{lstlisting}
    \end{minipage}
    
    \caption{Exactly four kinds of \textit{\ourprimitives{}} introduce transient security type violations in \ourct{} programs, given our hardware model (\S\ref{sec:hardware-model}).
    \Ourprimitives{} can be enabled by \textit{any} speculation primitive.
    \prim{\Ourprimitives{}} are underlined; \secexpr{secrets} are highlighted; \leakexpr{transmitters} are red.
    }    
    \label{fig:primitives}
\end{figure}

\label{sec:security-type-violations}
\begin{definition}
\label{def:security-type-violation}
A \textit{security-type violation} is a register-step pair $(r, i)$ where $r$ is publicly-typed at $I_i$ (i.e., $\STReg{F_i}(r, I_i) = \Pub$, \Cref{def:security-typeable}) but $r$ holds a secret value at step $i$ (i.e., $R_i(r) = v_\Sec$, \S\ref{sec:model:labels}). 
\end{definition}

\begin{definition}[\Ourprimitives{}]
\label{def:ourprimitives}
    Instruction $I_i$ executing at step $i$ in trace $e$ is a \textit{\ourprimitive{}} if there is a security-type violation at step $i+1$ (\Cref{def:security-type-violation}) that is not dynamic-dependent (\S\ref{sec:dynamic-dfg}) on any prior security-type violations.
\end{definition}

Intuitively, \Cref{def:ourprimitives} says a \textit{\ourprimitive{}} is an instruction whose execution introduced a \textit{new} security-type violation
into the computation, since no inputs to $I_i$ violated their security types but the \textit{output} of $I_i$ did.
Now, we prove that \ourct{} programs in \ourmodel{} contain exactly \textit{four classes} of \ourprimitives{} (see \Cref{fig:primitives}).

\begin{theorem}[\Ourprimitives{}]
\label{thm:origin}
Every \ourprimitive{} $I_i$ in any trace $e$ can be classified as one of the following \textit{transient instructions} (with the register violating its security type at step $i+1$ in parentheses):
\begin{itemize}[leftmargin=*]
    \item \textbf{\PrimNCAL{}:} a transient \underline{NCA} \underline{l}oad (output register).\label{item:origin:ncal}
    \item \textbf{\PrimNCAS{}:} a transient CA load that reads from a transient \underline{NCA} \underline{s}tore (output register).\label{item:origin:ncas}
    \item \textbf{\PrimStack{}:}\label{item:origin:stack} a transient CA \underline{st}ac\underline{k} \underline{l}oad (output register). 
    \item \textbf{\PrimArg{}:} a transient $\MyCall{}/\Ret{}$ (\underline{n}on-\underline{arg}ument register).\label{item:origin:arg}
\end{itemize}
\end{theorem}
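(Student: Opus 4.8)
The plan is to fix an arbitrary taint primitive $I_i$ in a trace $e$, let $r$ be a register witnessing the fresh security-type violation at step $i+1$ (one exists by \Cref{def:ourprimitives}), and classify $I_i$ by a case analysis on the instruction $I_i$ and on whether it modifies $r$. I would first record two preliminary facts. (i) \emph{Taint primitives are transient}: since $(r,i+1)$ is a violation and \RuleTT{}.\ref{item:tt:conservative} forbids any violation in the sequential prefix of a trace (and once a trace leaves sequential execution it never returns), we must have $T_{i+1} = \T$, so $I_i$ executes transiently. (ii) \emph{Data memory reflects only sequential execution}: $D$ is modified only by a \emph{sequential} $\Lfence$ (a transient $\Lfence$ halts), and a mapped $\Store$ never modifies $D$; hence, combined with \RuleTT{}.\ref{item:tt:conservative}, \RuleTT{}.\ref{item:tt:store-consistency}, and \RuleTT{}.\ref{item:tt:policy-consistency}, every publicly-typed global variable holds a public value in $D_i$ for all $i$ --- a property that, crucially, fails for stack slots, whose typing $\STStk{F}$ is per-procedure.

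If $I_i$ does \emph{not} modify $r$, then $(r,i) \DynDep{e} (r,i+1)$ by the no-op edge and $r$ holds the same (secret) value at step $i$; freshness forces $(r,i)$ not to be a violation, so $r$ is secretly-typed at $I_i$ in $F_i$ yet publicly-typed at $I_{i+1}$ in $F_{i+1}$. By \RuleTT{}.\ref{item:tt:nop} this is impossible whenever $I_{i+1} \in \Succs(I_i)$ and $F_{i+1} = F_i$; inspecting the transition rules, every transient transition stays inside $\Succs(I_i)$ within $F_i$ \emph{except} a transient $\MyCall$ (which lands at the $\Endcall$ of a different procedure) and a $\Ret$ (whose $\Succs$-set is empty). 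Also $r \neq \PC$ (always publicly-typed by \RuleTT{}.\ref{item:tt:always-public} and assigned a public value by every instruction), and $r \notin \ArgFn(I_i)$ by \RuleTT{}.\ref{item:tt:arg}. Hence $I_i$ is \PrimArg{} with $r$ a non-argument register. This also disposes of $\Jmp$, $\Store$, $\Endcall$, and $\Lfence$, which modify only $\PC$ and therefore always fall into this branch (for $r \neq \PC$), yielding a contradiction.

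If $I_i$ \emph{does} modify $r$, then $r \neq \PC$, so $r$ is the destination register and $I_i$ is an $\Op$ or a $\Load$. The $\Op$ case is excluded: the output is secret only if some input is secret, and then \RuleTT{}.\ref{item:tt:dep} (with $r$ publicly-typed at $I_{i+1} = I_i+1 \in \Succs(I_i)$) forces that input to be publicly-typed yet secret-valued at step $i$ --- a prior violation on which $(r,i+1)$ depends via a register edge, contradicting freshness. So $I_i$ is a transient $\Load$. A transient NCA load is \PrimNCAL{} and we are done; a transient CA \emph{stack} load is \PrimStack{} and we are done. The remaining case is a transient CA \emph{global} load from address $A$, reading either $D_i(A)$ or a same-address entry of $S_i$. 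Reading $D_i(A)$ gives a contradiction: by fact (ii) the value is public if the global is publicly-typed, and by \RuleTT{}.\ref{item:tt:load-consistency} $r$ is secretly-typed at $I_{i+1}$ if it is secretly-typed --- either way $(r,i+1)$ is not a violation. Reading an entry $(A,v) \in S_i$ appended by a store $I_j$ ($j<i$) with source $r'$, so $v = R_j(r')$: $v$ must be secret (else no violation), so $(r',j)$ holds a secret and the memory-dependency edge gives $(r',j) \DynDep{e} (r,i+1)$; freshness forces $r'$ to be secretly-typed at $I_j$, which by \RuleTT{}.\ref{item:tt:store-consistency} (and \RuleTT{}.\ref{item:tt:load-consistency} in case $A$ is secretly-typed) rules out $I_j$ being a CA store, and by \RuleTT{}.\ref{item:tt:conservative} rules out $I_j$ being a sequential store; hence $I_j$ is a transient NCA store and $I_i$ is \PrimNCAS{}.

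I expect the \PrimNCAS{} sub-case to be the main obstacle, since it is the only one requiring a dependency to be chased \emph{through memory}: one must link the transient CA load to its source store via the memory-dependency edge and then eliminate every store variant other than a transient NCA store, which leans on preliminary fact (ii) and on the asymmetry between globals (a single global typing, so \RuleTT{}.\ref{item:tt:conservative}/\ref{item:tt:policy-consistency} keep $D$ ``clean'') and stack slots (per-procedure typings, so a slot that is public in $F_i$ may legitimately hold stale secret data that another procedure typed secret and drained into $D$). Getting this asymmetry right is exactly what separates \PrimStack{} from \PrimNCAL{}/\PrimNCAS{} and explains why a CA \emph{global} load can never be a taint primitive.
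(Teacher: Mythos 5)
Your proof follows the same decomposition as the paper's (case split on whether $I_i$ modifies the violating register; \RuleTT{}.\ref{item:tt:nop} and \RuleTT{}.\ref{item:tt:arg} for \PrimArg{}; \RuleTT{}.\ref{item:tt:dep} to eliminate $\Op$; then the NCA/CA-stack/CA-global load trichotomy), and your two preliminary facts---transience of taint primitives via \RuleTT{}.\ref{item:tt:conservative}, and the observation that $D$ only ever reflects sequentially executed stores---are correct and make explicit what the paper leaves implicit. However, there is one genuine gap, and it sits exactly where you predicted the difficulty would be: in the \PrimNCAS{} sub-case you dismiss the possibility that the forwarding store $I_j$ is a \emph{CA stack} store by appealing to \RuleTT{}.\ref{item:tt:store-consistency} (plus \RuleTT{}.\ref{item:tt:load-consistency} when the global at $A$ is secretly-typed). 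That appeal does not go through: \RuleTT{}.\ref{item:tt:store-consistency} constrains a CA stack store only when the \emph{stack slot} $d'$ is publicly-typed in the storing procedure's $\STStk{G}$, whereas \RuleTT{}.\ref{item:tt:load-consistency} for your load only speaks to the \emph{global} typing $\STGlob(A)$. These two typings are independent---precisely the globals-versus-stack asymmetry you highlight in your closing paragraph---so a transient CA stack store into a slot that $G$ types secret, whose effective address happened to coincide with the publicly-typed global $A$, is excluded by none of the typing rules you cite, and it would hand your CA global load a secret without any prior security-type violation to blame, wrecking the freshness contradiction.

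What actually closes this case is well-formedness, not typing: \RuleWF{}.\ref{item:wf:stack} guarantees that $\SP$ points into $\Stack$ in every trace (transient included), in-bounds frame accesses keep a CA stack store's effective address inside $\Stack$, and no CA global access touches $\Stack$; hence a CA stack store can never alias the CA global load's address $A$ in the first place. This is exactly the step the paper discharges with its terse citation of \RuleWF{}.\ref{item:wf:stack} (``did not read from a CA stack store''). With that one substitution your argument is complete and coincides with the paper's; the remaining sub-cases (values read from $D$ via your fact (ii) together with \RuleTT{}.\ref{item:tt:policy-consistency}, sequential NCA stores via \RuleTT{}.\ref{item:tt:conservative}, and CA global stores via \RuleTT{}.\ref{item:tt:store-consistency} yielding a prior violation and hence a freshness contradiction) are handled correctly.
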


\begin{proof}
    We will prove the claim directly.
    Let $I_i$ be a \ourprimitive{} in a trace $e$.
    By \Cref{def:ourprimitives}, $I_i$ introduces a new security-type violation in some register $r'$ at step $i+1$.

    \textbf{Suppose $I_i$ does not modify $r'$.}
    Then $(r',i) \!\DynDep{e}\! (r',i+1)$ and $r'$ is secretly-typed at $i$ but publicly-typed at $i\!+\!1$ (\Cref{def:ourprimitives}).
    Since the security type of $r'$ cannot change intraprocedurally if $r'$ is unmodified (\RuleTT{}.\ref{item:tt:nop}),
    $I_i \mapsto \MyCall \mid \Ret$.
    Secretly-typed $r'$ cannot be an argument register (\RuleTT{}.\ref{item:tt:arg}),
    so $I_i$ satisfies \textbf{\PrimArg{}}.

    \textbf{Suppose $I_i$ modifies $r'$.}
    First, note that $r' \neq \PC$ since \ourmodel{}'s declassification of transmitter operands (\S\ref{sec:leakage-model}) ensures that $\PC$ always holds a public value.
    Only two instructions modify non-$\PC$ registers: $\Op$ and $\Load$.
    
    If $I_i \!\mapsto\! \Op_o \, r', \Vec{r_s}$, then some input $r \in \Vec{r_s}$ violated its security type at $i$ (\RuleTT{}.\ref{item:tt:dep}).
    But $(r,i) \DynDep{e} (r',i+1)$, 
    so $I_i$ is not a taint primitive (\Cref{def:ourprimitives}), a contradiction.

    If $I_i \mapsto \Load \, [r_a+d],r'$, then 
    $I_i$ may be an NCA, CA stack, or CA global load.
    If $I_i$ is an NCA load (resp. CA stack load), it satisfies \textbf{\PrimNCAL{}} (resp. \textbf{\PrimStack{}}).
    If $I_i$ is a CA global load, 
    we know it did not read from a CA stack store (\RuleWF{}.\ref{item:wf:stack}) or initial memory (\RuleTT{}.\ref{item:tt:conservative}), so it read from an NCA store or CA global store.
    In the former case, the store was not sequential (\RuleTT{}.\ref{item:tt:conservative}), satisfying \textbf{\PrimNCAS{}}.    
    In the latter, $I_i$ read from a CA global store of a prior security type violation (\RuleTT{}.\ref{item:tt:store-consistency}), a contradiction (\Cref{def:ourprimitives}).
\end{proof}

\subsubsection{Mitigating Spectre in \ourct{} Programs}
\label{sec:complete-spectre-mitigation}


\begin{corollary}
\label{cor:conditions-for-sct-violations}
If a \ourct{} program $\Prog$ violates SCT (\Cref{def:speculative-constant-time}), then there exists a trace in which a transmitter's sensitive operand is dynamic-dependent on an \PrimNCAL{}, \PrimNCAS{}, \PrimStack{}, or \PrimArg{} \ourprimitive{}. 
\end{corollary}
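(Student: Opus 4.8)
\emph{Approach.} The plan is to reduce the corollary to Theorem~\ref{thm:origin}: I would show that any secret-dependent observation forces a security-type violation at some transmitter's sensitive register operand, then trace that violation backward through the dynamic DFG to one that was \emph{introduced} by a taint primitive, and finally invoke Theorem~\ref{thm:origin} to classify that taint primitive as one of \PrimNCAL{}, \PrimNCAS{}, \PrimStack{}, or \PrimArg{}.

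\emph{Step 1 (SCT violation $\Rightarrow$ security-type violation).} Since $\Prog$ violates SCT, fix a trace $e = C_0 \Yields{}{O_0} \cdots \Yields{}{O_n} C_{n+1}$ and a step $k$ with $O_k$ secretly labeled (Definition~\ref{def:speculative-constant-time}). Only $\Bnz$, $\MyCall$, $\Load$, and $\Store$ emit non-$\ObsNone$ observations, and in each case the exposed label is $\Sec$ exactly when the relevant \emph{sensitive register operand}---the branch condition register, the call-target register, or the base register $r_a$ of a load/store address (whose label equals that of $R_k(r_a)$ because $d_\Pub$ is public)---holds a secret value in $C_k$. Call that register $r_s$, so $R_k(r_s) = v_\Sec$ for some $v$. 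By \RuleTT{}.\ref{item:tt:xmit} the sensitive register operands of transmitters are publicly typed, so $\STReg{F_k}(r_s, I_k) = \Pub$; hence $(r_s, k)$ is a security-type violation (Definition~\ref{def:security-type-violation}) and is the transmitter's sensitive operand.

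\emph{Step 2 (locate the originating taint primitive).} Let $S$ be the set of security-type violations $(r,i)$ such that $(r,i) = (r_s,k)$ or $(r,i) \DynDeps{e} (r_s,k)$; it is nonempty and, as $e$ is finite, finite. Every dynamic-DFG edge strictly advances the step index, so $\DynDeps{e}$ is a strict partial order and $S$ has a $\DynDeps{e}$-minimal element $(r',m)$. There is no security-type violation at step $0$---$C_0$ begins a sequential trace and \RuleTT{}.\ref{item:tt:conservative} forbids a publicly typed register from holding a secret there---so $m \ge 1$; set $j = m-1$. The violation $(r',m)$ is not dynamic-dependent on any prior security-type violation: by transitivity any such ancestor would also lie in $S$ and be a strict $\DynDeps{e}$-predecessor of $(r',m)$, contradicting minimality. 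Hence, by Definition~\ref{def:ourprimitives}, the instruction $I_j$ is a taint primitive, and $(r_s,k)$ equals or is dynamic-dependent on its tainted output $(r',m)$.

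\emph{Step 3 and the main obstacle.} Theorem~\ref{thm:origin} now classifies $I_j$ as \PrimNCAL{}, \PrimNCAS{}, \PrimStack{}, or \PrimArg{}, with $r'$ the output register (the three load cases) or a non-argument register (\PrimArg{}); combined with Step~1 this says the transmitter $I_k$'s sensitive operand $(r_s,k)$ is dynamic-dependent on a taint primitive of one of these four kinds (in the load cases this is exactly the ``dynamic-dependent on a load'' relation of \S\ref{sec:dynamic-dfg}, which is phrased via the delayed output node $(r',j+1)$). I expect Step~2 to be the delicate part: ruling out a step-$0$ violation so that a predecessor instruction $I_j$ exists, arguing that $\DynDeps{e}$-minimality inside $S$ excludes \emph{all} security-type-violating ancestors (which needs transitivity of $\DynDeps{e}$, not merely membership in $S$), and handling the reflexive base case where $I_{k-1}$ itself is the taint primitive whose output the transmitter consumes directly. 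Everything else is routine bookkeeping over the DFG plus a direct appeal to Theorem~\ref{thm:origin}.
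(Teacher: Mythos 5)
Your proposal is correct and follows essentially the same route as the paper's proof: secret observation $\Rightarrow$ publicly-typed sensitive operand holding a secret (via \RuleTT{}.\ref{item:tt:xmit}) $\Rightarrow$ security-type violation $\Rightarrow$ dynamic dependence on some taint primitive $\Rightarrow$ classification by Theorem~\ref{thm:origin}. The only difference is that you spell out (via the finite, step-increasing DFG and a minimal violating ancestor, ruling out step~0 by \RuleTT{}.\ref{item:tt:conservative}) the backward-tracing step that the paper compresses into a one-line appeal to Definition~\ref{def:ourprimitives}.
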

\begin{proof}[Proof]
If $\Prog$ violates SCT, some trace $e$ of $\Prog$ executes a transmitter $I_j$ at step $j$ with a secretly-labeled sensitive operand $r$ (\Cref{def:speculative-constant-time}).
By \RuleTT{}.\ref{item:tt:xmit}, $r$ is publicly-typed at $I_j$, so $(r,j)$ is a security-type violation (\Cref{def:security-type-violation}).
Thus, $I_j$ is dynamic-dependent on some \ourprimitive{} $I_i$ (\Cref{def:ourprimitives}).
By \Cref{thm:origin}, $I_i$ is a \PrimNCAL{}, \PrimNCAS{}, \PrimStack{}, or \PrimArg{}.
\end{proof}

\section{\tool{}: A Compiler Approach for Enforcing Speculative Constant Time}
\label{sec:tool}
\Cref{cor:conditions-for-sct-violations} is powerful: to eliminate \textit{all Spectre leakage} in a \ourct{} program, a mitigation must simply break all dynamic dependencies from four classes of taint primitives (\Cref{thm:origin}) to subsequent transmitters.
\tool{} consists of three intraprocedural passes (\Cref{fig:tool:overview}) that do just this.

\begin{figure*}
    \centering
    {\footnotesize\begin{tikzpicture}
        \def\PrimsDist{0.7cm}
        \node[draw, text width=1.3cm, align=center] (prog-in) {
            $\ProgIn$ \\
            \cmarkc{} [\ourct{}] \\
            \colorbox{yellow}{\xmarkc{} [SCT]}
        };
        \node[below=\PrimsDist of prog-in] (conds-in) {
            \begin{tabular}{ll}
                \xmarkc{} \PrimNCAL{} & \xmarkc{} \PrimStack{} \\
                \xmarkc{} \PrimNCAS{} & \xmarkc{} \PrimArg{}
            \end{tabular}
        };
        \draw[->, double equal sign distance] (prog-in.south) -- (conds-in.north) node[right, pos=0.5, xshift=0.1cm, text width=1.5cm] {Thm.~\ref{thm:origin}\\Cor.~\ref{cor:conditions-for-sct-violations}};

        \node[draw, text width=1.2cm, align=center, right=2.5cm of prog-in.east] (prog-fence) {
            $\ProgFence$ \\
            \cmarkc{} [\ourct{}] \\
            \xmarkc{} [SCT]
        };
        \node[below=\PrimsDist of prog-fence] (conds-fence) {
            \begin{tabular}{ll}
                \pmarkc{} \PrimNCAL{} & \xmarkc{} \PrimStack{} \\
                \cmarkc{} \PrimNCAS{} & \xmarkc{} \PrimArg{}
            \end{tabular}
        };
        \draw[->, double equal sign distance] (prog-fence.south) -- (conds-fence.north) node[left, pos=0.5, xshift=-0.1cm] {\Cref{thm:no-ncas}};

        \node[draw, text width=2.75cm, align=center, right=2.5cm of prog-fence.east] (prog-fps) {
            $\ProgFPS$ \\
            \begin{tabular}{l}
            \hspace{-0.25cm} \cmarkc{} [\ourct{}] (\Cref{cor:fps-cts}) \\
            \hspace{-0.25cm} \xmarkc{} [SCT]
            \end{tabular}
        };
        \node[below=\PrimsDist of prog-fps] (conds-fps) {
            \begin{tabular}{ll}
                \pmarkc{} \PrimNCAL{} & \cmarkc{} \PrimStack{} \\
                \cmarkc{} \PrimNCAS{} & \xmarkc{} \PrimArg{}
            \end{tabular}
        };
        \draw[->, double equal sign distance] (prog-fps.south) -- (conds-fps.north) node[left, pos=0.5, xshift=-0.1cm] {Thm.~\ref{thm:no-stkl}};

        \node[draw, text width=1.3cm, align=center, right=2.5cm of prog-fps.east] (prog-out) {
            $\ProgOut$   \\
            \cmarkc{} [\ourct{}] \\
            \colorbox{yellow}{\cmarkc{} [SCT]}
        };
        
        \node[below=\PrimsDist of prog-out] (conds-out) {
            \begin{tabular}{ll}
                \cmarkc{} \PrimNCAL{} & \cmarkc{} \PrimStack{} \\
                \cmarkc{} \PrimNCAS{} & \cmarkc{} \PrimArg{}
            \end{tabular}
        };
        \draw[->, double equal sign distance] (prog-out.south) -- (conds-out.north) node[left, pos=0.5, xshift=-0.1cm, text width=1cm] {Thms.\\\ref{thm:no-arg}, \ref{thm:tool-correct}};

        \draw[->] (prog-in.east) -- (prog-fence.west) node[above, text width=2cm, pos=0.5, align=center] {Fence Insertion\\(\S\ref{sec:tool:fence})};
        \draw[->] (prog-fence.east) -- (prog-fps.west) node[above, text width=2.3cm, pos=0.5, align=center] {Function-Private\\Stacks~(\S\ref{sec:tool:fps-pass})};
        \draw[->] (prog-fps.east) -- (prog-out.west) node[above, text width=2cm, pos=0.5, align=center] {Register Zeroing~(\S\ref{sec:tool:arg-pass})};

        \draw[color=gray, dashed]   ($(prog-in.east)+(0.25,0)$) to[out=-90,in=90] 
                ($(prog-in.south east)!0.4!(prog-fence.south west)$) to[out=-90,in=90]
                ($(conds-in.south east)!0.5!(conds-fence.south west)$) to[out=0,in=180,looseness=0.1]
                ($(conds-fps.south east)!0.5!(conds-out.south west)$) to[out=0,in=-90]
                ($(prog-out.west)-(0.25,0)$) to[out=90,in=0]
                ($(prog-out.north west)+(-0.5,0.2)$) to[out=180,in=0]
                ($(prog-in.north east)+(0.5,0.2)$) to[out=180,in=90]
                ($(prog-in.east)+(0.25,0)$);
        \node[draw=gray, dashed] at ($(prog-fps.south east)+(1,-0.5)$) {\normalsize \tool{}};

    \end{tikzpicture}}
    
    \caption{
    Given a \ourct{} program $\ProgIn$ (\S\ref{sec:propreties:ourct}), \graydashedbox{\tool{}}
    runs three passes, each of which provably eliminate a class of \ourprimitives{} (\S\ref{sec:properties:ourprimitives}).
    \Cref{thm:tool-correct} proves all passes together eliminate the final primitive, \PrimNCAL{} and thus the output \ourct{} program $\ProgOut$ satisfies SCT (\Cref{def:speculative-constant-time}).
    }
    \label{fig:tool:overview}
\end{figure*}

\subsection{\Tool{}'s Fence Insertion Pass}
\label{sec:tool:fence}
\tool{} runs the
\textit{Fence Insertion Pass} first to mitigate \textit{all} SCT violations due to \PrimNCAS{}
and \textit{some} due to \PrimNCAL{} \ourprimitives{} \textit{in each procedure $F$}.
We formulate optimal fence insertion as a graph cut problem over $F$'s weighted transient CFG (\S\ref{sec:tool:fence:tcfg}): we must eliminate all transient control-flow paths from \textit{sources} to \textit{sinks}
(defined in \S\ref{sec:tool:fence:source-sink-pairs}).

\subsubsection{Transient CFG}
\label{sec:tool:fence:tcfg}
First, \tool{} constructs a weighted \textit{transient CFG} (T-CFG) for $F$ which captures the set of all transient control-flow paths through $F$. 
Unlike a traditional procedural CFG, the T-CFG captures transient execution through $F$ \textit{and} across invocations of $F$ or spurious \rsb{}-mispredicted returns to $F$.
Nodes are instructions in $F$. Edges are defined by the \textit{transient successor} function:
\begin{align*}
\mathbf{I_\mathrm{enter}} &= \{J \!\in\! F \mid J \!\mapsto\! \Endcall \mbox{ or } J\!-\!1 \!\mapsto\! \MyCall\} \\
\mathbf{I_\mathrm{exit}} &= \{I \in F \mid I \mapsto \MyCall \mid \Ret\} \\
\TSuccs(I) &= \begin{cases}
    \mathbf{I_\mathrm{enter}} &\mbox{if } I \in \mathbf{I_\mathrm{exit}} \\
    \emptyset &\mbox{if } I \mapsto \Lfence \\
    \Succs(I) &\mbox{otherwise (\S\ref{sec:successors})}
\end{cases}
\end{align*}
The T-CFG has the edge $\smash{I \to_\mathrm{tcfg}^F J}$ iff $\smash{J \in \TSuccs(I)}$.
We write $I \to_\mathrm{tcfg*}^F J$ to indicate there is a path from $I$ to $J$ in $F$'s T-CFG.
Notably, $\Lfence$s have no transient successors since they block transient execution (\S\ref{sec:model:semantics:lfence}) and are thus dead-ends in the T-CFG. 
We compute the \textit{weight} of an edge $I \to_\mathrm{tdfg}^* J$ as $w = L/D$, where $L$ is the loop nest depth and $D$ is the depth in the dominator tree of instruction $J$.
This estimates the relative execution frequency of $J$ and thus the cost of placing a fence before it.
Weights affect optimality 
(performance)
but not correctness of the min-cut (\S\ref{sec:tool:fence:multicut}).

\begin{theorem}[T-CFG complete]
\label{lem:tcfg-complete}
If same-procedure instructions $I_i, I_j \in F$ transiently execute at steps $i < j$ of some trace of a program $\Prog$,
then $I_i \to_\mathrm{tcfg*}^F I_j$. (\textit{Proof.} See \S\ref{app:sec:tcfg-complete}.)
\end{theorem}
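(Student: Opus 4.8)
The plan is to prove this by induction on $j - i$, walking the trace one transition at a time and using the structural guarantees of well-formed programs (\Cref{def:well-formed}) to show that every control-flow move either follows a T-CFG edge or re-enters $F$ at an $\mathbf{I_\mathrm{enter}}$ instruction. Two preliminary facts set this up. First, the transient bit is monotone: inspecting the rules of \S\ref{sec:model:semantics}, no transition ever resets $T$ from $\T$ to $\NT$. Since $I_i$ executes transiently, the trace is transient at every step in $[i{+}1, j{+}1]$, so every instruction $I_i, I_{i+1}, \dots, I_{j-1}$ executes transiently — this is what lets us reapply the induction hypothesis to sub-ranges. Second (the structural one-step fact), for any reachable configuration whose current instruction $I \in F$, the next instruction is: (a) an element of $\Succs(I)$; or (b) $I$ itself, in the two ``stuck'' cases where transient execution makes no progress — a transient $\Lfence$, and an out-of-bounds $\Store$ whose sequential transition returns the configuration unchanged; or (c), only when $I \mapsto \MyCall \mid \Ret$, a call target (some $\Endcall$) or a return target (some address immediately following a $\MyCall$). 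This is a case analysis on the opcode of $I$: $\Bnz$/$\Jmp$ targets and fall-through edges are intraprocedural successors, and $\Op$/$\Load$/$\Store$/$\Endcall$/$\Lfence$ only advance or stall the $\PC$, so — since procedures are closed under $\Succs$ — none of these can leave $F$.

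For the inductive step I would split on whether $I_{i+1} \in F$. If $I_{i+1} \in F$: by the structural fact, either $I_{i+1} \in \Succs(I_i) = \TSuccs(I_i)$ (valid since $I_i$ is neither an $\Lfence$ nor an $\mathbf{I_\mathrm{exit}}$ instruction on this sub-branch); or $I_i \mapsto \MyCall \mid \Ret$ landing back inside $F$, in which case $I_{i+1}$ is a call/return target in $F$, hence $I_{i+1} \in \mathbf{I_\mathrm{enter}} = \TSuccs(I_i)$; or $I_{i+1} = I_i$. In every case $I_i \tcfgs{F} I_{i+1}$ (a length-zero path in the last case). If $i{+}1 = j$ we are done; otherwise apply the induction hypothesis to the pair $(i{+}1, j)$ and prepend. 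If instead $I_{i+1} \notin F$: then by the structural fact $I_i$ must be a $\MyCall$ or $\Ret$, so $I_i \in \mathbf{I_\mathrm{exit}}$ and $\TSuccs(I_i) = \mathbf{I_\mathrm{enter}}$. Let $k \in (i, j]$ be the least index with $I_k \in F$ (it exists since $I_j \in F$), so $I_{i+1}, \dots, I_{k-1} \notin F$ and the step from $k{-}1$ to $k$ re-enters $F$. Here I would argue $I_{k-1} \mapsto \MyCall \mid \Ret$: it lies in some procedure $G$ (the program is all procedures), $G \neq F$, and any other opcode would keep control inside $G$ by closure under $\Succs$, contradicting $I_k \in F$; hence $I_k$ is a call/return target inside $F$, i.e., $I_k \in \mathbf{I_\mathrm{enter}} = \TSuccs(I_i)$, giving $I_i \tcfg{F} I_k$. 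If $k = j$ we are done; otherwise $k \geq i{+}2$, so $j - k < j - i$, and the induction hypothesis on $(k, j)$ plus prepending the edge $I_i \tcfg{F} I_k$ finishes it.

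The main obstacle I anticipate is the structural one-step fact, which needs a careful, exhaustive pass over the instruction semantics. Two points require attention: (i) the ``stuck'' corner cases — a transient $\Lfence$ and an out-of-bounds $\Store$ taking its sequential transition — produce $I_{i+1} = I_i$ and have no corresponding T-CFG out-edge (indeed $\Lfence$ and $\Store$ are dead-ends or have only the $\{I{+}1\}$ successor), but they are harmless because the reflexive path suffices and, past a transient $\Lfence$, the trace can never make progress so $I_j = I_i$ anyway; and (ii) establishing that the \emph{only} way control crosses a procedure boundary is via $\MyCall$ (to the callee's unique $\Endcall$) or $\Ret$ (to an instruction following some $\MyCall$), which rests squarely on well-formedness — $\Prog$ consists only of procedures and each procedure is closed under intraprocedural succession, so $\Bnz$/$\Jmp$ targets and fall-through edges can never escape. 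Everything else is routine bookkeeping with the induction.
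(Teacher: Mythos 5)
Your proof is correct and follows essentially the same route as the paper's: both reduce the claim to a one-step fact that a transiently executing instruction in $F$ either stays in $F$ along an intraprocedural successor (hence a T-CFG edge), halts in place, or exits via $\MyCall/\Ret$ so that the next instruction of $F$ to execute is an $\Endcall$ or post-$\MyCall$ instruction in $\mathbf{I_\mathrm{enter}} = \TSuccs(I_i)$, and then stitch these edges together (the paper via ``next visit to $F$'' plus transitivity, you via induction on $j-i$ with the least re-entry index). Your added explicit lemmas (monotonicity of the transient bit, the exhaustive opcode case analysis) only make precise steps the paper leaves implicit, so no substantive difference.
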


\subsubsection{Static DFG}
\label{sec:static-dfg}
Next, \tool{} constructs a \textit{static DFG} for $F$, which models syntactic intraprocedural dependencies through CA stack accesses and registers.
Nodes are register-instruction pairs $(r,I) \!\in\! \RegSet \!\times\! F$. 
Edges $(r,I) \!\StcDep{F}\! (r',J)$ encode direct register or stack dependencies as follows:
\begin{itemize}[leftmargin=*]
    \item \textit{No-op:} 
    $(r,I) \StcDep{F} (r,J)$ if $J \in \TSuccs(I)$ and $I$ does not modify $r$.
    
    \item \textit{Register dep:}
    $(r,I) \StcDep{F} (r',I+1)$ for each $r \in \Vec{r_s}$ if $I \mapsto \Op_o r', \Vec{r_s}$.

    \item \textit{Stack dep:}
    $(r,I) \StcDep{F} (r',J+1)$ if $I \mapsto \Store\,[\SP+d],r$ and $J \mapsto \Load\,[\SP+d],r'$;
    i.e., if $I$ and $J$ are a same-offset CA stack store and load, then $J$ may read from $I$.
\end{itemize}
We say $(r',J)$ is \textit{static-dependent} on $(r,I)$ if $(r,I) \!\StcDeps{F}\! (r',J)$. 
Furthermore, we say $(r',J)$ is \textit{static-dependent on a load} $I \mapsto \Load \, [r_a+d],r$
if $(r, I+1) \StcDeps{F} (r',J)$.

The static DFG enables \tool{} to precisely track how candidate \PrimNCAL{} \ourprimitives{}
may propagate security-type violations through intraprocedural dependencies.
Note the similarities with dynamic-dependencies (\S\ref{sec:dynamic-dfg}),
like the one-step delay for static DFG nodes referencing an instruction's \textit{output}.
We use these similarities in our final proof of \tool{}'s correctness (\Cref{thm:tool-correct}).

\subsubsection{Source-Sink Pair Generation}
\label{sec:tool:fence:source-sink-pairs}
\tool{} identifies \textit{five types} of intraprocedural \textit{source-sink} instruction pairs that can produce SCT violations involving an \PrimNCAL{} or \PrimNCAS{} primitive \textit{when both source and sink execute transiently}.
The source of each pair is an NCA load/store, which we conservatively assume may read/write a secret at an \textit{arbitrary} data address when executed transiently. \tool{} accumulates a set $S$ of source-sink pairs as follows.
A source-sink pair $(I,J)$ is added to $S$ for each static dependency $(r,I) \StcDeps{F} (r',J)$, where $I \mapsto \Load\,[r_a+d],r$ ($I$ is an NCA load) and $J$ is (i) a transmitter with sensitive operand $r'$ (\ncalxmit{});
(ii) a $\MyCall/\Ret$ with argument $r' \in \ArgFn(J)$ (\ncalarg{}); or
(iii) a CA global store of $r'$ (\ncalglob{}).
These pairs help \tool{} prevent \PrimNCAL{} \ourprimitives{} from passing secrets (i) intraprocedurally to transmitters or (ii) interprocedurally in arguments, or (iii) enabling stores of secrets to publicly-typed global variables.

A source-sink pair $(I,J)$ is added to $S$ for each NCA store $I$ paired with (iv) each CA load $J$ (\ncascal{}) and (v) each $\MyCall/\Ret$ $J$ (\ncasctrl{}).
These pairs help \tool{} prevent all candidate \PrimNCAS{} instructions (CA loads) from (iv) intraprocedurally or (v) interprocedurally reading from a transient NCA store $I$.

%

\subsubsection{Fence Insertion}
\label{sec:tool:fence:multicut}
\label{sec:tool:fence:insertion}
Given the set of source-sink pairs $S$ (\S\ref{sec:tool:fence:source-sink-pairs}) and the T-CFG for $F$ (\S\ref{sec:tool:fence:tcfg}), \tool{} runs a heuristic minimum directed multicut algorithm (\S\ref{app:sec:multicut}) to obtain a near-optimal edge cutset $\Cutset$ that prevents (transient) sources from passing data to sinks.
That is, \tool{} inserts an $\Lfence$ along each edge $(u,v) \in \Cutset$.

\subsubsection{Guarantees}
\tool{}'s Fence Insertion Pass produces a partially mitigated \ourct{} program $\ProgFence$ that 
contains \textit{no} SCT violations caused by \PrimNCAS{} \ourprimitives{} (\cmarkc{} in \Cref{fig:tool:overview}).
Some \PrimNCAL{} primitives remain (\pmarkc{}), which subsequent passes will eliminate.

\begin{theorem}[Sources execute sequentially]
\label{thm:source-sink}
    If $(I_i, I_k)$ is a source-sink pair in a trace of $\ProgFence$, $I_i$ did not trap, and $i < k$, then some $I_j \mapsto \Lfence$ executes between $I_i$ and $I_k$, so $I_i$ executed sequentially. (\textit{Proof.} See \S\ref{app:sec:source-sink}.)
\end{theorem}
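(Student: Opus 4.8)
The plan is to argue by contradiction, resting on two structural facts about \ourmodel{}. First, the transient bit is monotone: inspecting the rules of \S\ref{sec:model:semantics}, every transition either leaves $T$ unchanged or performs $T \update \T$, so once a trace is transient it remains transient. Second, an $\Lfence$ reached in a transient configuration is a dead end --- its transient transition is empty and its sequential transition (the $T = \T$ case) leaves the configuration unchanged, so the trace loops on that instruction forever. Write $F$ for the procedure in which the source-sink pair $(I_i, I_k)$ was generated, so $I_i, I_k \in F$. I would then suppose, toward a contradiction, that no $\Lfence$ executes at any step $j$ with $i < j < k$, and note that $I_i$ (the source, an NCA load or store) and $I_k$ (the sink, a transmitter, $\MyCall/\Ret$, CA load, or CA global store) are not $\Lfence$s, and that no unmapped instruction executes at any step in $[i,k)$, since an unmapped address halts the processor, after which control could never reach $I_k \in F$.

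Under this assumption I would show $I_i \to_\mathrm{tcfg*}^F I_k$ in $\ProgFence$. Let $i = m_0 < m_1 < \cdots < m_t = k$ enumerate the steps of $[i,k]$ at which control is inside $F$. For a consecutive pair with $m_{p+1} = m_p + 1$, the instruction $I_{m_p}$ is mapped and not an $\Lfence$, so the instruction reached at the next step lies in $\TSuccs(I_{m_p})$; this is a short case analysis over instruction forms, using that $\Bnz$ and $\Jmp$ stay inside $F$ by closure under intraprocedural succession, that $\Load/\Store$ advance the program counter by one on every (sequential or transient) transition, and that a $\MyCall/\Ret$ with an in-$F$ successor lands on $F$'s entry $\Endcall$ or on a callsite of $F$, i.e. on $\mathbf{I_\mathrm{enter}}$, which equals $\TSuccs(I_{m_p})$ for a $\MyCall/\Ret$ node of $F$. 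For a pair with $m_{p+1} > m_p + 1$ --- an excursion out of $F$ --- the procedure structure of well-formed programs (\S\ref{sec:properties:procedures}) forces control to leave $F$ only at a $\MyCall$ or $\Ret$ (so $I_{m_p} \in \mathbf{I_\mathrm{exit}}$) and to re-enter $F$ only on a node of $\mathbf{I_\mathrm{enter}}$, so again $I_{m_p} \to_\mathrm{tcfg}^F I_{m_{p+1}}$, the excursion being collapsed by the exit-to-entry edges. Concatenating yields $I_i \to_\mathrm{tcfg*}^F I_k$. But $(I_i, I_k)$ is a source-sink pair of $F$ (\S\ref{sec:tool:fence:source-sink-pairs}), so the cutset $\Cutset$ disconnects $I_i$ from $I_k$ in $F$'s T-CFG, and \tool{} inserts an $\Lfence$ along every cut edge; since $\TSuccs(I) = \emptyset$ whenever $I \mapsto \Lfence$, each inserted fence is a T-CFG dead end, so $\ProgFence$'s T-CFG --- obtained from $\ProgIn$'s by subdividing each cut edge with such a dead end --- contains no path from $I_i$ to $I_k$, contradicting $I_i \to_\mathrm{tcfg*}^F I_k$. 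Hence some $\Lfence$ executes at a step $j$ with $i < j < k$. I expect this reconstruction to be the main obstacle: it is a variant of \Cref{lem:tcfg-complete} that drops the requirement that execution between $I_i$ and $I_k$ be transient (at the cost of requiring no intervening $\Lfence$), and it also requires reconciling $\ProgFence$'s instruction addresses with the T-CFG built over $\ProgIn$ via the relabeling that fence insertion induces.

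To finish with ``$I_i$ executed sequentially,'' I would examine the $\Lfence$ executing at step $j \in (i,k)$. It cannot execute with $T_j = \T$: by the second structural fact the trace would loop on $I_j$, forcing $I_m = I_j \mapsto \Lfence$ for all $m \geq j$ and hence $I_k \mapsto \Lfence$, contradicting that $I_k$ is a sink. So $T_j = \NT$; the sequential $\Lfence$ rule preserves $T$, hence $T_{j+1} = \NT$; and since $i < j$, monotonicity of the transient bit gives $T_{i+1} = \NT$, i.e. $I_i$ executed sequentially. The hypothesis that $I_i$ did not trap is a mild technical condition keeping the step at $i$ an ordinary successor step of $I_i$; a trapping $I_i$ could reach $I_k$ only by taking its transient transition, and the reconstruction above then rules this out.
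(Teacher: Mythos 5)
Your proof is correct, but it is organized differently from the paper's. The paper's proof in \S\ref{app:sec:source-sink} assumes for contradiction that $I_i$ executed \emph{transiently}; then (implicitly using monotonicity of the transient bit, so that $I_k$ is also transient) it can invoke \Cref{lem:tcfg-complete} exactly as stated to get a path $I_i \tcfgs{F} I_k$, uses the hypothesis that $I_i$ did not trap to make the path non-trivial, and concludes from the cut plus the fact that $\Lfence$ nodes are dead ends that the path must terminate at a fence, i.e.\ that $I_k$ is a fence and hence not a sink --- contradiction. You instead assume for contradiction that \emph{no fence executes} between steps $i$ and $k$, which forces you to prove a strengthened variant of \Cref{lem:tcfg-complete} that drops the transient-execution hypothesis (replacing it with ``no intervening $\Lfence$ and no halting''), and only afterwards do you recover sequentiality of $I_i$ from the executed fence via the loop-forever behavior of a transient $\Lfence$ and monotonicity of $T$. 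What your route buys is that it actually establishes the first clause of the theorem (``some $\Lfence$ executes between $I_i$ and $I_k$'') as a dynamic fact, which the paper's argument never really proves --- it only refutes transience of $I_i$, which is the part used downstream; what it costs is re-proving the path-reconstruction lemma in a more general setting, including bookkeeping the paper avoids (e.g.\ intermediate sequential data-address stalls at loads/stores, which your sketch does not mention explicitly but which are harmless reflexive steps, and the relabeling of addresses between $\Prog_\mathrm{in}$ and $\ProgFence$, which both proofs gloss over). Both arguments ultimately rest on the same three ingredients: trace-to-T-CFG path reconstruction, the multicut guarantee of the Fence Insertion Pass, and the dead-end/halting semantics of $\Lfence$.
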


\begin{theorem}[No \PrimNCAS{} in $\ProgFence$]\label{thm:no-ncas}
    No instructions belong to \PrimNCAS{} (\S\ref{sec:properties:ourprimitives}) in any trace of $\ProgFence$.
\end{theorem}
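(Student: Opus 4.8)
The plan is to argue by contradiction, reducing everything to \Cref{thm:source-sink}. Suppose some instruction $I_i$ belongs to \PrimNCAS{} in a trace $e$ of $\ProgFence$; by the characterization in \Cref{thm:origin} this means $I_i$ is a transient CA load that forwards its value from a transient NCA store $I_j$ with $j < i$. The first small step is to note that, in order for a load to forward a store's value in \ourmodel{}, that store must have appended an entry to the speculative store set --- it is the only route by which the value can reach the load, whether it is read directly from the store set or after being drained to data memory by an $\Lfence$ --- and by the $\Store$ semantics (\S\ref{sec:model:accesses}) only the mapped-address, non-faulting sequential transition appends such an entry. Hence $I_j$ did not trap.

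Next I would produce a source--sink pair whose source is $I_j$, splitting on whether $I_j$ and $I_i$ lie in the same procedure. If they do, then $(I_j, I_i)$ is exactly a \ncascal{} pair (an NCA store paired with a CA load of the same procedure, \S\ref{sec:tool:fence:source-sink-pairs}). If they lie in different procedures, then since control resides in $I_j$'s procedure immediately after step $j$ (the non-faulting store falls through to its successor, which stays in the procedure) but in a different procedure at step $i$, and procedures are closed under intraprocedural succession --- so control can leave a procedure only by executing a $\MyCall$ or $\Ret$ (\S\ref{sec:well-formed}) --- there is a first such departure at some step $k$ with $j < k < i$, where $I_k$ lies in $I_j$'s procedure and $I_k \mapsto \MyCall \mid \Ret$; then $(I_j, I_k)$ is a \ncasctrl{} pair. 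In either case, $e$ contains a source--sink pair whose source $I_j$ did not trap and which occurs strictly before its sink, so \Cref{thm:source-sink} applies and concludes that $I_j$ executed \emph{sequentially} --- contradicting the assumption that $I_j$ is a \emph{transient} NCA store.

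The only point that needs care is the inter-procedural case. One must justify that a speculative-store-set entry created by a transient NCA store in a caller genuinely can survive into (and be forwarded inside) a callee --- which it can, because the speculative store set is part of the configuration and is flushed only by $\Lfence$ --- and that the first control transition out of $I_j$'s procedure after step $j$ is itself an instruction of that procedure which is a $\MyCall$ or $\Ret$, so that the \ncasctrl{} pair indeed applies. Everything else is a direct appeal to \Cref{thm:source-sink} together with the definitions of \PrimNCAS{} and of the \ncascal{}/\ncasctrl{} source--sink pairs. (The residual \PrimNCAL{} primitives that this pass does not eliminate are deliberately left to \tool{}'s later passes, per \Cref{fig:tool:overview}.)
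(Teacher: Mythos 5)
Your proposal is correct and takes essentially the same route as the paper: reduce the \PrimNCAS{} instruction to a \ncascal{} or \ncasctrl{} source--sink pair (depending on whether a $\MyCall/\Ret$ intervenes between the NCA store and the CA load) and invoke \Cref{thm:source-sink} to conclude the store executed sequentially, contradicting its transience. The only difference is that you make explicit two points the paper's proof leaves implicit---that the store cannot have trapped (since it must have appended to the speculative store set for the load to read its value) and that the chosen sink lies in the store's own procedure so the pair is genuinely intraprocedural---which is a welcome tightening rather than a different argument.
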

\begin{proof}
Suppose for contradiction some $I_k$ satisfies \PrimNCAS{} in trace $e$ of $\ProgFence$, i.e., $I_k$ is a CA load that read from a prior transient NCA store $I_i$.
If an instruction $I_j \mapsto \MyCall/\Ret$ executes for $i \!<\! j \!<\! k$, then $(I_i,I_j)$ is a \ncasctrl{} source-sink pair.
Else, $(I_i, I_k)$ is a \ncascal{} pair.
Either way, $I_i$ executes sequentially (\Cref{thm:source-sink}), a contradiction.
\end{proof}

\subsection{\Tool{}'s Function-Private Stacks Pass}
\label{sec:tool:fps-pass}
\tool{} runs the \textit{Function-Private Stacks (FPS) Pass} second to eliminate all \PrimStack{} \ourprimitives{} in $\ProgFence$, producing a functionally-equivalent program $\ProgFPS$.
Our insight is that SCT violations due to \PrimStack{} arise due to procedures reallocating frames on the same data stack. Thus, \tool{} statically assigns a \textit{private stack} to each procedure $F$ on which only $F$ can allocate stack frames.

\subsubsection{Motivation}
If procedures share the same data stack, it is difficult to ensure that a publicly-typed stack access will not read a secret value transiently.
\tool{}'s solution is to \textit{assign each procedure its own data stack}, which is never reused by another procedure.
Thus, a publicly-typed CA stack load will \textit{never} transiently read a value from a different-offset or different-procedure CA stack store.

\subsubsection{Implementation}
\label{sec:tool:fps-implementation}
\begin{figure}[tbp]
\begin{lstlisting}[language=c, numbers=left, numbersep=3pt, xleftmargin=15pt, basicstyle=\ttfamily\footnotesize, escapechar=|]
foo: ENDBR|\label{line:fps:prolog:endcall}|
   + LD [ZR+|$\psp{F}$|],SP // load private SP |\label{line:fps:prolog:ld}\label{line:fps:prolog:first}|
     SUB SP,SP,|$\FrameSize{F}$|   // frame allocation |\label{line:fps:prolog:sub}|
   + MAX SP,SP,|$B_F$|   // prevent overflow |\label{line:fps:prolog:probe}|
   + ST [ZR+|$\psp{F}$|],SP // store private SP |\label{line:fps:prolog:st}\label{line:fps:prolog:last}|
     ...
     CALL r1
   + LD [ZR+|$\psp{F}$|],SP // load private SP |\label{line:fps:call:ld}|
     ...
     ADD SP,SP,|$\FrameSize{F}$|    // frame deallocation |\label{line:fps:epilog:dealloc}|
   + MIN SP,SP,|$E_F$|   // prevent underflow |\label{line:fps:epilog:probe}\label{line:fps:epilog:first}|
   + ST [ZR+|$\psp{F}$|],SP // store private SP |\label{line:fps:epilog:st}\label{line:fps:epilog:last}|
     RET
\end{lstlisting}
\caption{
Instructions inserted by \tool{}'s FPS Pass (indicated with ``\texttt{+}'').
Lines \ref{line:fps:prolog:first}--\ref{line:fps:prolog:last} are the prologue;
lines \ref{line:fps:epilog:first}--\ref{line:fps:epilog:last} are the epilogue.
$\mathtt{MAX}/\mathtt{MIN}$ are instances of ASP's $\mathtt{OP}$ instruction (\S\ref{sec:model:other}).
}
\label{fig:tool:fps-transformation}
\label{fig:fps-example}
\end{figure}

Let $F$ be a procedure with stack frame size $\FrameSize{F}$.
\Cref{fig:tool:fps-transformation} depicts the code transformation performed by the FPS Pass.
%
Formally, given the input program $\ProgFence$ with data stack $\Stack_\mathrm{in}$, we define the output program $\ProgFPS = (\InstMemSet^\mathrm{fps}, \DataMemSet^\mathrm{fps}, P_\mathrm{fps}, \InitConfSet^\mathrm{fps})$ with data stack $\Stack_\mathrm{fps}$.
To start, we initialize $(\ProgFPS,\Stack_\mathrm{fps}) \gets (\ProgFence, \Stack_\mathrm{in})$.

\paragraph{\bf Private stack assignment}
\tool{} allocates and assigns a \textit{private stack} to $F$, denoted as $\fps{F}$.
To construct $\fps{F}$, we choose a stack base and stack end $\StackBase{F}, \StackEnd{F} \in \ValSet$ (where $\StackEnd{F} - \StackBase{F}$ is a positive multiple of $k_F$) and set
$$
\fps{F} \gets \underbrace{[\StackBase{F}, \StackEnd{F})}_{\mbox{\small \hspace{-3mm} usable region \hspace{-3mm}}} \cup \underbrace{[\StackEnd{F}, \StackEnd{F} + \FrameSize{F})}_{\mbox{\small underflow region}}
$$
such that $\fps{F} \cap \DataMemSet^\mathrm{fps} = \emptyset$.
We then map $\fps{F}$'s usable region into data memory ($\smash{  \DataMemSet^\mathrm{fps} \gets \DataMemSet^\mathrm{fps} \cup [\StackBase{F}, \StackEnd{F})  }$) and zero-initialize it in all initial configurations $C_0 \in \InitConfSet^\mathrm{fps}$;
however, we leave the underflow region unmapped, since well-formed procedures never sequentially underflow their stack.
We also add the private stack $\fps{F}$ to the program's stack metadata ($\Stack_\mathrm{fps} \!\gets\! \Stack_\mathrm{fps} \cup \fps{F}$, \Cref{def:well-formed}).
Finally, \tool{} allocates a global variable $\psp{F} \in \DataMemSet$ to hold the private stack pointer and initializes it to the stack end $D_0[\psp{F} \!\gets\! \StackEnd{F}]$ in all initial configurations $C_0 \!\in\! \InitConfSet^\mathrm{fps}$. 

\paragraph{\bf Switching private stacks}
\tool{} inserts instructions into the prologue and epilogue of $F$ to ensure it uses its private stack $\fps{F}$ rather than the shared stack. \textit{Prologue}: $F$ loads its private stack pointer $\psp{F}$ (L\ref{line:fps:prolog:ld} in \Cref{fig:tool:fps-transformation}) before frame allocation (L\ref{line:fps:prolog:sub}); a lower bounds clip  (L\ref{line:fps:prolog:probe}) prevents stack overflow; and $F$'s updated stack pointer is saved (L\ref{line:fps:prolog:st}). 
\textit{Post-call}: after each $\MyCall$, we switch from the callee's stack pointer back to $F$'s stack pointer (L\ref{line:fps:call:ld}). 
\textit{Epilogue}: 
after frame deallocation (L\ref{line:fps:epilog:dealloc}), we insert an upper bounds clip (L\ref{line:fps:epilog:probe}) to restrict any subsequent transient stack underflows to the underflow region; and $F$ restores its private stack pointer at procedure entry (L\ref{line:fps:epilog:st}).
Since $\StackEnd{F} \!-\! \StackBase{F}$ is a multiple of $\FrameSize{F}$, the bounds clips produce mutually $\FrameSize{F}$-aligned stack pointers, preventing STKL taint primitives.

\subsubsection{Guarantees}
\tool{}'s FPS Pass produces a \ourct{} program $\ProgFPS$ (\Cref{cor:fps-cts}) that has no SCT violations due to \PrimNCAS{}  (\Cref{thm:no-ncas}) or \PrimStack{} \ourprimitives{} (\Cref{thm:fps-no-stkl}).


\begin{theorem}[Private and aligned stacks]
\label{thm:fps-inbounds}
    At each CA stack access $I_i$ in any trace of $\ProgFPS$, the stack pointer points inside the current procedure $F$'s private stack $\fps{F}$ and is aligned to $F$'s frame size $\FrameSize{F}$.
    Formally, $R_i(\SP) \in \fps{F}$ and $R_i(\SP) = \StackEnd{F} - m \cdot \FrameSize{F}$ for some $m \geq 0$. (\textit{Proof.} See \S\ref{app:sec:fps-inbounds}.)
\end{theorem}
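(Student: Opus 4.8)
The plan is to strengthen the statement to a two-part trace invariant and prove it by induction on the length of a trace of $\ProgFPS$. \textbf{(a) Private-pointer well-formedness:} for every procedure $G$ and every step $i$, the value $D_i(\psp{G})$ and every value $v$ with $(\psp{G},v_l) \in S_i$ has the form $\StackEnd{G}-m\cdot\FrameSize{G}$ with $0 \le m \le (\StackEnd{G}-\StackBase{G})/\FrameSize{G}$; i.e. $\psp{G}$ only ever holds an address of $G$'s usable region (or its top endpoint $\StackEnd{G}$) that is $\FrameSize{G}$-aligned relative to $\StackEnd{G}$. \textbf{(b) Stack-pointer validity:} whenever $I_i$ lies in a procedure $F$ and an FPS-inserted private-stack reload for $F$ (the prologue, L\ref{line:fps:prolog:ld}--L\ref{line:fps:prolog:st}, or a post-call reload, L\ref{line:fps:call:ld}) has already executed since control last entered $F$, then $R_i(\SP) \in \fps{F}$ and $R_i(\SP) = \StackEnd{F}-m\cdot\FrameSize{F}$ for some $m \ge 0$. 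Because the FPS pass places no CA stack access before the reload in either $F$'s prologue or any post-call sequence, (b) immediately yields the theorem.

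The first step is to pin down the control-flow structure that keeps (b)'s case analysis finite. Using the hardware model's IBT/SHSTK/\rrsbad{} guarantees (already reflected in the $\MyCall$/$\Endcall$/$\Ret$ semantics and the explicit callstack), control---sequential or transient---can enter a procedure $F$ from outside $F$ only at $F$'s $\Endcall$ (via a $\MyCall$) or at a callsite of $F$, i.e. an inserted $\Load\ [\ZR+\psp{F}],\SP$ at L\ref{line:fps:call:ld} (via a $\Ret$); $\Bnz$/$\Jmp$ never cross procedures (procedures are closed under $\Succs$), and the initial configuration starts at an $\Endcall$. Hence every step $i$ with $I_i \in F$ has a well-defined most-recent entry step landing at $F$'s $\Endcall$ or at a callsite of $F$, with all intervening instructions in $F$. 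In the $\Endcall$ case, $\Endcall \to \mathtt{LD} \to \mathtt{SUB} \to \mathtt{MAX} \to \mathtt{ST}$ is straight-line in every (even transient) execution---none of $\Endcall$, an $\Op$, or a mapped $\Store$ has a control-diverting transition, and a $\Load$ always advances $\PC$ by one---so the whole prologue runs; in the callsite case the reload $\Load$ runs first. Either way the reload precedes the first following CA stack access.

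The inductive step is then mostly bookkeeping. For (a): in $\ProgFPS$, $\psp{G}$ is a fresh mapped global that, up to the transient-forwarding caveat handled in the last paragraph, is written only by the CA global stores the FPS pass inserts (L\ref{line:fps:prolog:st}, L\ref{line:fps:epilog:st}) and is initialized to $\StackEnd{G}$; each such store writes the current $\SP$, which by (b) is $\FrameSize{G}$-aligned relative to $\StackEnd{G}$ and, thanks to the prologue clip $\mathtt{MAX}\ \SP,\SP,\StackBase{G}$ (resp. the epilogue clip $\mathtt{MIN}\ \SP,\SP,\StackEnd{G}$), lies in $[\StackBase{G},\StackEnd{G}]$; an $\Lfence$ draining $S$ into $D$ preserves (a). For (b): after a reload, $\SP$ satisfies (a), hence is aligned and in $[\StackBase{F},\StackEnd{F}]\subseteq\fps{F}$; in the prologue case the subsequent $\mathtt{SUB}$ by $\FrameSize{F}$ preserves alignment, the $\mathtt{MAX}$ clip preserves it (since $\StackBase{F}\equiv\StackEnd{F}\pmod{\FrameSize{F}}$ because $\StackEnd{F}-\StackBase{F}$ is a positive multiple of $\FrameSize{F}$) and enforces $\SP\ge\StackBase{F}$, while $\SP\le\StackEnd{F}-\FrameSize{F}$ carries over, so $\SP\in[\StackBase{F},\StackEnd{F})$. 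Between the reload and a later CA stack access, $\SP$ changes only via \RuleWF{}-constrained frame (de)allocation ($\pm\FrameSize{F}$), a further post-call reload, or a clip; each preserves alignment and keeps $\SP\in\fps{F}=[\StackBase{F},\StackEnd{F}+\FrameSize{F})$ at CA stack accesses---the only excursions to $\SP=\StackEnd{F}+\FrameSize{F}$ happen strictly between the epilogue $\mathtt{ADD}$ and $\mathtt{MIN}$, where no CA stack access appears.

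The crux---and expected main obstacle---is part (a) under transient execution: since $\Load$ captures \stl{}, a transient reload $\Load\ [\ZR+\psp{F}],\SP$ may forward from \emph{any} same-address store in $S$, including a transient NCA store that happened to compute effective address $\psp{F}$, which would inject an arbitrary (unaligned, out-of-range) value into $\SP$. Excluding this is exactly what makes (a), and therefore the whole argument, go through; it reduces to the Fence Insertion Pass guarantee that no CA load ever transiently forwards from a transient NCA store (\Cref{thm:source-sink}, \Cref{thm:no-ncas}). The remaining work is to argue this guarantee survives (indeed is re-established by) the FPS pass: the fences of $\ProgFence$ are untouched, and the newly inserted reloads sit at procedure entrypoints and callsites, so the \ncascal{}/\ncasctrl{} source--sink coverage still applies to them. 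A secondary obstacle is extracting from \RuleWF{} (and the procedure definition) that within a procedure $\SP$ is modified only by the single frame alloc/dealloc and the FPS-inserted instructions, so that no stray $\Op$ with destination $\SP$ can break alignment.
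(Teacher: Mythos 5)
Your proposal follows essentially the paper's own route: an induction whose base case is the $\StackEnd{F}$-initialization of $\psp{F}$, whose bookkeeping is the clip-based alignment argument through prologue/epilogue, and whose crux is ruling out STL forwarding of a rogue value into the FPS-inserted reload $\Load\,[\ZR+\psp{F}],\SP$ --- handled, as in the paper, by observing that the reload sits just after a procedure (re-)entry, so a $\MyCall/\Ret$ intervenes and the \ncasctrl{} pairs together with \Cref{thm:source-sink} force the offending NCA store to have executed sequentially. (It is the \ncasctrl{} pairs, whose sinks are the pre-existing $\MyCall/\Ret$s, that do the work here; \ncascal{} pairs cannot cover the new reloads, which did not exist when the Fence Insertion Pass generated its pair set, so your appeal to \Cref{thm:no-ncas} for $\ProgFence$ has to be routed through the \ncasctrl{} argument exactly as the paper does.) Your CA-stack-store case is likewise the paper's: it is excluded by the inductive in-bounds invariant plus disjointness of $\fps{G}$ from the PSP globals.

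There is one genuine gap: you only exclude \emph{transient} NCA stores to $\psp{G}$, but your invariant (a) is equally broken by a \emph{sequential} NCA store whose runtime-computed address happens to equal $\psp{G}$, which would deposit an arbitrary value that a later reload (or drained memory) returns; \Cref{thm:source-sink}/\Cref{thm:no-ncas} say nothing about this case, and indeed the reduction of the transient case to a sequential one via \Cref{thm:source-sink} leaves you facing it anyway. The paper closes it by noting that such a store would mean the pre-transformation program sequentially accessed the address $\psp{G}$, which was unmapped before the FPS Pass allocated it, contradicting \RuleWF{}.\ref{item:wf:segfault}. Your assertion that $\psp{G}$ ``is written only by the CA global stores the FPS pass inserts'' is precisely the claim that needs this argument; with it added, your proof goes through.
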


\begin{corollary}
\label{cor:fps-cts}
    $\ProgFPS$ satisfies \ourct{}.
\end{corollary}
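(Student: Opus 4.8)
The plan is to verify the three conjuncts of \Cref{def:ourct} separately for $\ProgFPS$, using that its input $\ProgFence$ already satisfies \ourct{} (the Fence Insertion Pass only inserts $\Lfence$s) and that the FPS Pass merely \emph{relocates} each procedure's stack frames onto a private stack $\fps{F}$ while inserting instructions that touch only public data: the always-public stack pointer $\SP$, the public global $\psp{F}$, and constant displacements/frame bounds. So the plan is: (i) establish a label-preserving sequential simulation between $\ProgFence$ and $\ProgFPS$, from which $\ctprog{}$-ness transfers; then (ii) exhibit explicit metadata and witness typings for $\ProgFPS$ and check $\RuleWF{}$ and $\RuleTT{}$ against them. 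Throughout, \Cref{thm:fps-inbounds} supplies the key invariant that at every CA stack access the stack pointer lies in $\fps{F}$ and is $\FrameSize{F}$-aligned.

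\textbf{Sequential simulation and $\ctprog{}$-ness.} I would define a stuttering relation $\approx$ between sequential configurations of $\ProgFence$ and $\ProgFPS$ identifying two configurations whenever: their program counters agree up to the fixed offsets introduced by the inserted prologue/post-call/epilogue instructions; their register files agree except that the $\ProgFPS$ stack pointer is the image of the $\ProgFence$ stack pointer under the current procedure's constant frame relocation (shared stack $\mapsto \fps{F}$); each global $\psp{G}$ in $\ProgFPS$ holds the relocated stack pointer with which procedure $G$ will resume; and data memory agrees after applying, per live invocation, the corresponding relocation of that frame's slots. The essential point is that $\approx$ \emph{preserves labels}: relocation changes only the \emph{values} of $\SP$ and of stack addresses, never their labels (stack addresses are public by $\RuleWF{}.\ref{item:wf:stack}$; $\psp{G}$ values are public), and frame contents at corresponding offsets carry identical labels. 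I would then check $\approx$ is preserved by $\TransitionNT$: the inserted $\Load/\Store$ of $\psp{F}$ and the post-call reload perform exactly the stack-pointer bookkeeping that $\approx$ tracks; the $\mathtt{MAX}/\mathtt{MIN}$ clips act as no-ops on any sequential trace — this uses $\RuleWF{}.\ref{item:wf:segfault}$, so that $\fps{F}$ can be sized to cover all sequential stack usage of $F$, together with \Cref{thm:fps-inbounds}; and every original instruction steps to an $\approx$-related successor while exposing an observation with the \emph{same} label as in $\ProgFence$. Since the inserted instructions expose only $\ObsNone$ or memory observations on public addresses, and $\ProgFence$ (being \ourct{}) has no sequential trace exposing a secretly-labeled observation, neither does $\ProgFPS$; hence $\ProgFPS$ is $\ctprog{}$ (\Cref{def:constant-time}).

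\textbf{Well-formedness and security-typeability.} For $\RuleWF{}$: each procedure of $\ProgFence$ becomes a procedure of $\ProgFPS$ with the \emph{same} frame size $\FrameSize{F}$ (the inserted code wraps, but does not alter, the $\FrameSize{F}$ allocation), a single $\Endcall$ entrypoint, a prologue/epilogue allocating/deallocating the frame, and only in-bounds stack accesses ($\RuleWF{}.\ref{item:wf:procedures}$); the calling convention $\ArgFn$ is untouched ($\RuleWF{}.\ref{item:wf:cc}$); the data stack $\Stack_\mathrm{fps} \supseteq \bigcup_F \fps{F}$ satisfies $\RuleWF{}.\ref{item:wf:stack}$ because, by \Cref{thm:fps-inbounds} and well-formedness of the caller chain, $\SP$ is always public and always lies in some $\fps{F} \subseteq \Stack_\mathrm{fps}$, the private stacks were chosen disjoint from $\DataMemSet$ before mapping so no global (including each $\psp{F}$) lives in stack memory, and all usable regions are zero-initialized; no general-purpose register becomes callee-saved, since the post-call reload re-fetches $\SP$ from $\psp{F}$ and assumes nothing preserved ($\RuleWF{}.\ref{item:wf:live}$); and by the simulation the sequential traces of $\ProgFPS$ mirror those of $\ProgFence$ and stay in mapped memory, the unmapped underflow regions never being reached sequentially ($\RuleWF{}.\ref{item:wf:segfault}$). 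For $\RuleTT{}$, I would build the witness typings from those of $\ProgFence$: keep every $\STStk{F}$ (frame layout unchanged), extend $\STGlob$ with $\STGlob(\psp{F}) = \Pub$, and extend each $\STReg{F}$ to the new instruction points via the no-op rule, keeping $\SP, \PC$ public. Each rule of \Cref{def:security-typeable} is then routine: \emph{conservative} holds by the simulation (inserted instructions read/write only the public $\psp{F}$, $\SP$, and constants); \emph{transmitters} and \emph{load/store consistency} hold because the inserted $\Load/\Store$ are CA \emph{global} accesses of the public variable $\psp{F}$ to/from the public $\SP$; \emph{policy consistency} holds since $\psp{F}$ is initialized to the public value $\StackEnd{F}$; \emph{always-public registers} and \emph{no-op} hold by construction; \emph{register deps} holds for the $\mathtt{MAX}/\mathtt{MIN}$ $\Op$s (public inputs $\SP$ and a constant, public output $\SP$); and \emph{public arguments} is inherited since no $\MyCall/\Ret$ was added or had its argument registers changed. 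Hence $\ProgFPS$ is security-typeable, and together with the previous step it satisfies \ourct{}.

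\textbf{Main obstacle.} The crux is the sequential simulation: correctly formalizing the \emph{per-invocation} frame relocation so that data-memory correspondence is maintained across calls and returns — where $\ProgFence$ grows one contiguous stack but $\ProgFPS$ hops between disjoint regions $\fps{F}$ — and discharging the claim that the $\mathtt{MAX}/\mathtt{MIN}$ bounds clips never fire on a sequential trace, which is precisely where $\RuleWF{}.\ref{item:wf:segfault}$ and \Cref{thm:fps-inbounds} are needed. Once $\approx$ is set up and shown to be label-preserving and closed under $\TransitionNT$, $\ctprog{}$-ness of $\ProgFPS$ is immediate and $\RuleWF{}/\RuleTT{}$ reduce to checking the explicit construction above.
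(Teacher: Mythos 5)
Your proposal is correct and follows the same route as the paper: the paper's own proof simply invokes \Cref{thm:fps-inbounds} to discharge \RuleWF{}.\ref{item:wf:stack} and asserts that all remaining \ourct{} properties ``trivially continue to hold,'' which is exactly the decomposition you use. The only difference is one of rigor, not of method --- your label-preserving sequential simulation and explicit witness typings ($\STGlob(\psp{F})=\Pub$, unchanged $\STStk{F}$, extended $\STReg{F}$) substantiate the step the paper leaves implicit, at the cost of having to argue separately that the $\mathtt{MAX}/\mathtt{MIN}$ clips never fire sequentially.
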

\begin{proof}
    \Cref{thm:fps-inbounds} implies the stack pointer $\SP$ is always in-bounds of $\Stack_\mathrm{fps}$ and so $\ProgFPS$ satisfies \RuleWF{}.\ref{item:wf:stack}.
    All other \ourct{} properties trivially continue to hold for $\ProgFPS$.
\end{proof}

\begin{theorem}[No \PrimStack{} in $\ProgFPS$]\label{thm:fps-no-stkl}\label{thm:no-stkl}
    No \ourprimitives{} (\S\ref{sec:properties:ourprimitives}) belong to \PrimStack{} in any trace of $\ProgFPS$.
\end{theorem}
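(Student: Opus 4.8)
The plan is to argue by contradiction, using \Cref{thm:fps-inbounds} as the key lever. Suppose some $I_i$ in a trace $e$ of $\ProgFPS$ is a \PrimStack{} \ourprimitive{}. By \Cref{thm:origin} it is a transient CA stack load $I_i \mapsto \Load\,[\SP+d],r'$, and since it is a \ourprimitive{} (\Cref{def:ourprimitives}) its output $(r',i+1)$ is a security-type violation not dynamic-dependent on any prior one; in particular $r'$ is publicly-typed at $I_i$ while $R_{i+1}(r')$ is secretly labeled. Because $F_i$ is a well-formed procedure, $d \in [0,\FrameSize{F_i})$, and by \Cref{thm:fps-inbounds}, $R_i(\SP) = \StackEnd{F_i} - m\cdot\FrameSize{F_i}$ for some $m\ge 0$. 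Since $\StackEnd{F_i}-\StackBase{F_i}$ is a multiple of $\FrameSize{F_i}$, the effective address $A = R_i(\SP)+d$ therefore either falls in $\fps{F_i}$'s unmapped underflow region (when $m=0$) or sits at a unique, well-defined frame offset $d$ within $\fps{F_i}$'s mapped usable region (when $m\ge 1$).

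First I would clear the trivial sources for the loaded value. If $A$ is unmapped, the transient-load rule writes $0_\Pub$ to $r'$, so $(r',i+1)$ is no violation, a contradiction; if $A$ is mapped but untouched by any store, the load reads the zero-initialized value $D(A)=0_\Pub$, again a contradiction. Otherwise $r'$ receives the value $R_j(r_s)$ stored by some instruction $I_j$ ($j<i$) with effective address $A$ and source register $r_s$ (only stores to mapped addresses ever enter the speculative store set or get drained, so $A$ is mapped, at frame offset $d$ of $\fps{F_i}$). I would then classify $I_j$: it is not a CA global store, since such a store targets a global variable and no global variable lies in stack memory (\RuleWF{}.\ref{item:wf:stack}). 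If $I_j$ is a CA stack store in a procedure $F_j$, \Cref{thm:fps-inbounds} places its target in $\fps{F_j}$; a mapped target there lies in $\fps{F_j}$'s usable region, which is disjoint from every other private stack's usable region, so $F_j=F_i$, and the frame alignment of the two stack pointers then forces $I_j$ to hit the same frame offset $d$. If $I_j$ is an NCA store, I would appeal to the Fence Insertion Pass: $I_j$ feeding the later CA load $I_i$ makes $(I_j,I_k)$ a \ncascal{} or \ncasctrl{} source-sink pair (with $I_k$ either $I_i$ itself or the first $\MyCall/\Ret$ executed after $I_j$ in $I_j$'s procedure), so by \Cref{thm:source-sink} $I_j$ executed sequentially --- and the Fence-pass guarantees survive the FPS transformation because it adds no NCA accesses, calls, returns, or transmitters with secret operands.

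Finally I would extract the contradiction from the security-typeability of $\ProgFPS$ (\Cref{cor:fps-cts}). Since $r'$ is publicly-typed at the CA stack load $I_i$, \RuleTT{}.\ref{item:tt:load-consistency} forces the stack variable $\STStk{F_i}(d)$ to be public. In the NCA-store subcase, $I_j$ ran sequentially and wrote to that publicly-typed stack variable, so \RuleTT{}.\ref{item:tt:conservative} says it wrote a public value, contradicting that $R_{i+1}(r')$ is secret. In the CA-stack-store subcase, \RuleTT{}.\ref{item:tt:store-consistency} makes $r_s$ publicly-typed; if $R_j(r_s)$ is public we again contradict $R_{i+1}(r')$ secret, and if $R_j(r_s)$ is secret then $(r_s,j)$ is itself a security-type violation on which $(r',i+1)$ is dynamic-dependent via the memory-dependency edge $(r_s,j)\DynDep{e}(r',i+1)$, contradicting that $I_i$ is a \ourprimitive{} (\Cref{def:ourprimitives}). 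Every branch yields a contradiction, so no transient CA stack load is a \ourprimitive{}; equivalently, no \ourprimitive{} in any trace of $\ProgFPS$ is of class \PrimStack{}. The main obstacle I anticipate is making the address arithmetic fully rigorous --- pinning down ``same procedure, same frame offset'' from frame-alignment plus pairwise disjointness of the private-stack usable regions, and correctly handling the $m=0$ underflow corner --- together with the short argument that the Fence-pass invariants (\Cref{thm:source-sink}, and implicitly \Cref{thm:no-ncas}) are preserved by the FPS pass; the rest is a direct application of \RuleTT{}'s typing rules.
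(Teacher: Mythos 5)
Your proposal is correct and follows essentially the same route as the paper's proof: argue by contradiction that the secret value loaded by the transient CA stack load must come from a prior same-address store, rule out CA global stores via \RuleWF{}.\ref{item:wf:stack} and NCA stores via the Fence Insertion guarantees (\Cref{thm:source-sink}/\Cref{thm:no-ncas}), use \Cref{thm:fps-inbounds} plus frame-size alignment to force a same-procedure, same-offset CA stack store, and then apply \RuleTT{}.\ref{item:tt:load-consistency}/\RuleTT{}.\ref{item:tt:store-consistency} to exhibit a prior security-type violation on which the load's violation is dynamic-dependent, contradicting \Cref{def:ourprimitives}. The only differences are added (welcome) explicitness about the unmapped/zero-initialized corner cases and about the fence-pass guarantees surviving the FPS transformation, which the paper leaves implicit.
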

\begin{proof}[Proof]
    Suppose for contradiction some \ourprimitive{} $I_j \mapsto \Load\,[\SP \texttt{+} d],r$ satisfies \PrimStack{} in a trace of $\ProgFPS$.
    $I_j$ read from a prior secretly-typed same-address store $I_i \!\mapsto\! \Store\,[r_a'\!+\!d'],r'$,
    which is neither a transient NCA store (\Cref{thm:no-ncas}), sequential NCA store (violates \RuleTT{}.\ref{item:tt:store-consistency}), nor CA global store (\RuleWF{}.\ref{item:wf:stack}).

    Thus, $I_i$ is a CA stack store. 
    Let (eq. 1) $A = R_i(\SP)+d' = R_j(\SP)+d$ be the effective address of $I_i$ and $I_j$.
    $I_i$ and $I_j$ must be in the same procedure $F$ since they are both in-bounds accesses to the same private stack (\Cref{thm:fps-inbounds}).
    Also by \Cref{thm:fps-inbounds}, (eq. 2) $R_i(\SP)=\StackEnd{F} - m' \cdot \FrameSize{F}$ and (eq. 3) $R_j(\SP)=\StackEnd{F} - m \cdot \FrameSize{F}$ for some $m,m' \geq 0$.
    Eqs. 1--3 imply 
    $d' - d = \FrameSize{F} \cdot (m' - m)$
    and thus $d' - d = 0 \pmod{\FrameSize{F}}$.
    Since frame offsets are less than the frame size (\RuleWF{}.\ref{item:wf:stack}), $d = d'$, so $I_i$ and $I_j$ access the same stack variable $d$.
    Thus $d$ (\RuleTT{}.\ref{item:tt:load-consistency}) and $I_i$ (\RuleTT{}.\ref{item:tt:store-consistency}) are publicly-typed like $I_j$.
    We conclude $(r',i)$ is a security-type violation but $(r',i) \DynDep{e} (r,j+1)$, thus $I_j$ is not a \ourprimitive{} by \Cref{def:ourprimitives}, a contradiction.
\end{proof}

\subsection{\tool{}'s Register Cleaning Pass}
\label{sec:tool:arg-pass}
\tool{}'s final pass, the Register Cleaning Pass, eliminates all \PrimArg{} \ourprimitives{} from $\ProgFPS$ to produce the fully mitigated output program $\ProgOut$. 
It inserts instructions to zero out all non-argument registers (defined by the calling convention $\ArgFn$, \Cref{def:well-formed}) before each call and return.
Formally, let $I$ be a $\MyCall/\allowbreak\Ret$.
If $I \!\mapsto\! \MyCall \, r$, set $\mathbf{r_\textbf{zero}} = \GprSet \setminus\allowbreak \ArgFn(I) \setminus r$;
if $I \!\mapsto\! \Ret$, set $\mathbf{r_\textbf{zero}} = \GprSet \setminus \ArgFn(I)$.
For each $r_\text{zero} \in \mathbf{r_\textbf{zero}}$, insert $\mathtt{MOV}\,r_\text{zero},0$ directly before $I$.


\begin{theorem}[No \PrimArg{} in $\ProgOut$]\label{thm:no-arg}
    No trace of $\ProgOut$ features any instruction satisfying \PrimArg{}.
\end{theorem}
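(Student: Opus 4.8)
The plan is to prove the contrapositive-style statement directly: assume some instruction $I_i$ in a trace $e$ of $\ProgOut$ satisfies \PrimArg{}, and derive a contradiction. By the definition of \PrimArg{} (\Cref{thm:origin}), $I_i$ is a transient $\MyCall/\Ret$ whose execution introduces a new security-type violation in a \emph{non-argument} register $r'$ at step $i+1$; that is, $r' \notin \ArgFn(I_i)$, and $r'$ is publicly-typed at $I_i$ but $R_{i+1}(r') = v_\Sec$ for some secret value $v$. Crucially, $\MyCall/\Ret$ do not modify any general-purpose register other than the ones implicit in the semantics (and $\PC$ is always public by declassification), so by the ASP semantics for $\MyCall$ and $\Ret$ (\S\ref{sec:model:call}, \S\ref{sec:model:ret}) the value of $r'$ is unchanged: $R_{i+1}(r') = R_i(r')$. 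Hence $r'$ already held the secret value $v$ at step $i$, i.e.\ $(r',i)$ is itself a security-type violation. (Here I'd note $r' \in \GprSet$, since $\SP,\PC,\ZR$ are always public by \RuleTT{}.\ref{item:tt:always-public} and declassification.)

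Next I would invoke the Register Cleaning Pass transformation. By construction, \tool{} inserts $\mathtt{MOV}\,r_\text{zero},0$ immediately before $I_i$ for every $r_\text{zero} \in \mathbf{r_\textbf{zero}}$, where $\mathbf{r_\textbf{zero}} = \GprSet \setminus \ArgFn(I_i) \setminus \{r\}$ for a call $\MyCall\,r$ and $\mathbf{r_\textbf{zero}} = \GprSet \setminus \ArgFn(I_i)$ for a return. So the key sub-claim is: every register that could violate its security type at the $\MyCall/\Ret$ (namely every non-argument GPR, and for a call also excluding the call-target register $r$) has in fact been zeroed by one of these inserted $\mathtt{MOV}$s executing sequentially just before $I_i$. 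The two cases to dispatch: (i) if $r' \in \mathbf{r_\textbf{zero}}$, then the $\mathtt{MOV}\,r',0$ executed at step $i-1$ (it is the instruction immediately preceding $I_i$ in the rewritten program, modulo the other inserted $\mathtt{MOV}$s which also just zero registers), setting $R_i(r') = 0_\Pub$, contradicting $R_i(r') = v_\Sec$; (ii) for a call $\MyCall\,r$, the only non-argument GPR \emph{not} in $\mathbf{r_\textbf{zero}}$ is the call-target register $r$ itself — but $r$ is a sensitive transmitter operand of the call (it is exposed via $\ObsCall{R(r)}$), so by \RuleTT{}.\ref{item:tt:xmit} $r$ is publicly-typed, and by the \ourct{} property (specifically that $\ProgOut$ satisfies \ourct{}, which is preserved through all three passes — \Cref{cor:fps-cts} and the fact that Register Cleaning only inserts $\mathtt{MOV}\,r_\text{zero},0$ into public-typed registers and thus preserves \RuleWF{}/\RuleTT{}) it holds a public value sequentially. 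The subtlety is that here we need $r'$ to not hold a secret \emph{transiently}; but recall $R_{i+1}(r')=R_i(r')$ and we are reducing to step $i$ — I would argue that for $r=r'$ the value at step $i$ equals the value it was last written with, and trace the dependency back. Actually the cleanest route for case (ii): if $r' = r$ is the call-target register, then $(r',i)$ being a security-type violation means $r'$ is secretly-labeled, but then the observation $\ObsCall{R_i(r)}$ emitted by the call is secretly-labeled, so $e$ is already an SCT violation — and more to the point, $I_i$ would not be a \emph{new} taint primitive because $(r',i)$ is a security-type violation that $I_i$'s output is dynamic-dependent on (via the No-op edge $(r',i)\DynDep{e}(r',i+1)$), contradicting \Cref{def:ourprimitives}. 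So in fact case (ii) never even reaches "$I_i$ is a taint primitive," and case (i) is handled by the inserted $\mathtt{MOV}$.

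Putting it together: in case (i) the inserted zeroing $\mathtt{MOV}$ contradicts $R_i(r')=v_\Sec$; in case (ii) $(r',i)\DynDep{e}(r',i+1)$ shows $I_i$ is dynamic-dependent on a prior security-type violation, contradicting that $I_i$ is a taint primitive per \Cref{def:ourprimitives}. Either way we have a contradiction, so no instruction of $\ProgOut$ satisfies \PrimArg{}. I expect the main obstacle to be the bookkeeping in case (i): precisely arguing that the inserted $\mathtt{MOV}\,r',0$ is the instruction executing at step $i-1$ (or at least that \emph{some} inserted instruction zeroing $r'$ executes between the last instruction that could have written $r'$ and $I_i$, with no intervening write to $r'$), and that the other inserted $\mathtt{MOV}$s for the other zeroed registers cannot re-taint $r'$ — which is immediate since they write the constant $0_\Pub$ to distinct registers. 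A secondary obstacle is being careful about the one-step delay convention (that register writes take effect in the \emph{next} configuration) when relating "the $\mathtt{MOV}$ executes at step $i-1$" to "$R_i(r') = 0_\Pub$", but this is exactly the convention already set up in \S\ref{sec:dynamic-dfg} and poses no real difficulty.
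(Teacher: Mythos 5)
Your proposal is correct and takes essentially the same approach as the paper, whose entire proof is the one-line observation that the Register Cleaning Pass publicly zeroes all non-argument registers before each $\MyCall/\Ret$, so \PrimArg{} can never be satisfied. Your elaboration is sound and in fact slightly more careful than the paper's: your case (ii) explicitly dispatches the unzeroed call-target register (via its public typing as a transmitter operand and \Cref{def:ourprimitives}), a corner the paper's terse argument leaves implicit, and the bookkeeping you flag about the inserted $\mathtt{MOV}$s executing before the $\MyCall/\Ret$ is likewise glossed over by the paper itself.
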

\begin{proof}
    We publicly zero all non-argument registers before each $\MyCall/\Ret$, so \PrimArg{} is never satisfied.
\end{proof}

\subsection{Proof of \tool{}'s Correctness}
\label{sec:tool:proof}
In \S\ref{sec:tool:fence}--\ref{sec:tool:arg-pass}, we proved that \tool{}'s three passes eliminate all \PrimNCAS{}, \PrimStack{}, and \PrimArg{} \ourprimitives{}.
Now, we prove in \Cref{thm:tool-correct} that the fully mitigated program $\ProgOut$ has no \PrimNCAL{} \ourprimitives{} and is thus 
SCT (\Cref{def:speculative-constant-time}), i.e., does not transiently leak secrets.

\begin{theorem}
\label{thm:tool-correct}
    $\ProgOut$ satisfies speculative constant-time. 
\end{theorem}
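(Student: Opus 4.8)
I would argue by contradiction via \Cref{cor:conditions-for-sct-violations}. First note that $\ProgOut$ is still \ourct{}: it is obtained from the \ourct{} program $\ProgIn$ by the Fence Insertion Pass (output $\ProgFence$ is \ourct{}), the FPS Pass (output $\ProgFPS$ is \ourct{} by \Cref{cor:fps-cts}), and the Register Cleaning Pass, which only inserts register-zeroing $\mathtt{MOV}$ instructions and so preserves well-formedness and security-typeability. Now suppose $\ProgOut$ violates SCT. Then \Cref{cor:conditions-for-sct-violations} yields a trace $e$ of $\ProgOut$ containing a transmitter $I_k$, at step $k$, whose sensitive operand $r$ is dynamic-dependent on a taint primitive $I_i$ of class \PrimNCAL{}, \PrimNCAS{}, \PrimStack{}, or \PrimArg{}. \Cref{thm:no-arg} excludes \PrimArg{} in $\ProgOut$ outright; and since the FPS and Register Cleaning passes add only constant-address memory accesses (to the fresh globals $\psp{F}$ and private stacks $\fps{F}$) together with zeroing moves, they introduce no new NCA loads or stores and leave the Fence Insertion Pass's cutset $\Cutset$ separating all source--sink pairs, so \Cref{thm:source-sink}, \Cref{thm:no-ncas}, and \Cref{thm:no-stkl} all carry over to $\ProgOut$ and exclude \PrimNCAS{} and \PrimStack{} there. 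Hence $I_i$ must be an \PrimNCAL{} primitive: a transient NCA load whose destination register, say $r_0$, is publicly typed at $I_i$ yet holds a secret at step $i{+}1$.

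\textbf{Core step.} The remaining task is to contradict \Cref{thm:source-sink}. Because $I_i$ executes transiently and \ourmodel{}'s transient bit never resets (immediate from the rules of \S\ref{sec:model:semantics}), every instruction executing at a step $\ge i{+}1$ --- in particular every instruction appearing later along the dynamic dependency chain $(r_0, i{+}1) \DynDeps{e} (r, k)$ --- also executes transiently. I would walk this chain forward and cut it into maximal \emph{segments} made up only of no-op, register, and \emph{same-procedure, same-offset} CA-stack dependency edges. Using the dynamic-to-static DFG correspondence of \S\ref{sec:static-dfg} --- \Cref{lem:tcfg-complete} lifts intra-procedural control flow into the T-CFG, and \Cref{thm:fps-inbounds} forces every CA-stack store/load pair on the chain to hit the same offset of the same procedure's private stack --- each segment that begins at the output of an NCA load mirrors a static dependency, and every segment in fact begins either at $I_i$ or at a later (hence transient) NCA load, since the only edge that can link two segments is a transient-NCA-store-to-NCA-load memory hop (a transient NCA store cannot feed a CA load, by the carried-over \Cref{thm:no-ncas}, while a same-offset CA-stack hop stays inside a segment). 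A segment can leave the static picture only by (i) reaching a transmitter whose sensitive operand it carries (an \ncalxmit{} pair), (ii) crossing a $\MyCall/\Ret$ in some register --- necessarily an argument register, because the Register Cleaning Pass publicly zeroes every non-argument register immediately before each call and return (an \ncalarg{} pair), or (iii) reaching a CA global store of the carried register (an \ncalglob{} pair); and since the chain terminates at the transmitter $I_k$, walking forward from $I_i$ the chain must hit one of (i)--(iii) (reaching $I_k$ being itself an instance of (i), possibly after one last absorbed CA-stack or forced-sequential CA-global hop). At that first exit, the current segment's initial NCA load --- being $I_i$ or a later NCA load, hence transient --- together with the exit point forms a source--sink pair occurring in $e$ with the load strictly earlier, so \Cref{thm:source-sink} forces that load to execute \emph{sequentially}, contradicting transiency. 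Thus no such trace exists and $\ProgOut$ satisfies SCT.

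\textbf{Main obstacle.} The crux is bridging the \emph{dynamic} dependency structure handed to us by \Cref{cor:conditions-for-sct-violations} --- which may wander through memory and across procedure boundaries without bound --- with the \emph{static, intra-procedural} source--sink reasoning behind \Cref{thm:source-sink}. This splits into two parts. Every memory flow on the chain must be shown either captured by a static stack-dependency edge --- precisely what \Cref{thm:fps-inbounds} (the FPS Pass) buys, since private, frame-aligned stacks force matching CA stack accesses to the same offset of the same procedure --- or else already forced to be sequential (CA global stores through \ncalglob{}; CA loads reading from transient NCA stores excluded by \Cref{thm:no-ncas}). And every cross-procedure flow must be shown to travel in an argument register --- the whole purpose of the Register Cleaning Pass --- so that it becomes an \ncalarg{} pair rather than an untracked no-op edge through a call; monotonicity of the transient bit is what makes every such source load provably transient. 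A smaller prerequisite, which I expect to be routine, is checking that the FPS and Register Cleaning passes preserve the Fence Insertion Pass's guarantees ($\Cutset$ still cuts every source--sink pair, and the inserted instructions create no new sources or sinks), given that the new code touches only fresh globals, fresh private stacks, and zeroing moves.
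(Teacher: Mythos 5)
Your proposal is correct and takes essentially the same route as the paper's own proof: reduce to an \PrimNCAL{} primitive via \Cref{cor:conditions-for-sct-violations} together with \Cref{thm:no-ncas,thm:no-stkl,thm:no-arg}, then show the dynamic dependency chain must produce a \ncalxmit{}, \ncalarg{}, or \ncalglob{} source-sink pair (absorbing CA-stack hops via \Cref{thm:fps-inbounds}, excluding NCA-store-to-CA-load hops via \Cref{thm:no-ncas}, and using Register Cleaning at calls/returns), so \Cref{thm:source-sink} forces the source load to execute sequentially, contradicting its transience. The only difference is organizational---you walk the chain forward, restarting segments at later (necessarily transient, by monotonicity of the transient bit) NCA loads, whereas the paper fixes the most recent transient NCA load and case-splits on the first dynamic edge not mirrored by a static dependency---and your added care about carrying the pass guarantees over to $\ProgOut$ is a detail the paper leaves implicit.
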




\begin{proof}[Proof]
    Suppose for contradiction $\ProgOut$ is not SCT.
    By \Cref{cor:conditions-for-sct-violations}, a transmitter $I_l$ with sensitive operand $\rxmit$ is dynamic-dependent (\S\ref{sec:dynamic-dfg}) on an \PrimNCAL{} \ourprimitive{} (i.e., a transient NCA load, \S\ref{sec:properties:ourprimitives}) in some trace $e$ of $\ProgOut$.
    Let $I_i \mapsto \Load\,[r_a+d],\rncal$ be the 
    \textit{most recent} transient NCA load to execute on which $(\rxmit, l)$ is dynamic-dependent,
    and let $F$ be the procedure containing $I_i$.
    That is, $(\rncal,i+1) \DynDeps{e} (\rxmit,l)$, and
    for all steps $k$ with $i < k < l$, $I_k \mapsto \Load\,[r_a'+d'],r' \implies (r',k+1) \not \DynDeps{e} (\rxmit,l)$.
     \textcolor{white}{be}
%
    Recall that we use a one-step delay $(\rncal, i+1)$ to describe dependencies to/from the output of a load $I_i$ (\S\ref{sec:dynamic-dfg}).
    
    We will show that there exists a source-sink pair $(I_i, I_j)$ ($i \!<\! j \!\leq\! l$) which guarantees that $I_i$ executed \textit{sequentially} via \Cref{thm:source-sink}, yielding a contradiction (since $I_i$ is \textit{transient} by assumption).
    There are \textit{two cases} we consider: 
    either $(\rxmit,I_l)$ \textit{is} or \textit{is not} static-dependent on $I_i$.

    %

    If $(\rxmit, I_l)$ \textit{is} static-dependent on $I_i$ (\S\ref{sec:static-dfg})---i.e., $(\rncal,I_{i+1}) \StcDeps{F} (\rxmit, I_l)$---then $(I_i, I_l)$ is a \ncalxmit{} source-sink pair and we are done.
    
    \textit{Otherwise}, there is some dynamic dependency from
    $I_i$ to $I_l$ not mirrored in a static dependency:
    \begin{alignat}{3}
        (\rncal, i\!+\!1)   &\DynDeps{e}    &(r&, j)     &&\!\DynDep{e}\! (r',k\!+\!1) \!\DynDeps{e}\! (\rxmit, l) \nonumber \\
        (\rncal, I_{i+1}) &\StcDeps{F}   &(r&,I_j)    &&\!\not\StcDep{F}\! (r', I_{k+1}) \label{eq:stc}
    \end{alignat}
    We will show that $(I_i, I_j)$ forms a source-sink pair. 
    Clearly, $(r,j) \DynDep{e} (r',k\!+\!1)$ is not an intraprocedural dynamic register dependency (\S\ref{sec:dynamic-dfg}), or else $(r,I_j) \StcDep{F} (r',I_{k+1})$ would hold. 
    Thus, it must be a dynamic interprocedural register dependency or memory dependency,
    implying (a) $I_j \mapsto \MyCall/\Ret$ or (b)--(d) $I_j$ is a store.

    \noindent \quad (a) Suppose $I_j \!\mapsto\! \MyCall/\Ret$. 
    The Register Cleaning Pass (\S\ref{sec:tool:arg-pass}) breaks all dependencies through \textit{non}-arguments,
    so $r$ must be an argument.
    Thus, $(I_i, I_j)$ forms a \ncalarg{} source-sink pair and we are done.

    \noindent \quad (b) If $I_j$ is an NCA store, $I_k$ is a load.
    $I_k$ must be a CA load, since we already assumed $I_i$ was the most recent NCA load.
    By \Cref{thm:no-ncas}, $I_j$ executed sequentially, thus so did $I_i$, contradicting that $I_i$ is a transient NCA load.

    \noindent \quad (c) If $I_j$ is a CA global store, then $(I_i, I_j)$ is a \ncalglob{} source-sink pair and we are done.
    
    \noindent \quad (d) Else, $I_j \!\mapsto\! \Store \, [\SP \texttt{+} d],r$ is a CA stack store.
    $I_k$ is not an NCA load (by assumption, $I_i$ is the most recent NCA load on which $I_l$ depends) or CA global (\RuleWF{}.\ref{item:wf:stack}).
    Thus, $I_k \!\mapsto\! \Load \, [\SP \texttt{+} d],r'$ is a same-offset CA stack load in $F$ (\Cref{thm:fps-inbounds}),
    so $(r,I_j) \!\StcDep{F}\! (r',I_k+1) = (r',I_{k+1})$ by the definition of static stack dependencies (\S\ref{sec:dynamic-dfg}), contradicting \Cref{eq:stc}.
\end{proof}

\section{Hardening Software Against Spectre} 
\label{sec:case-study}



%
\label{sec:mitigation-strategies}
We produce a code artifact \toolclang{}, an implementation of \tool{} for LLVM 14.
We empirically evaluate \toolclang{} on a suite of cryptographic primitives to assess its performance overhead.
For comparison, we evaluate \textit{three variants} of \toolclang{} alongside \textit{two baseline mitigations} and an \textit{insecure baseline} \textsc{none}.
See \S\ref{app:implementation-details} for additional LLVM-specific \toolclang{} implementation details.


\paragraph{\bf Baseline Mitigations}
\label{sec:case-study:baseline-mitigations}
Our baseline mitigations, \lfenceall{} and \slhall{}, layer Spectre mitigations discussed in \S\ref{sec:spectre-mitigations}.
\Cref{tab:mitigation-comparison} compares their security guarantees to \tool{}.
The \toolclang{} repository 
contains evaluations of two additional SLH-based mitigations based on BladeSLH~\cite{vassena:blade} and UltimateSLH~\cite{uslh}.

\paragraph{\bf \Tool{} Variants}
\label{sec:case-study:tool-variants}
To justify our hardware model (\S\ref{sec:hardware-model}), we implement three \toolclang{} variants: \toolclang{} (our proposal), \toolpsf{}, and \toolssbd{}.

\toolpsf{} implements a \tool{} extension that mitigates Spectre-\psf{} in software (\S\ref{app:model:psf}).
In general, \psf{} implicates \textit{all transient loads} as \ourprimitives{} (\S\ref{sec:properties:ourprimitives}), i.e., all loads may transiently return secrets, as long as the program has stored a secret since the last $\Lfence$.
Thus, we expand \tool{}'s \ncalxmit{} and \ncalarg{} source-sink pairs (\S\ref{sec:tool:fence:source-sink-pairs}) to encompass \textit{all} loads (not just NCA loads) as sources and omit the three other pair types.
The FPS Pass (\S\ref{sec:tool:fps-pass}) is not needed, since with \psf{} all stack loads may transiently return secrets.

\toolssbd{} implements a \tool{} extension where \stl{} is disabled, e.g., via \ssbd{}~\cite{intel:ssb}.
\toolssbd{} omits the FPS Pass, since its core motivation is mitigating Spectre-\stl{}.
It is replaced with a lighter-weight \textit{Stack Initialization Pass}, which zero-initializes the newly-allocated stack frame in a procedure's prologue.
A new kind of source-sink pair \textsc{call-xmit} is added to the Fence Insertion Pass (\S\ref{sec:tool:fence}), where sources are $\MyCall$s and sinks are transmitters that are dependent on CA stack loads. This pair captures that \rsb{} can cause a procedure to execute with the wrong stack frame featuring mismatching security types.


\paragraph{\bf Transmitters}
\toolclang{} assumes five transmitters: conditional branches, indirect branches (except returns), loads, stores, and division.
\ourmodel{} already models the first four (\S\ref{sec:model:transmitters});
\S\ref{sec:extension} shows how to extend \ourmodel{} to capture $\mathtt{DIV}$.


\begin{figure*}[t]
    \centering
    \includegraphics[width=\linewidth]{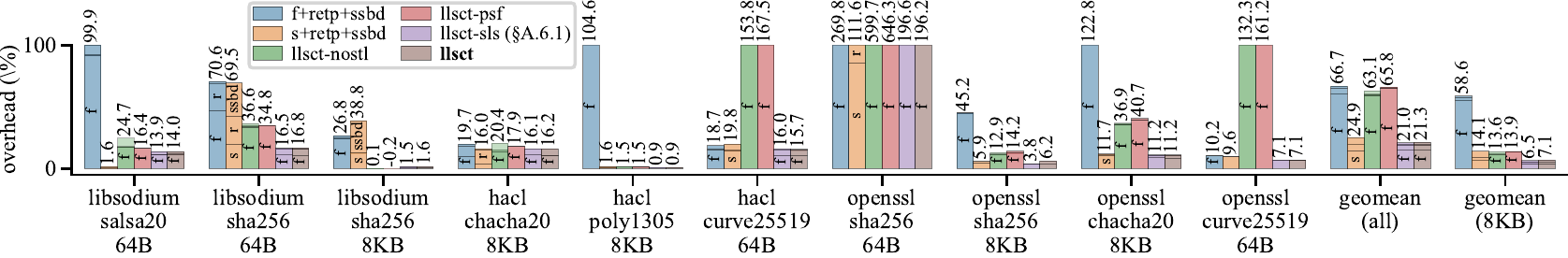}
    \caption{\small Runtime overheads of mitigations for crypto primitives in \libsodium{}, \hacl{}, and OpenSSL relative to code compiled with no mitigations. 
    Segments within the same bar indicate the additional overhead incurred by layering on the mitigation component.
    We label components with $\geq\!15\%$ overhead
    (``f'' = \lfence{}, ``s'' = \slh{}, ``r'' = \retpoline{}, ``ssbd'' = \ssbd{} speculation control).
    Total percent overhead is at the top of each bar. Overheads $> \!100\%$ are cut off.
     }
    \label{fig:benchmark-results}
\end{figure*}

\paragraph{\bf Compilation Setup}
All mitigations compile code with LLVM 
with \texttt{-O3} optimizations.
All \toolclang{} variants pass the \texttt{-mllvm -sct} flag to LLVM to enable relevant \tool{} mitigation passes.
We link all code using \textsc{LLD} \cite{llvm-lld}, which supports the \texttt{-z retpolineplt} flag (required for baseline mitigations using \retpoline{}). 
When compiling with \toolclang{} variants, we disable the following compiler optimizations which violate \ourct{} requirements (\S\ref{sec:propreties:ourct}):
\texttt{-mllvm -no\allowbreak{}-stack\allowbreak{}-slot\allowbreak{}-sharing},
\texttt{-mno\allowbreak{}-red\allowbreak{}-zone},
\texttt{-mllvm -no\allowbreak{}-argument\allowbreak{}-promotion}, \texttt{-fno\allowbreak{}-jump\allowbreak{}-tables}.
See \S\ref{app:unsafe-compiler} for details.

\paragraph{\bf Hardware Modes}
\label{sec:hardware-modes}
Each mitigation assumes a specific hardware model to uphold its assumptions (\S\ref{sec:hardware-model}), which we realize with a set of runtime flags  called the \textit{hardware mode}. We set the hardware mode before executing the program.
%
\textsc{none} assumes mode $\hwnone = \emptyset$, i.e., it does not restrict behavior.
%
\lfenceall{} and \slhall{}
assume mode $\hwssbd = \{\texttt{ssbd}\}$ which disables \stl{}.
\toolclang{} assumes mode $\hwasp = \{\texttt{doitm},\allowbreak \texttt{rrsbad},\allowbreak \texttt{psfd},\allowbreak \texttt{ibt},\allowbreak \texttt{shstk}\}$ to ensure the processor refines \ourmodel{} (\S\ref{sec:model}).
\toolssbd{} assumes mode $\hwaspssbd = \hwasp{} \cup \hwssbd{}$, which disables \stl{} in \ourmodel{}. 
\toolpsf{} assumes mode $\hwasppsf = \hwasp \setminus \{\texttt{psfd}\}$, which enables \psf{} in \ourmodel{}. 

\paragraph{\bf System and Workloads}
We run all experiments on a 24-core Alder Lake\footnote{%
While Alder Lake implements weak CET-IBT, 
we expect comparable performance on Alder Lake N, which implements strong CET-IBT (\S\ref{app:cet}).}
Intel\textregistered{} Core\texttrademark{} i9-12900KS processor with 640 KiB L1d, 768 KiB L1i, 14 MiB L2, and 30 MiB L3 caches and 128 GB RAM,
running a fork of Linux v5.9.0 that adds usermode IBT and SHSTK support~\cite{intel-cet-user}.
%

We evaluate crypto primitives from \libsodium{} (Salsa20, SHA-256)~\cite{libsodium}; the verified CT crypto library \hacl{} (ChaCha20, Poly1305, ECDH Curve25519)~\cite{hacl}; and OpenSSL (SHA-256, ChaCha20, ECDH Curve25519)~\cite{openssl}.


\section{Results}
\label{sec:results}


\Cref{fig:benchmark-results} compares the performance of our cryptographic benchmarks when mitigated with \toolclang{}, its two variants, and our two baseline mitigations. Performance is normalized to  \textsc{none}.
On average, \toolclang{} incurs \textit{lower overhead} (21.3\%) than both baseline mitigations 
(66.7\%/24.9\%) for \textsc{f}/\textsc{s}+\textsc{retp}+\ssbd{}).
\Cref{fig:benchmark-results} also decomposes \toolclang{}'s overhead into four components: Fence Insertion (``f,'' \S\ref{sec:tool:fence}), Function-Private Stacks (\S\ref{sec:tool:fps-pass}), Register Cleaning (\S\ref{sec:tool:arg-pass}), and \tool{}'s \hwasp{} hardware mode (\S\ref{sec:hardware-modes}).
Unsurprisingly, the overhead of fence insertion dominates overhead in all cases, whereas the overhead of FPS, register cleaning, and the \hwasp{} hardware model are negligible (and are thus unlabeled).


\paragraph{\bf Large Buffers}
Cryptographic code features arithmetic-heavy loops for common operations like encryption, decryption, hashing, and message authentication.
Thus, loop performance is crucial for overall cryptographic code performance.
\toolclang{} introduces \textit{half the overhead} (7.1\%) of the next best mitigation, \slhall{ } (14.1\%), for large buffer sizes (8\,KB).
This is because \tool{}'s loop-aware transient CFG construction (\S\ref{sec:tool:fence:tcfg}) 
prioritizes placing $\Lfence$s \textit{outside} of loops in its Fence Insertion Pass; both baselines \textit{require} placing mitigations \textit{inside} of loops.

%
\paragraph{\bf \Toolclang{} Variants}
\label{sec:results:variants}
Both \toolssbd{} and \toolpsf{} perform \textit{much worse} than \toolclang{}, in the worst case (OpenSSL SHA-256 64B) incurring 599.7\% and
646.3\% overhead, respectively.
Since neither variant uses FPS, they add new source-sink pairs (\S\ref{sec:case-study:tool-variants}) which require inserting more fences, \textit{tripling}  (63.1\%/65.8\%) the overhead relative to \toolclang{} (21.3\%).
Clearly, FPS and \tool{}'s default selection of source-sink pairs are essential to \toolclang{}'s efficiency.
Secondly, there is no clear way to adapt \tool{} to mitigate Spectre-\psf{} fully in software while maintaining \tool{}'s low overhead.
We conclude that the \ourmodel{} hardware model is the \textit{best fit} for the \tool{} approach, striking the ideal balance between 
enabling speculation primitives that can be efficiently mitigated in software (\pht{}, \btb{}, \rsb{}, \stl{}) and 
disabling those that cannot (\psf{}).

\paragraph{\bf Baseline Mitigations}
\label{sec:results:baseline}
As expected, \lfenceall{} performs worse than \slhall{} and \toolclang{} overall, since inserting $\Lfence$s after every conditional branch is expensive~\cite{lfenceoverhead}.
However, \slhall{} \textit{sometimes performs worse} than \lfenceall{}.
For \libsodium{}'s SHA-256 (8KB), \slhall{}'s overhead is over 10\%/35\% more than \lfenceall{}'s/\toolclang{}'s.
\Cref{fig:benchmark-results} shows that in this case, enabling \ssbd{} on top of \slh{}+\retpoline{} (which exhibits 12.7\% overhead) incurs 
    \textit{significant additional overhead} (26.1\%).
In contrast, enabling \ssbd{} for the insecure baseline \textsc{none} incurs $<1\%$ overhead.
This difference is likely due to the complexity that \slh{}'s masking operations introduce into the address calculation of stores; with \ssbd{}, stores must wait for longer to perform.

\paragraph{\bf Hardware Modes}
The top segment of each bar in \Cref{fig:benchmark-results} indicates the overhead of the mitigation's hardware mode (\S\ref{sec:hardware-modes}) when layered on top of its software mitigations.
The average overheads across all/8KB benchmarks for each are the following:
1.7\%/0.9\% for \lfenceall{} with \hwssbd{};
5.5\%/4.9\% for \slhall{} with \hwssbd{};
2.9\%/1.6\% for \toolssbd{} with \hwaspssbd{};
0.8\%/0.3\% for \toolpsf{} with \hwasppsf{}; and
1.9\%/2.3\% for \toolclang{} with \hwasp{}.

\section{Related Work and Conclusions}
\label{sec:related}
%
Several works study detection, formal foundations, and mitigation of Spectre attacks~\cite{survey:transient-exec-attacks,cauligi:sok-spectre-sw-defense}.

\paragraph{\bf Software Detection}
Symbolic execution~\cite{guarnieri:spectector, speculative-execution-combinations, binsec-haunted, ctfoundations:pldi:20, guanhua:kleespectre, guo:specusym} is the most widely used technique to detect Spectre vulnerabilities in programs.
However, existing detection tools do not scale well to large programs or to new speculation primitives.
For example, none can detect Spectre-\btb{} vulnerabilities due to limitations of \textit{always-mispredict} semantics~\cite{guarnieri:spectector, binsec-haunted, cauligi:sok-spectre-sw-defense}.
Other approaches include \textit{fuzzing}~\cite{oleksenko:specfuzz, oleksenko:revizor,nemati:scamv} or \textit{static analysis}~\cite{2021:oo7, lcms:mosier:2022}. 

\paragraph{\bf Formal Foundations}
Recent work deploys formal techniques to model and mitigate Spectre attacks~\cite{barthe:highassurance,ctfoundations:pldi:20,exorcisingspectrev1,guarnieri:spectector,vassena:blade,guanciale:inspectre,guarnieri:contracts,exorcisingspectrev1,shivakumar:spectredeclassified}.
%
%
Patrignani and Guarnieri~\cite{exorcisingspectrev1} and Shivakumar et al.~\cite{shivakumar:spectredeclassified} both use formal models to demonstrate that LLVM's \slh{} mitigation is incomplete for Spectre-\pht{}.
%
Fabian et al.~\cite{speculative-execution-combinations} define an extensible framework for composing semantics for individual speculation primitives to model and detect Spectre leakage due to their combinations. 
Mosier et al.~\cite{lcms:mosier:2022} and Ponce-de-León and Kinder~\cite{cat-spectre} take an axiomatic approach inspired by memory consistency models to model and detect program instructions that may transiently access or leak secrets.

\paragraph{\bf Software Mitigation}
Few compiler-based Spectre mitigation proposals~\cite{cauligi:sok-spectre-sw-defense} are formally grounded~\cite{vassena:blade} or protect against multiple Spectre variants~\cite{Narayan:swivel, venkman}.
Some detection tools~\cite{lcms:mosier:2022, cheang2019formal} can mitigate detected Spectre vulnerabilities but do not evaluate the performance of the resulting program.
None of these mitigations are readily deployable in an existing toolchain, in contrast with LLVM's \lfence{} and \slh{} mitigations (\S\ref{sec:mitigation-strategies}).
\paragraph{\bf Conclusions}
We present \tool{}, the first comprehensive mitigation for existing hardware that prevents all 
Spectre-\pht{}/\bti{}/\rsb{}/\stl{}/\psf{} leakage in \ourct{} programs.
We prove \tool{}'s correctness using our operational semantics, \ourmodel{}, and implement it as a code artifact, \toolclang{}, in the LLVM compiler infrastructure.
We evaluate \toolclang{} on a suite of cryptographic primitives from \libsodium{}, \hacl{}, and OpenSSL, and demonstrate significant performance and security improvements over the state-of-the-art. 


\section*{Acknowledgment}
This work was supported in part by the National Science Foundation (NSF) under award numbers CNS-2153936 and CAREER CCF-2236855; by a gift from Intel; and by the German Federal Ministry of Education and Research (BMBF) 
through funding for the CISPA-Stanford Center for Cybersecurity 
(FKZ: 13N1S0762). We thank the anonymous reviewers for their valuable feedback during the review process. We also thank Carlos Rozas and Jason Brandt from Intel for clarifying the technical implementation details of Intel CET-IBT.

\bibliographystyle{IEEEtran}
\bibliography{references}

\appendices

\section{Supplemental Material}

\subsection{Proof that \ourct{} is a Strengthening of \ctprog{}}
\label{app:sec:cts->ct}
\begin{theorem}
\label{app:thm:cts->ct}
    If a program $\Prog$ for \ourmodel{} satisfies \ourct{} (\Cref{def:ourct}), then it also satisfies \ctprog{} (\Cref{def:constant-time}).
\end{theorem}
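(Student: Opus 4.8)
The statement is essentially a direct consequence of the typing rules \RuleTT{}.\ref{item:tt:conservative} (Conservative) and \RuleTT{}.\ref{item:tt:xmit} (Transmitters) of \Cref{def:security-typeable}, together with \ourmodel{}'s label-propagation conventions (\S\ref{sec:model:labels}). The plan is to unfold \Cref{def:constant-time} and verify, transmitter by transmitter, that no sequential observation can carry a $\Sec$ label. (Read completely literally, \Cref{def:ourct} already lists \ctprog{} as a conjunct, so the claim is immediate; the real content of the argument below is that \RuleWF{} and \RuleTT{} alone already force \ctprog{}, so the explicit \ctprog{} conjunct in \Cref{def:ourct} is in fact redundant.)

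First I would fix an arbitrary sequential trace $e = C_0 \Yields{\TokNT}{O_0} \cdots \Yields{\TokNT}{O_n} C_{n+1}$ of $\Prog$ and an arbitrary step $i \in \{0,\dots,n\}$, writing $I_i = R_i(\PC)$ for the instruction executing at step $i$. If $I_i$ is not a transmitter --- i.e.\ $I_i$ is $\Jmp$, $\Op$, $\Ret$, $\Endcall$, $\Lfence$, or $I_i$ is unmapped --- then the instruction semantics of \S\ref{sec:model:semantics} emit $O_i = \ObsNone$, which is not $\Sec$-labeled, and that step is done. Otherwise $I_i$ is one of $\Bnz$, $\MyCall$, $\Load$, $\Store$; let $F_i$ be the procedure containing $I_i$ (which exists by \RuleWF{}.\ref{item:wf:procedures}) and let $r$ be the sensitive register operand whose value the observation of $I_i$ exposes --- the condition register for $\Bnz$, the target register for $\MyCall$, and the base register $r_a$ for $\Load/\Store$.

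The crux is a two-step argument applied uniformly to these four cases. By \RuleTT{}.\ref{item:tt:xmit}, $r$ is publicly typed at $I_i$, i.e.\ $\STReg{F_i}(r, I_i) = \Pub$ (for a CA access, where $r_a \in \{\ZR, \SP\}$, this also follows from \RuleTT{}.\ref{item:tt:always-public} for $\SP$, or from $\ZR$ carrying $0_\Pub$). By \RuleTT{}.\ref{item:tt:conservative}, no publicly-typed register ever holds a $\Sec$-labeled value in any sequential trace, so $R_i(r) = v_\Pub$ for some $v \in \ValSet$. Feeding this into the observation-emitting rules: $\Bnz$ and $\MyCall$ emit $\ObsBnz{R_i(r)}$ and $\ObsCall{R_i(r)}$, which are $\Pub$-labeled; $\Load/\Store$ emit $\ObsLoad{A_l}$ / $\ObsStore{A_l}$ with $A_l = R_i(r_a) + d_\Pub$, whose label is $\Pub \LabelJoin \Pub = \Pub$ by the label-propagation rule for arithmetic. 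Hence $O_i$ is never $\Sec$-labeled; since $i$ and $e$ were arbitrary, $\Prog$ satisfies \Cref{def:constant-time}. I do not expect a real obstacle here: the only care needed is to match the informal phrase ``sensitive register operand'' in \RuleTT{}.\ref{item:tt:xmit} to the precise operand each transmitter's observation actually exposes (condition vs.\ call target vs.\ address base register), and to observe that the constant displacement $d$ contributes only a $\Pub$ label.
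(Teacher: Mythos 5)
Your proposal is correct and matches the paper's own argument in substance: both hinge on \RuleTT{}.\ref{item:tt:xmit} (sensitive transmitter operands are publicly typed) combined with \RuleTT{}.\ref{item:tt:conservative} (publicly-typed registers never hold secret values sequentially), the paper merely phrasing it as a contrapositive while you argue directly with a per-transmitter case split. Your added observations (the literal redundancy of the \ctprog{} conjunct in \Cref{def:ourct}, and the $\Pub$ label of the displacement $d$ and of $\SP$/$\ZR$ for CA accesses) are sound refinements, not a different route.
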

\begin{proof}
    We prove the contrapositive.
    Suppose a \textit{sequential} trace $e$ contains a \ctprog{} violation, i.e., 
    it exposes a secretly-labeled sensitive operand $\rxmit$ of some transmitter $I_i$ via observation $O_i$.
    By \ourct{} rule \RuleTT{}.\ref{item:tt:xmit}, $\rxmit$ must be publicly-typed at $I_i$ (i.e., $\STReg{{F_i}}(\rxmit, I_i) = \Pub$).
    Thus, $e$ violates the public type of $\rxmit$ by sequentially assigning it a secret value, violating \ourct{} rule \RuleTT{}.\ref{item:tt:conservative}.
    Thus, $\Prog$ is not \ourct{}.
\end{proof}

\subsection{Proof of Theorem \ref*{lem:tcfg-complete}}
\label{app:sec:tcfg-complete}


\begin{proof}
    Let $I_i \in F$, and let $I_j \in F$ ($i < j$) be the \textit{next} executed instruction belonging to $F$ in some trace $e$ 
    (i.e., for all $l$ where $i \!<\! l \!<\! j$, $I_l \!\not \in\! F$).
    We will show that $I_i = I_j$ or $I_j \in \TSuccs(I_i)$ and thus $I_i \tcfgs{F} I_j$.
    The claim follows due to the transitivity of $\tcfgs{F}$.

    If $I_i$ halts execution (i.e., $C_i = C_{i+1}$) due to an $\Lfence$ or trap, 
    then clearly $I_j = I_i$, so we are done.

    Now, suppose $I_i \mapsto \MyCall \mid \Ret$.
    Then $I_i \in \mathbf{I_\mathrm{exit}}$
    and the instruction that executed before $I_j$ must have been a $I_{j-1} \mapsto \MyCall \mid \Ret$ instruction in order to have re-entered $F$ at step $j$.
    Thus, $I_j$ is a $\Endcall$ or post-$\MyCall$ instruction (due to \ourmodel{}'s control-flow restrictions, \S\ref{sec:model:call}--\ref{sec:model:ret}), and so $I_j \in \mathbf{I_\mathrm{enter}}$.
    Thus, $I_j \in \TSuccs(I_i)$, so we are done.

    Else, the instruction executing after $I_i$ is in the same procedure, or $I_{i+1} \in F$,
    so $I_{i+1} = I_j$.
    Since $I_{i+1} \in \Succs(I_i)$, $I_{i+1} = I_j \in \TSuccs(I_i)$  by definition (\S\ref{sec:tool:fence:tcfg}). 
\end{proof}

\subsection{Proof of Theorem \ref*{thm:source-sink}}
\label{app:sec:source-sink}


\begin{proof}
    Let $(I_i, I_k)$ be a source-sink pair, 
    and suppose for contradiction that $I_i$ executed transiently.
    By \Cref{lem:tcfg-complete}, there is a path $I_i \to_\mathrm{tcfg*}^F I_k$ in the T-CFG for $F$.
    Furthermore, since $I_i$ did not trap, the path is non-trivial (i.e., for some $i \leq j < j' \leq k$, $I_j \tcfg{F} I_{j'}$).
    \tool{}'s Fence Insertion Pass inserted an $\Lfence$ along all paths in the T-CFG from $I_i$ to $I_k$ (\S\ref{sec:tool:fence:insertion}), 
    thus $I_i \tcfgs{F} I_j \tcfgs{F} I_k$ for some $I_j \mapsto \Lfence$ ($i < j < k$).
    Recall that $\Lfence$s halt transient execution (\S\ref{sec:model:semantics:lfence}), so $I_j = I_k$.
    Thus, $I_k$ is not a sink (i.e., the real sink did not execute), a contradiction.
\end{proof}


\subsection{Proof of Theorem \ref*{thm:fps-inbounds}}
\label{app:sec:fps-inbounds}


\begin{proof}
    It suffices to show that each private stack pointer (PSP) load $I_k \mapsto \Load\,[\ZR+\psp{F}],\SP$ inserted by the FPS Pass at procedure (re-)entrypoints (L\ref{line:fps:prolog:ld} or L\ref{line:fps:call:ld} in \Cref{fig:fps-example}) returns an \textit{in-bounds} and \textit{aligned} value for the current procedure $F$.

    We use the following predicate to capture whether $\SP$ is in-bounds and aligned at step $i$, where $I_i$ is in procedure $F$ (recall from \S\ref{sec:tool:fps-pass} that $\StackBase{F}$, $\StackEnd{F}$, and $\FrameSize{F}$ denote $F$'s stack base, stack end, and frame size, respectively):
    $$Q(i) : \exists m \geq 0, \StackBase{F} \leq R_i(\SP) = \StackEnd{F} - m \cdot \FrameSize{F}.$$
    We show that the result of each PSP load $I_k$ satisfies $Q(k+1)$ by induction on the number of such loads. 

    \textit{Base case.}
    If $I_k$ is the first PSP load to execute, it executed in the prologue (L\ref{line:fps:prolog:ld}) of the first procedure, before any stores have executed.
    Thus, $I_k$ read from initial memory, which the FPS Pass initialized to $D_0(\psp{F}) = \StackEnd{F}$.
    Therefore, $Q(k+1)$ is satisfied for $m = 0$.

    \textit{Inductive step.}
    Suppose that for all PSP loads $I_i$ executing before $I_k$ ($i < k$), $Q(i+1)$ is satisfied (i.e., they return an in-bounds and aligned PSP for their respective procedures).
    
    If $I_k$ read from initial memory, $Q(k+1)$ holds for the same reason as in the base case.
    Otherwise, $I_k$ read from a prior same-address store $I_j$.
        
    Suppose for contradiction \textbf{$\bm{I_j}$ is an NCA store}.
    The FPS Pass only inserts a PSP load $I_k$ after a procedure (re)entrypoint, so some $\MyCall/\Ret$ executed between $I_j$ and $I_k$, forming a \ncasctrl{} source-sink pair. Thus, $I_j$ executed sequentially (\Cref{thm:source-sink}). 
    This implies $I_j$ in $\ProgIn$ sequentially wrote to 
    global address $\psp{F}$, 
    which was unmapped prior to the FPS Pass,
    contradicting \RuleWF{}.\ref{item:wf:segfault}.

    Suppose for contradiction \textbf{$\bm{I_j}$ is a CA stack store} in some procedure $G$.
    Since $I_k$ is a CA global load, the PSP load $I_i$ directly preceding $I_j$ read an out-of-bounds PSP that pointed into global memory. So, $R_{i+1}(\SP) \not \in \fps{G}$, 
    a contradiction of the inductive hypothesis
    that $Q(i+1)$ holds.

    Thus, \textbf{$\bm{I_j}$ is a CA global store}. Furthermore, it is the PSP store $I_j \mapsto \Store \, [\ZR+\psp{F}],\SP$ (L\ref{line:fps:prolog:st}, L\ref{line:fps:epilog:st}), since only instructions inserted by the FPS Pass access PSP global variables which were newly allocated in $\ProgFPS$.
    
    Let $I_i$ be the PSP load that directly preceded $I_j$ in $F$.
    If $I_j$ is in the prologue, then 
    $I_{j-3} \!=\! I_i$, $I_{j-2}$, $I_{j-1}$, and $I_j$ correspond to L\ref{line:fps:prolog:ld}--\ref{line:fps:prolog:st} in \Cref{fig:fps-example}.
    Applying the inductive hypothesis that $Q(i\!+\!1)$ holds, $R_{i+1}(\SP) \!=\! R_{j-2}(\SP) \!=\! \StackEnd{F} - m \cdot \FrameSize{F}$ for some $m \!\geq\! 0$.
    After frame allocation $I_{j-2}$, $R_{j-1}(\SP) \!=\! \StackEnd{F} - (m\!+\!1) \cdot \FrameSize{F}$.
    The bounds clip $I_{j-1}$ ensures $\StackBase{F} \leq R_j(\SP) \!=\! \StackEnd{F} - (m + d) \cdot \FrameSize{F}$ for a $d \in \{0,1\}$.
    Thus, $Q(j)$ holds.
    Since the PSP load $I_k$ read from the PSP store $I_j$, $Q(k+1)$ holds as well, and we are done.
    If $I_j$ is in the epilogue (L\ref{line:fps:epilog:st}), the argument is analogous.
\end{proof}

\subsection{Additional Instruction Semantics}
\label{app:semantics}
\noindent \semantics{
    &\frac{
        \begin{aligned}
            &\textsc{Arithmetic Op} \quad I \mapsto \Op_o \, r, \Vec{r_s} \quad
            \colornt v_l = o_\LabelSet(R(r_{s,1}), \ldots, R(r_{s,n})) \\
            &\colornt R'_\TokNT = R[\IncPC;\,r \update v_l] \quad
            \colornt C'_\TokNT = C[R \update R'_\TokNT]
        \end{aligned}
    }{
        \colornt \TransitionNT(C, P) = (C'_\TokNT, \ObsNone) \qquad
        \colort \bm{\TransitionT(C, P)} = \emptyset
    }
} \\
\noindent \semantics{
    &\frac{
        \begin{aligned}
            &\textsc{Uncond. Jump} \quad I \mapsto \Jmp \, d \quad
            \colornt R'_\TokNT = R[\PC \gets R(\PC) + 1 + d_\Pub]
        \end{aligned}
    }{
        \colornt \TransitionNT(C, P) = (C[R \update R'_\TokNT], \ObsNone) \qquad
        \colort \bm{\TransitionT(C, P)} = \emptyset
    }
}
\subsection{Extending \ourmodel{} and \tool{}}
\label{sec:extension}

\subsubsection{Extending the Execution Model}
\tool{} reasons directly about taint primitives.
Thus, when a new speculation primitive is discovered, one must determine if/how it changes the set of \textit{taint primitives}.

\paragraph{\bf Adding taint primitives}
\label{app:model:straight}
Consider straight-line speculation (SLS)~\cite{arm:straight-line-speculation}, which adds the following \textit{transient} transition to $\MyCall/\Ret/\Jmp$: $(C[\PC\texttt{++}], O) \in \TransitionT(C,P)$, where $O$ is the exposed observation.
One can prove that these transitions add a fifth taint primitive to \ourct{} programs (\S\ref{sec:ourprimitives}): \textbf{LINE}: a transient $\MyCall/\Ret/\Jmp$ that falls through to the next linear address (any register).
\tool{} can be extended to mitigate LINE with a new pass that inserts an $\Lfence$ following each $\Jmp$.
$\MyCall$s/$\Ret$s require no additional mitigations due to TYP.\ref{item:tt:arg} (\S\ref{sec:security-typeable}) and Register Cleaning (\S\ref{sec:tool:arg-pass}).
\Cref{fig:benchmark-results} evaluates this extension, \toolsls{}.

\paragraph{\bf Replacing taint primitives}
\label{app:model:psf}
Consider \psf{} (\S\ref{sec:data-flow-prediction}), which redefines the \textit{transient} transitions of $\Load$ as follows:
\\
\semantics{
    &\frac{
        \begin{aligned}
            &\textsc{Load (with PSF)} \quad \Load\,[r_a+d],r_v \qquad
            \colorspec A_l = R(r_a) + d_\Pub \\
            &\colort \mathbf{v_\TokT} = \{ D({\colorspec A})\} \cup \{v_{l'} \mid \exists A', (A', v_{l'}) \in S \} \setminus {\colornt v_\text{seq}} \quad
            \colorspec O = \ObsLoad{A_l}
        \end{aligned}
    }{
        \begin{aligned}
            \colort \bm{\TransitionT(C, P)} = \left\{
                (C[R \gets R[\IncPC; r \gets v_t] ;\, T \gets \T], {\colorspec O}) \mid v_t \in \mathbf{v_\TokT}
            \right\}
        \end{aligned}
    }
} \\
One can prove that these new transitions replace NCAL, NCAS, and STKL (\S\ref{sec:properties:ourprimitives}) with one new taint primitive:
\textbf{LOAD}: a transient $\Load$ (output register).
We evaluate an extension of \tool{} which mitigates Spectre-\psf{} in software using the prior rule in \S\ref{sec:case-study} and \S\ref{sec:results}.

\subsubsection{Extending the Leakage Model}
On many processors, $\mathtt{DIV}$ is a transmitter.
To model this, we can add ``$\mathtt{div}~a, b$'' to \ourmodel{}'s observation set $\ObsSet$ (\S\ref{sec:leakage-model}) and define the sequential transition for $\mathtt{DIV}\,r_a,r_b$ to expose the observation ``$\mathtt{div}~R(r_a), R(r_b)$.''
Adding new observations does not change \tool{}'s correctness proof.

\subsection{\tool{}'s Heuristic Directed Multicut}
\label{app:sec:multicut}
Computing the minimum directed multicut (i.e., the best global cut for multiple source-sink pairs) of a directed graph is NP-hard~\cite{minimum-directed-multicut}.
We approximate the optimal multicut by iteratively computing the optimal cuts for each source-sink pair individually (computed using Ford-Fulkerson~\cite{ford-fulkerson}) while \textit{fixing} the cuts of other source-sink pairs until convergence or loop.
After that, we validate the cutset by verifying no source can reach its corresponding sink.


\subsection{\ourct{}-Unsafe Compiler Optimizations}
\label{app:unsafe-compiler}
When compiling with \ourcc{}, we disable these optimizations which violate \ourct{} (\S\ref{sec:propreties:ourct}). 
\texttt{-mllvm -no\allowbreak{}-stack\allowbreak{}-slot\allowbreak{}-sharing}:
    \textit{Stack slot sharing} may assign two stack variables of different security types to the same frame index if their lifetimes do not overlap, violating \ourct{}'s stack typing requirements (\S\ref{sec:security-typeable}).
%
\texttt{-mno\allowbreak{}-red\allowbreak{}-zone}:
    A leaf procedure can use a limited amount of memory directly below the stack pointer (the ``red zone'') for its stack frame without needing to allocate it, violating \RuleWF{}.\ref{item:wf:stack} (\S\ref{sec:well-formed}).
\texttt{-mllvm -no\allowbreak{}-argument\allowbreak{}-promotion}:
    LLVM may promote (possibly secret) pass-by-reference arguments to pass-by-value during interprocedural optimization, violating \RuleTT{}.\ref{item:tt:arg} (\S\ref{sec:security-typeable}).
\texttt{-fno\allowbreak{}-jump\allowbreak{}-tables}:
    \ourct{} procedures contain exactly one $\Endcall$, marking the procedure entrypoint (\S\ref{sec:properties:procedures});
    jump tables require inserting $\Endcall$s elsewhere.

\subsection{\ourcc{} Implementation Details}
\label{app:implementation-details}
We implement Fence Insertion as a post-optimization IR pass,
Function-Private Stacks as a post-register-allocation machine IR (MIR) pass that runs during frame lowering, 
and Register Cleaning as a post-register-allocation MIR pass that runs after call lowering.

It is safe to implement these passes at the identified points in the LLVM pipeline, given that LLVM upholds the following:
(1) no NCA loads/stores are inserted when lowering IR to MIR or machine code (MC); and
(2) MIR/MC optimizations are not allowed to reorder loads/stores past fences.
The LLVM documentation~\cite{llvm-code-generator} and our analysis of output machine code corroborate these assumptions.



\subsection{Intel CET-IBT Implementations}
\label{app:cet}
Intel processors implement two variants of speculative semantics for CET-IBT: \textit{strong} and \textit{weak} (our terminology).

\textbf{Strong CET-IBT} completely blocks speculation following a missing \texttt{ENDBRANCH} instruction.
Like prior work~\cite{Narayan:swivel}, \ourmodel{} and \tool{} assume a processor with strong CET-IBT.
Intel processors implementing strong CET-IBT include Alder Lake N and Arizona Beach.
Intel has shared with us that the long-term direction of CET-IBT implementations is towards these strong speculative semantics.

\textbf{Weak CET-IBT} allows some fixed, nonzero number of instructions to speculatively execute following a missing \texttt{ENDBRANCH}.
Weak CET-IBT implementations are characterized by the number of instructions, and specifically loads, which may speculatively execute following a missing \texttt{ENDBRANCH}.
\textit{Older} weak CET-IBT implementations---found in Tiger Lake---allow up to 7 speculative instructions containing up to 5 speculative loads.
\textit{Newer} weak CET-IBT implementations---found in Alder Lake (our workstation), Sapphire Rapids, Raptor Lake, and some future processors---only allow up to 2 speculative instructions containing up to 1 speculative load.
Our preliminary investigation shows that it is possible to restore \tool{}'s security guarantees on older and newer CET-IBT implementations with moderate and negligible runtime cost, respectively.
\newpage

\section{Meta-Review}

\subsection{Summary}
The paper proposes \tool{}, a set of compiler passes aimed at hardening cryptographic code against Spectre-type attack. The paper introduces a stronger notion of constant time programming, which code must adhere to, and relies on hardware support for preventing leaks of secret data from transient execution.

\subsection{Scientific Contributions}
\begin{itemize}
    \item Creates a new tool to enable future science.
    \item Addresses a long-known issue.
    \item Provides a valuable step forward in an established field.
\end{itemize}

\subsection{Reasons for Acceptance}
\begin{enumerate}
    \item The paper addresses a long-standing problem. It proposes a new approach for a solution. The solution is comprehensive, proven secure, and only incurs a modest performance overhead.
\end{enumerate}

\subsection{Noteworthy Concerns}
\begin{enumerate}
    \item The proposed execution model that underlies the security proof assumes mostly in-order execution.
    There is a gap in semantics between the model and out-of-order execution. 
    \item \tool{} does not handle declassification.
    \item \tool{} relies on partially documented processor features and LLVM properties. 
    \tool{} is fragile due to possible future changes that break its assumptions.
\end{enumerate}

\section{Response to the Meta-Review}

\paragraph{\bf Response to Concern 1}
While ASP has an in-order semantics like many prior speculative processor models~\cite{guarnieri:spectector, speculative-execution-combinations, exorcisingspectrev1, shivakumar:spectredeclassified}, it is intended to capture \textit{all transient control- and data-flows} that may be exhibited by a program running on a speculative out-of-order processor.
Specifically, we believe our transient load and store semantics capture all Spectre-relevant memory instruction reorderings that can occur on a speculative out-of-order processor.

\paragraph{\bf Response to Concern 2}
As presented, \tool{}, like prior work \cite{vassena:blade}, does not handle secret declassification in its formalization or proof. We expect that secret declassification can be achieved without compromising \tool{}’ guarantees by calling an auxiliary function that (1) copies a secret input buffer to a public output buffer and then (2) executes a speculation fence. Proving this would require changes to \ourct{} and \ourmodel{}, so we leave this to future work.

\paragraph{\bf Response to Concern 3}
Indeed, \tool{} relies on processor and compiler properties that are not \textit{guaranteed} to hold in the future.
However, we believe academic work like \tool{} is critical towards incentivizing hardware vendors and compiler writers to design processors and compilers that can support efficient Spectre mitigations.

\end{document}